\documentclass[11pt]{article}
\usepackage[utf8]{inputenc}
\usepackage{setspace}
\usepackage{comment}

\usepackage[left=1in,right=1in,top=1in,bottom=1in,bindingoffset=0cm]{geometry}
\usepackage{hyperref}
\hypersetup{
    colorlinks=true,
    linkcolor=blue,
    filecolor=magenta,      
    urlcolor=red,
    citecolor=gray,
    linktoc=page,
}
\usepackage{graphicx,wrapfig,lipsum,booktabs}
\graphicspath{ {images/} }
\usepackage{amsmath}
\usepackage{bm}
\usepackage{amsthm}
\usepackage{tikz}
\usepackage{pgfplots}
\pgfplotsset{compat = newest, width = 10cm}
\usepgfplotslibrary{fillbetween}
\usepackage{caption}
\usepackage{amssymb}
\usepackage{mathtools}

\usepackage{bbm}
\usepackage{enumitem}
\allowdisplaybreaks
\usepackage{breakurl}
\usepackage[backend=biber, style=alphabetic, sorting=nyt, maxnames=100,backref=true, firstinits=true, sortcites = true]{biblatex}
\addbibresource{hyperOGP.bib}
\DeclareFieldFormat[unpublished]{howpublished}{#1, }

\def\ER{Erd\H{o}s--R\'{e}nyi }

\def\E{\mathbb{E}}

\def\bin{{\rm Bin}}
\def\poiss{{\rm Poiss}}

\def\N{\mathbb{N}}

\def\E{\mathbb{E}}
\def\R{\mathbb{R}}
\def\P{\mathbb{P}}

\def\eps{\varepsilon}

\def\1{\mathbf{1}}

\def\tce{t_c + \eps}
\def\tce2{t_c + \frac{\eps}{2}}
\def\PGWLT{\mathbb{T}_{d}^{GW}}
\def\Hrnp{\mathcal{H}_{r}(n,p)}

\def\Hrrnp{\mathcal{H}(r,n,p)}

\def\Var{\mathsf{Var}}

\newtheorem{Theorem}{Theorem}[section]
\newtheorem{Proposition}[Theorem]{Proposition}
\newtheorem{Claim}[Theorem]{Claim}
\newtheorem{Corollary}[Theorem]{Corollary}
\newtheorem{lemma}[Theorem]{Lemma}
\newtheorem{Conjecture}{Conjecture}
\newtheorem*{Theorem*}{Theorem}

\newtheorem{Definition}[Theorem]{Definition}

\newcommand{\boldgamma}{\bm{\gamma}}
\newcommand{\set}[1]{\{#1\}}

\title{The Low-Degree Hardness of Finding Large Independent Sets in Sparse Random Hypergraphs}
\author{Abhishek Dhawan\footnote{Department of Mathematics, University of Illinois Urbana-Champaign; supported in part by the Georgia Tech ARC-ACO Fellowship, NSF grant DMS-2053333 (PI: Cheng Mao), NSF CAREER grant DMS-2239187 (PI: Anton Bernshteyn), and the NSF RTG grant DMS-1937241. Email: adhawan2@illinois.edu}
\and 
Yuzhou Wang\footnote{School of Mathematics, Georgia Institute of Technology; supported in part by NSF grant CCF-2309708 (PI: Will Perkins). Email: ywang3694@gatech.edu }}
\date{}

\begin{document}

\maketitle

\begin{abstract}
    We study the algorithmic task of finding large independent sets in sparse Erd\H{o}s--R\'enyi random $r$-uniform hypergraphs on $n$ vertices having average degree $d$. Krivelevich and Sudakov showed that the maximum independent set has density $\left(\frac{r}{r-1}\cdot\frac{\log d}{d}\right)^{1/(r-1)}$ in the double limit $n \to \infty$ followed by $d \to \infty$. We show that the class of \textit{low-degree polynomial algorithms} can find independent sets of density $\left(\frac{1}{r-1}\cdot\frac{\log d}{d}\right)^{1/(r-1)}$ but no larger. This extends and generalizes earlier results of Gamarnik and Sudan, Rahman and Vir\'ag, and Wein on graphs, and answers a question of Bal and Bennett. We conjecture that this statistical--computational gap of a multiplicative 
    of $r^{1/(r-1)}$ indeed holds for this problem.

    Additionally, we explore the universality of this gap by examining $r$-partite hypergraphs. A hypergraph $H = (V, E)$ is $r$-partite if there is a partition $V = V_1\cup \cdots \cup V_r$ such that each edge contains exactly one vertex from each set $V_i$. We consider the problem of finding large \textit{balanced independent sets} (independent sets containing the same number of vertices in each partition) in random $r$-uniform $r$-partite hypergraphs with $n$ vertices from each partition and average degree $d$. We prove that the maximum balanced independent set has density $\left(\frac{r}{r-1}\cdot\frac{\log d}{d}\right)^{1/(r-1)}$ asymptotically, matching that of independent sets in ordinary hypergraphs. Furthermore, we prove an analogous computational threshold of $\left(\frac{1}{r-1}\cdot\frac{\log d}{d}\right)^{1/(r-1)}$ for low-degree polynomial algorithms, answering a question of the first author. We prove more general statements regarding \textit{$\mathbf{\gamma}$-balanced independent sets} (where we specify the proportion of vertices of the independent set contained within each partition). Our results recover and generalize recent work of Perkins and the second author on bipartite graphs.

    Our results not only pin down the precise threshold for low-degree algorithms in two different settings, but also suggest that these gaps persist for larger uniformities as well as across many models. A somewhat surprising aspect of the gap for balanced independent sets is that the algorithm achieving the lower bound is a simple degree-$1$ polynomial.   
\end{abstract}

\sloppy

\newpage
\tableofcontents

\newpage
\section{Introduction}\label{section: intro}

A \textit{hypergraph} $H$ is an ordered pair $(V, E)$ where $E$ is a subset of the power set of $V$.
The elements in $V$ and $E$ are referred to as the \textit{vertices} and \textit{edges} of $H$, respectively.
If every edge $e \in E$ is a set of size $r$, then we say $H$ is \textit{$r$-uniform} (for $r = 2$, $H$ is a graph).
An independent set $I$ of $H$ is a subset of $V$ such that for each $e \in E$, we have $e \not\subseteq I$, i.e., the hypergraph induced by the set $I$ contains no edges.
We say an independent set $I \subseteq V$ has \textit{density} $|I|/|V|$.

\subsubsection*{Independent Sets in Graphs and Hypergraphs}
The problem of finding a maximum size independent set in hypergraphs is of fundamental interest, both in practical and theoretical aspects.
It arises in various applications in data mining~\cite{gao2007minimum}, image processing~\cite{gu2021towards}, database design~\cite{brown1998determining}, and parallel computing~\cite{luby1993removing}, to name a few.
It is well-known that finding the maximum independent set in graphs is an NP-hard problem \cite{karp2010reducibility}.
In fact, it is hard to approximate the maximum independent set in a graph to within $n^{1-\eps}$ for any $\eps > 0$ \cite{hastad1996clique, khot2001improved, zuckerman2006linear}.
As hypergraphs are a generalization of graphs, these hardness results clearly extend to this setting.
In order to better understand the hardness of maximum independent set, a natural line of research is to analyze the structure of the ``hard'' instances.
We consider a somewhat related question: is maximum independent set tractable on ``typical'' instances?
In particular, we consider the $r$-uniform \ER hypergraphs $\Hrnp$, where each potential edge in an $n$-vertex $r$-uniform hypergraph is included independently with probability $p$ (see Definition~\ref{definition: models} for a formal description).

The study of finding large independent sets in random graphs dates back to Matula \cite{matula1976largest} who showed that an \ER graph does not contain an independent set of size at least $(2 + o(1))\log_{1/(1-p)}n$ with high probability for a large range of $p$.
A simple algorithm can construct an independent set of size $\sim\log_{1/(1-p)}n$, however, finding a larger one seems to be hard.
In 1976, Karp conjectured that no polynomial time algorithm can compute an independent set of size at least $(1+ \eps)\log_{2}n$ for any $\eps > 0$ with high probability for $p = 1/2$ \cite{karp1976probabilistic}.
This conjecture has spurred a large research effort in order to determine the so-called \textit{statistical--computational gap} for a number of ``hard'' problems on random graphs (see for example \cite{jerrum1992large, deshpande2015improved, gamarnik2020low, wein2022optimal, bresler2022algorithmic}).

In the \textit{sparse} regime, i.e., $p = d/n$, where $d$ is a constant, Frieze showed that with high probability as $n \to \infty$ the maximum independent set of $\mathcal{H}_2(n, p)$ has density $(2\pm o_d(1))\log d/d$ for sufficiently large constant $d$ \cite{frieze1990independence}.
It is conjectured that no polynomial time algorithm can find an independent set of density $(1+\eps)\log d/d$ with high probability for any $\varepsilon>0$, i.e., there is a statistical--computational gap of a multiplicative factor of $2$.
Proving this conjecture is equivalent to proving a statement stronger than $P = NP$ and so researchers have focused on providing evidence of intractability through a variety of methods.
One such method is to prove intractability for restricted classes of algorithms (see \S\ref{subsection: prior work} for examples). Gamarnik and Sudan~\cite{gamarnik2014limits} first proved a gap of a multiplicative factor of $2\sqrt{2}(1 + \sqrt{2})$ for \textit{local algorithms} and pioneered the framework of the \textit{Overlap Gap Property}, which subsequently led to proving the conjectured gap of a multiplicative factor of $2$ for both local algorithms~\cite{rahman2017local} and low-degree algorithms~\cite{wein2022optimal}.
In our work we focus on \textit{low-degree algorithms} (see \S\ref{subsection: low-deg} for an overview of the framework).

Moving beyond graphs to hypergraphs, we are interested in the \textit{sparse} regime, i.e., $p = d/\binom{n-1}{r-1}$, where $d$ is a constant.
Note that for our choice of $p$, the expected degree of a vertex is precisely $d$. Krivelevich and Sudakov showed that with high probability the maximum independent set of $\Hrnp$ has density $(1\pm o_d(1))\left(\frac{r}{r-1}\cdot\frac{\log d}{d}\right)^{1/(r-1)}$ \cite{krivelevich1998chromatic}, which matches Frieze's result when $r = 2$.
It is natural to ask whether this phenomenon of a statistical--computational gap for finding large independent sets extends to the hypergraph setting.
Additionally, determining the size of this gap is of interest as well.
To the best of our knowledge, this paper is the first to consider the computational threshold for independent sets in random hypergraphs.

\begin{Theorem*}[Informal version of Theorems~\ref{theorem: low-deg hypergraph achievability} and \ref{theorem: low-deg hypergraph impossibility}]
    Let $\eps > 0$ and let $r \geq 2$.
    In the double limit $n \to \infty$ followed by $d \to \infty$, the following hold for $p = d/\binom{n-1}{r-1}$.
    \begin{itemize}
        \item There is a low-degree algorithm that with high probability computes an independent set of density $(1-\eps)\left(\frac{1}{r-1}\cdot\frac{\log d}{d}\right)^{1/(r-1)}$ in $\Hrnp$.
        \item There is no low-degree algorithm that with suitably high probability finds an independent set of density $(1+\eps)\left(\frac{1}{r-1}\cdot\frac{\log d}{d}\right)^{1/(r-1)}$ in $\Hrnp$.
    \end{itemize}
\end{Theorem*}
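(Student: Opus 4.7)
The theorem has two parts --- a low-degree algorithm matching the claimed density from below, and a matching hardness at $(1+\eps)$ times the same threshold from above --- and I would treat each via the now-standard graph-case machinery (Gamarnik--Sudan, Rahman--Vir\'ag, Wein), carefully lifted to $r$-uniform hypergraphs.

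\textbf{Achievability.} The candidate low-degree algorithm is a truncated local greedy. Assign i.i.d.\ uniform weights $X_v \in [0,1]$ to the vertices, process them in increasing order of $X_v$, and add $v$ to $I$ iff no hyperedge $e \ni v$ already has all of its remaining $r-1$ endpoints in $I$. The output is trivially an independent set. To pin down its density I would pass to the local limit of $\Hrnp$, a Galton--Watson branching process of $r$-uniform hyperedges with $\mathrm{Poisson}(d)$ offspring, for which the limiting density $g(x)$ of $I$ among vertices of weight at most $x$ satisfies
\[
g'(x) \;=\; \exp\!\bigl(-d\, g(x)^{r-1}\bigr), \qquad g(0)=0.
\]
A Laplace-type asymptotic analysis of this separable ODE as $d\to\infty$ yields $g(1) = \bigl(\log d/((r-1)d)\bigr)^{1/(r-1)}(1+o_d(1))$. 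To realize the greedy as a bounded-degree polynomial I would truncate the rule to constant depth $D = D(\eps)$, so that the decision at $v$ depends only on its depth-$D$ hypergraph neighborhood and the associated weights, then smoothly round the resulting indicator by a polynomial of degree $O(D)$ in the hyperedge indicators, following Wein's polynomial rounding argument. Concentration of $|I|/n$ around its expectation follows from standard vertex-exposure/Efron--Stein bounds.

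\textbf{Impossibility.} For hardness I would establish a branching Overlap Gap Property (OGP) at density $(1+\eps)((r-1)^{-1}\log d/d)^{1/(r-1)}$, adapting Wein's graph approach. Couple a family $\{H^{(u)}\}_{u\in \mathcal T}$ of $r$-uniform random hypergraphs indexed by a balanced $T$-ary tree $\mathcal T$, designed so that each $H^{(u)}$ is marginally distributed as $\Hrnp$ while hypergraphs at tree-distance $D$ have pairwise correlation $\rho^D$. The heart is a first-moment computation on the tree: let $Z_{\mathcal T,\boldsymbol\eta}$ count families $(I^{(u)})_{u \in \mathcal T}$ of size-$k$ independent sets whose pairwise overlap profile lies in a prescribed \emph{forbidden} window; using multinomial entropies and inclusion--exclusion to count hyperedges forbidden by at least one $I^{(u)}$, one optimizes over the profile and shows that for $k/n = (1+\eps)((r-1)^{-1}\log d/d)^{1/(r-1)}$ and $T \ge T_0(\eps,r)$, one has $\E[Z_{\mathcal T,\boldsymbol\eta}] = e^{-\Omega(n)}$. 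A Markov union bound over the admissible (polynomially many) profiles delivers the branching OGP. The reduction is then the standard Gamarnik--Jagannath--Wein argument: any degree-$D$ polynomial succeeding on $\Hrnp$ with probability $\ge 1/2$ is sufficiently $L^2$-stable along the tree interpolation that its outputs $(I^{(u)})_{u \in \mathcal T}$ would, by a continuity argument, have overlap profile that necessarily enters the forbidden window, contradicting the OGP.

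\textbf{Main obstacle.} The hardest step is the multi-overlap first-moment computation in the $r$-uniform setting. Unlike the graph case, where the forbidden-edge count is a quadratic form in the pairwise overlaps, $r$-uniformity makes it a degree-$r$ symmetric polynomial
\[
N \;=\; \sum_{\emptyset \neq S \subseteq \mathcal T} (-1)^{|S|+1} \binom{\bigl|\bigcap_{u \in S} I^{(u)}\bigr|}{r},
\]
and one must verify that the associated variational problem --- maximizing the multinomial entropy minus $p\,\E[N]$ over the overlap profile --- has a strictly negative maximum in the forbidden window exactly at density $(1+\eps)((r-1)^{-1}\log d/d)^{1/(r-1)}$. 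This is the step that pins the constant $(r-1)^{-1}$ inside the threshold. A secondary technical point is designing the tree coupling so that it perturbs one hyperedge slot at a time while preserving the marginal $\Hrnp$ law and yielding the correct $L^2$-stability estimate for degree-$D$ polynomials of hyperedge indicators.
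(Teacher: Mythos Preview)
Your approach is broadly correct but differs from the paper's in both halves, and the differences are instructive.

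\textbf{Achievability.} Your direct ODE analysis of the increasing-weight greedy on the Poisson Galton--Watson hypertree is valid and gives the right constant; this is arguably cleaner than what the paper does. The paper instead quotes Nie--Verstra\"ete's analysis of the random greedy on the \emph{regular} hypertree $T_\Delta^r$, then transfers to the Galton--Watson tree $\mathbb{T}_d^{GW}$ by a coupling (truncate high-degree vertices, pad low-degree ones with fresh $(\Delta-1)$-ary subtrees, then restrict back), and finally converts the resulting local rule to a polynomial via an explicit M\"obius-type expansion over subhypergraphs of the depth-$s$ ball. Your route avoids the regular-tree detour at the cost of having to justify the recursion $g'(x)=\exp(-d\,g(x)^{r-1})$ on the Galton--Watson tree directly; the paper's route trades that for the transfer lemma but gets the density bound as a black box.

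\textbf{Impossibility.} Here the paper and your proposal diverge more substantially, and there is a confusion worth flagging. You invoke ``Wein's graph approach'' but then describe a \emph{branching} OGP over a tree-indexed family $\{H^{(u)}\}_{u\in\mathcal T}$, and identify the full inclusion--exclusion $\sum_{\emptyset\neq S}(-1)^{|S|+1}\binom{|\bigcap_{u\in S} I^{(u)}|}{r}$ and its associated degree-$r$ variational problem as the main obstacle. But this is not Wein's technique, and not what the paper does. Both use a \emph{linear} interpolation path $A^{(0)},\dots,A^{(T)}$ (resample one hyperedge slot per step) and a \emph{sequential} ensemble $S_1,\dots,S_K$, where $S_k$ is the first output along the path whose novelty $|S_k\setminus\bigcup_{j<k}S_j|$ enters a narrow window. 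The first-moment edge count then needs only the elementary lower bound
\[
|E|\;\ge\;\binom{|S_1|}{r}\;+\;\sum_{k=2}^K\left[\binom{|S_k|}{r}-\binom{\bigl|S_k\cap\textstyle\bigcup_{j<k}S_j\bigr|}{r}\right],
\]
which decouples across $k$ and reduces the optimization to a one-variable calculus problem per term; the constant $(r-1)^{-1/(r-1)}$ falls out of $\sup_a\bigl(\tfrac{a}{r-1}-\tfrac{a^r}{r}\bigr)=\tfrac{1}{r}(r-1)^{-1/(r-1)}$. Your inclusion--exclusion obstacle simply does not arise in this framework. The branching OGP you sketch could presumably be pushed through, but it is strictly harder than necessary here, and the ``hardest step'' you flag is an artifact of choosing that more elaborate structure rather than the linear one Wein (and this paper) actually use.
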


As local algorithms are a subclass of low-degree algorithms, our results answer a question of Bal and Bennett \cite[\S6.2]{bal2023larger}.
In light of this result and the aforementioned statistical--computational gap conjecture of independent sets in random graphs, we make the following analogous conjecture for hypergraphs.

\begin{Conjecture}\label{conjecture: hypergraph SCG}
    For any fixed $\eps > 0$ and integer $r\geq 2$, there are $d,n \in \N$ sufficiently large such that there is no polynomial-time algorithm that finds an independent set in $\mathcal{H}_r\left(n, d/\binom{n-1}{r-1}\right)$ of density at least $(1+\eps)\left(\frac{1}{r-1}\cdot\frac{\log d}{d}\right)^{1/(r-1)}$ with high probability.
\end{Conjecture}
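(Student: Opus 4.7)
The plan is to establish Conjecture \ref{conjecture: hypergraph SCG} conditionally, via a reduction from a standard average-case hard problem; an unconditional proof is beyond current techniques, since it would entail average-case hardness at least as strong as $P \neq NP$ for a specific natural problem. Concretely, I would attempt a reduction from a hypergraph analogue of the Planted Dense Subgraph (PDS) problem: given $\mathcal{H}_r(n, p)$ either with no planting or with a planted denser-than-expected subhypergraph on $k \ll n$ vertices, no polynomial-time algorithm achieves non-vanishing advantage in detection. This is the natural $r$-uniform generalization of the average-case conjectures that underpin much of the computational hardness literature for random graph problems.

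The reduction itself would proceed in three stages. First, given a PDS instance on $n$ vertices, pass to the sparse regime by combining edge subsampling and vertex duplication, producing an $n'$-vertex hypergraph with expected degree exactly $d$, where $n'/n$ is chosen so that the output distribution converges in total variation to $\mathcal{H}_r(n', d/\binom{n'-1}{r-1})$ on the non-planted part. Second, complement the planted structure: a planted dense set of size $k$ in PDS corresponds, after the appropriate transformation, to a sparse ``hole'' whose induced subhypergraph contains an independent set of density strictly above $\left(\frac{1}{r-1}\cdot\frac{\log d}{d}\right)^{1/(r-1)}$ by a factor $1+\eps$, calibrated so that this excess density exceeds the Krivelevich--Sudakov density of the null distribution $\left(\frac{r}{r-1}\cdot\frac{\log d}{d}\right)^{1/(r-1)}$ only within the planted region. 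Third, argue that any polynomial-time algorithm $\mathcal{A}$ locating an independent set of density $(1+\eps)\left(\frac{1}{r-1}\cdot\frac{\log d}{d}\right)^{1/(r-1)}$ with high probability on the output must, by an averaging argument using the upper bound on the null instance's maximum independent set, concentrate on the planted region with non-vanishing probability. This yields a polynomial-time detector for PDS, contradicting the assumption.

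The principal obstacle is engineering the reduction so that three constraints hold simultaneously: (i) the output is distributed as $\mathcal{H}_r(n', d/\binom{n'-1}{r-1})$ up to total variation $o(1)$; (ii) the planted region survives as producing an independent set of the precise density $(1+\eps)\left(\frac{1}{r-1}\cdot\frac{\log d}{d}\right)^{1/(r-1)}$, neither asymptotically larger (which would force using a weaker PDS assumption at a different threshold) nor smaller (which would not violate the Krivelevich--Sudakov bound); and (iii) no ``side channel'' reveals the planting other than through the oversized independent set. Quiet-planting reductions of this kind are notoriously delicate even in the graph setting, and the hypergraph PDS conjecture is much less studied than its graph counterpart, so a nontrivial part of the work would be to isolate a plausible sufficient conjecture at the matching threshold $k = \Theta\!\left(n/d^{1/(r-1)}\right)$. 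Should this direct reduction fail, a complementary strategy is to reduce the case $r \geq 3$ to $r=2$ via an edge-lifting gadget that replaces each edge of a sparse random graph with a random $r$-uniform gadget on disjoint blocks of vertices, thereby reducing Conjecture \ref{conjecture: hypergraph SCG} to the classical Karp--Frieze conjecture for ordinary random graphs: itself open, but a single, well-studied point of focus rather than a new conjecture for each $r$.
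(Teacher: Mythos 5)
The statement you are addressing is posed in the paper as a \emph{conjecture}, and the paper does not prove it. The authors explicitly observe that proving a statement of this form would establish something stronger than $P \neq NP$, and so they instead give evidence for it by proving intractability for the restricted class of low-degree polynomial algorithms (Theorems~\ref{theorem: low-deg hypergraph achievability} and~\ref{theorem: low-deg hypergraph impossibility}, proved via an ensemble Overlap Gap Property argument, Proposition~\ref{Prop:OGP} together with Proposition~\ref{proposition: stability}). There is therefore no paper proof of the conjecture to compare against. Your proposal is itself explicitly conditional, reducing Conjecture~\ref{conjecture: hypergraph SCG} either to an (unformulated, unproven) hypergraph Planted Dense Subgraph assumption or to the open Karp--Frieze conjecture for $r=2$, so even a fully worked-out version of it would not settle the conjecture.

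Beyond that, the reduction sketch as stated has a structural flaw that is worth naming. The target density $(1+\eps)\left(\tfrac{1}{r-1}\cdot\tfrac{\log d}{d}\right)^{1/(r-1)}$ lies strictly \emph{below} the statistical threshold $\left(\tfrac{r}{r-1}\cdot\tfrac{\log d}{d}\right)^{1/(r-1)}$ established by Krivelevich and Sudakov (Theorem~\ref{theorem: hypergraph independence stat thresh}). Your stage three assumes that a polynomial-time algorithm returning an independent set of this density must, by an averaging argument against the null maximum independent set, ``concentrate on the planted region with non-vanishing probability.'' But under the null distribution $\Hrnp$ an independent set of the target density already exists with high probability, entirely within the unplanted hypergraph, so the algorithm's output carries no information about a planted region; your assertion that the target density exceeds Krivelevich--Sudakov ``only within the planted region'' is reversed (it is \emph{below} Krivelevich--Sudakov everywhere). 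This is exactly what makes the conjecture a below-statistical-threshold search-hardness statement with no evident detection counterpart, and why search-to-decision reductions of the planted-clique flavor do not apply; the standard and essentially only known approach in this regime is the restricted-model (local/AMP/low-degree/OGP) lower bound route that the paper takes. The alternative gadget reduction from $r\geq 3$ to $r=2$ faces the separate calibration problem that the constant $(1/(r-1))^{1/(r-1)}$ depends on $r$ in a way the paper stresses cannot be inferred from the graph case, and you give no mechanism by which an edge-lifting gadget would produce a density threshold with that specific $r$-dependence.
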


\subsubsection*{Balanced Independent Sets in Multipartite Hypergraphs}

In recent work, Perkins and the second author explored the universality of the statistical--computational gap of independent sets in graphs \cite{perkins2024hardness}.
They considered the following question: in what other random graph models (and for what kinds of independent sets) do we see such a gap?
Specifically, they studied independent sets of the \ER bipartite graph with $n$ vertices in each partition.
When considering bipartite graphs, there is a trivial independent set of density at least $1/2$, namely, the larger partition.
Furthermore, there exists a max-flow based polynomial time algorithm to find the maximum independent set in bipartite graphs.
As observed by Perkins and the second author, imposing global constraints on the independent set $I$ can introduce computational intractability.

When considering bipartite graphs, a natural global constraint to consider is \textit{balancedness}.
Let $G = (V_1\cup V_2,\, E)$ be a bipartite graph.
We say an independent set $I \subseteq V(G)$ is \textit{balanced} if $|I \cap V_1| = |I|/2$, i.e., $I$ contains an equal number of vertices from each partition.
There has been extensive research in determining the density of the maximum balanced independent set in both deterministic and random bipartite graphs \cite{favaron1993bipartite, axenovich2021bipartite, chakraborti2023extremal}.
Perkins and the second author showed that the \ER bipartite graph with $n$ vertices in each partition has a balanced independent set with density $(2 \pm o_d(1))\log d/d$ with high probability.
They also showed that the low-degree computational threshold of the density of the maximum balanced independent set is $\log d/ d$.
In particular, balanced independent sets in bipartite graphs exhibit the same behavior as independent sets in graphs.
We remark that Perkins and the second author prove a more general statement for so-called \textit{$\boldgamma$-balanced independent sets} (see Definition~\ref{definition: balanced independent sets}).

We extend these results to \textit{$r$-uniform $r$-partite hypergraphs}.
We say $H = (V, E)$ is \textit{$r$-partite} if there exists a partition $V = V_1\cup \cdots \cup V_r$ such that each edge $e \in E$ contains precisely one vertex from each set $V_i$.
Furthermore, if $|V_1| = \cdots = |V_r|$, we call such hypergraphs \textit{balanced}.
Multipartite hypergraphs find a wide array of applications in satisfiability problems, Steiner triple systems, and particle tracking in physics, to name a few.
Note that the union of any $r-1$ partitions forms an independent set in such a hypergraph, and so the problem of finding an independent set of density at least $1 - 1/r$ is trivial.
As this class of hypergraphs is a natural generalization of bipartite graphs, we may extend the definition of balanced independent sets to this setting.
In particular, we say an independent set $I$ in an $r$-uniform $r$-partite hypergraph $H = (V_1\cup \cdots \cup V_r,\, E)$ is \textit{balanced} if $|I \cap V_i| = |I|/r$ for each $1 \leq i \leq r$, i.e., $I$ contains an equal number of vertices from each partition.
Such a notion was introduced in recent work of the first author \cite{dhawan2023balanced}, who considered the size of the largest balanced independent set in deterministic $r$-partite hypergraphs.

We consider the \ER $r$-uniform $r$-partite balanced hypergraph $\Hrrnp$ with $n$ vertices in each partition (note the notational distinction between $\Hrrnp$ and $\Hrnp$).
The second main result of this paper concerns the statistical--computational gap of finding large balanced independent sets in $\Hrrnp$ for $p = d/n^{r-1}$. This parameter choice ensures the expected vertex degree equals precisely d.
We first prove a high probability bound on the asymptotic density of the largest balanced independent set in $\Hrrnp$.
Moreover, we describe a simple, efficient algorithm that can find a balanced independent set within a multiplicative factor of $r^{1/(r-1)}$ of the statistical threshold.
This begs the following question: can one do better or is there a universality of the statistical–computational gap for finding large independent sets in hypergraphs as well?
We provide evidence toward the latter.

\begin{Theorem*}[Informal version of Theorems~\ref{theorem: stat thresh balanced} and \ref{theorem: low-deg thresh balanced}]
    Let $\eps > 0$ and let $r \geq 2$.
    In the double limit $n \to \infty$ followed by $d \to \infty$, the following hold for $p = d/n^{r-1}$.
    \begin{itemize}
        \item The largest balanced independent set in $\Hrrnp$ has density $(1\pm\eps)\left(\frac{r}{r-1}\cdot\frac{\log d}{d}\right)^{1/(r-1)}$ with high probability.
        \item There is a low-degree algorithm that with high probability finds a balanced independent set of density $(1-\eps)\left(\frac{1}{r-1}\cdot\frac{\log d}{d}\right)^{1/(r-1)}$ in $\Hrrnp$.
        \item There is no low-degree algorithm that with suitably high probability finds a balanced independent set of density $(1+\eps)\left(\frac{1}{r-1}\cdot\frac{\log d}{d}\right)^{1/(r-1)}$ in $\Hrrnp$.
    \end{itemize}
\end{Theorem*}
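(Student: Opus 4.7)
The plan is to prove the three claims (statistical threshold, algorithmic achievability, and low-degree impossibility) by three complementary methods.

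For the statistical density $\rho_\star := \left(\frac{r}{r-1}\cdot\frac{\log d}{d}\right)^{1/(r-1)}$, I would apply first- and second-moment analysis to the count $X_\alpha$ of balanced independent sets of $\Hrrnp$ having exactly $\alpha n$ vertices in each partition. The upper bound follows from
\begin{equation*}
\E[X_\alpha] \le \binom{n}{\alpha n}^{r}(1-p)^{(\alpha n)^{r}},
\end{equation*}
whose logarithm behaves as $rn\alpha\log(e/\alpha) - d\alpha^{r}n + o(n)$. Since $\log(1/\alpha) \sim (r-1)^{-1}\log d$ at this scale, the critical $\alpha$ is exactly $\rho_\star$, and for $\alpha = (1+\eps)\rho_\star$ the exponent tends to $-\infty$, so Markov concludes. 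For the matching lower bound I would set $\alpha = (1-\eps)\rho_\star$ and compute $\E[X_\alpha^2]$ by summing over per-partition overlap profiles $(\beta_1,\ldots,\beta_r)$, showing that the dominant contribution comes from the ``independent overlap'' regime $\beta_i\approx \alpha^2$, so that $\E[X_\alpha^2]=(1+o(1))\E[X_\alpha]^2$. A standard edge-exposure martingale then promotes existence in expectation to a w.h.p.\ statement. This is a direct adaptation of Krivelevich--Sudakov to the $r$-partite balanced setting.

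For the algorithmic achievability at density $\alpha := (1-\eps)\left(\frac{\log d}{(r-1)d}\right)^{1/(r-1)}$ I propose the following explicit degree-$1$ algorithm, a natural generalization of the Perkins--Wang construction for bipartite graphs. Draw i.i.d.\ uniform external randomness $(Z_v)_{v \in V}$ on $[0,1]$. For $v \in V_1\cup\cdots\cup V_{r-1}$ include $v$ in $I'$ iff $Z_v\le\alpha$. For $u \in V_r$, include $u$ in $I'$ iff
\begin{equation*}
\sum_{e \ni u} Y_{e}\prod_{w \in e\setminus\{u\}}\1[Z_w \le \alpha] = 0,
\end{equation*}
which is a degree-$1$ polynomial in the edge indicators $(Y_e)$, with coefficients determined by the external randomness. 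A direct check shows $I'$ is independent: if an edge $e$ had all $r$ endpoints in $I'$, its first $r-1$ endpoints would satisfy $Z\le\alpha$, forcing the indicator term at the $V_r$ endpoint to equal $1$, violating its inclusion rule. A Poisson approximation gives $\P[u \in I']\to e^{-d\alpha^{r-1}}$, which for our $\alpha$ strictly exceeds $\alpha$ once $d$ is large; subsampling $I'\cap V_r$ uniformly at random to size $\alpha n$ (an operation of degree $0$ in $Y$) and applying a Chernoff-type bound on each partition yields a balanced independent set of density $\alpha$ with high probability.

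For the low-degree impossibility at $(1+\eps)\left(\frac{\log d}{(r-1)d}\right)^{1/(r-1)}$, I would invoke the Overlap Gap Property combined with the $L^2$-stability of low-degree polynomials, following Wein's treatment of graphs. The plan is: (i) set up a correlated ensemble $(H_t)_{t \in [0,1]}$ of random $r$-partite hypergraphs via independent Poisson resampling of each edge indicator, (ii) establish a branching or multi-OGP asserting that with high probability no $m$-tuple of balanced independent sets of density above the threshold realizes pairwise overlaps all in a prescribed ``gap'' range across the ensemble, and (iii) argue that a low-degree algorithm achieving density above the threshold would, by $L^2$-stability under the Poisson interpolation, produce such a forbidden $m$-tuple with positive probability, yielding a contradiction. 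The main technical obstacle will be step (ii): the $m$-wise first moment of configurations constrained to be balanced independent sets with prescribed partition-wise pairwise overlaps must be optimized sharply enough to pin down the exact threshold $\left(\frac{\log d}{(r-1)d}\right)^{1/(r-1)}$, rather than a weaker bound matching only the statistical density $\rho_\star$.
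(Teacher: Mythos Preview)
Your achievability and impossibility sketches are essentially the paper's approach. For achievability the paper fixes \emph{deterministic} sets $L_i\subseteq V_i$ of size $\alpha n$ for $i<r$ (your random thresholding is an equivalent variant) and sets $f_v(A)=1-\sum_e A_e$ over edges $e\in L_1\times\cdots\times L_{r-1}\times\{v\}$ for $v\in V_r$; the analysis is your Poisson computation. For impossibility the paper runs the ensemble multi-OGP over a discrete one-edge-at-a-time resampling path and combines it with $L^2$-stability of degree-$D$ polynomials, exactly your steps (i)--(iii); the first-moment calculation in (ii) is organized by tracking, for each $S_k$, the per-partition sizes $a_{i,k}=|S_k\cap V_i|$ together with the \emph{incremental} quantities $b_{i,k}=|(S_k\setminus\bigcup_{j<k}S_j)\cap V_i|$, rather than pairwise overlaps.

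The genuine gap is in your statistical \emph{lower bound}. The assertion $\E[X_\alpha^2]=(1+o(1))\E[X_\alpha]^2$ for the raw count of balanced independent sets does not hold: already for independent sets in $G(n,d/n)$ the vanilla second moment fails because the overlap distribution carries non-negligible mass far from the ``independent'' value $\beta_i\approx\alpha^2$, and the same obstruction persists in the $r$-partite balanced setting. (Note also that if your $(1+o(1))$ claim were true, Paley--Zygmund would already give a w.h.p.\ statement and no martingale step would be needed, so the two halves of your plan are in tension.) The paper instead follows Frieze's device: partition each $V_i$ into blocks of size $m=(d/(\log d)^r)^{1/(r-1)}$, restrict to ``$P$-independent'' balanced sets hitting each block at most once, and run the second moment on this constrained count $Z_k$. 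Even then one only obtains $\E[Z_k^2]/\E[Z_k]^2\le\exp\bigl(n\cdot d^{-\Theta(1)}\bigr)$, not $1+o(1)$. The real payoff of the block structure is on the \emph{concentration} side: McDiarmid applied over the $rn/m$ block-indexed edge groups yields fluctuation bounds with an extra factor of $m$ in the exponent, which is exactly what is needed to beat the $\exp(-n/d^{\Theta(1)})$ lower bound from Paley--Zygmund and conclude $\bar\beta\ge k-o(k)$. Your proposed edge-exposure martingale cannot do this: over $n^r$ edges Azuma is vacuous, and even vertex-exposure over $rn$ coordinates gives a concentration exponent of order $n(\log d/d)^{2/(r-1)}$ with no gain in $d$, which is too weak to close the bootstrap.
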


Our results answer a question of the first author, and for $r = 2$, we recover the results of Perkins and the second author.
We also conjecture that this statistical--computational gap persists for polynomial-time algorithms (a version of this conjecture for $r = 2$ appeared in \cite{perkins2024hardness}).

\begin{Conjecture}\label{conjecture: balanced hypergraph SCG}
    For any $\eps > 0$ and integer $r\geq 2$, in the double limit $n \to \infty$ followed by $d \to \infty$ there is no polynomial-time algorithm that finds a balanced independent set in $\mathcal{H}\left(r, n, d/n^{r-1}\right)$ of density at least $(1+\eps)\left(\frac{1}{r-1}\cdot \frac{\log d}{d}\right)^{1/(r-1)}$ with high probability.
\end{Conjecture}

We do in fact prove more general statements for \textit{$\boldgamma$-balanced independent sets}.
Here, rather than considering an independent set $I$ with an equal proportion of vertices in each partition, we specify the proportion of vertices in $I$ within each partition in a vector $\boldgamma \in (0, 1)^r$ (see Definition~\ref{definition: balanced independent sets} for a formal definition).
For $\boldgamma = (1/r, \ldots, 1/r)$, we recover the original notion of balanced independent sets.

We remark that while multipartite hypergraphs are a natural extension of bipartite graphs to larger uniformities, certain properties do not extend.
For example, bipartite graphs are triangle-free, while $r$-partite hypergraphs can contain triangles for $r \geq 3$ (the same holds for odd cycles).
This makes our results all the more interesting as we recover the results for $r = 2$.

\subsection{Relation to Prior Work}\label{subsection: prior work}

\paragraph{Statistical-computational gaps.}
Inference problems with conjectured statistical--computational gaps are ubiquitous throughout statistics and computer science.
Some classical examples include the planted clique problem \cite{jerrum1992large, deshpande2015improved, meka2015sum, barak2019nearly} and community detection in random graphs \cite{decelle2011asymptotic, arias2014community, hopkins2017bayesian}, structured principal component analysis of matrices \cite{berthet2013computational, lesieur2015phase} and tensors \cite{hopkins2015tensor, hopkins2017power}, and solving or refuting constraint satisfaction problems \cite{achlioptas2008algorithmic, kothari2017sum}.

While proving statistical--computational gaps of the above problems is considered a daunting task, it has spurred a large research effort toward providing rigorous evidence of hardness for average-case instances.
As mentioned earlier, this involves proving hardness for certain restricted classes of algorithms.
These techniques include Markov chain Monte Carlo methods \cite{jerrum1992large, dyer2002counting}, local algorithms \cite{gamarnik2014limits, rahman2017local, chen2019suboptimality}, belief propagation and approximate message passing algorithms \cite{decelle2011asymptotic, lesieur2015phase}, reductions from the presumed hard planted clique problem \cite{hajek2015computational, brennan2019optimal}, and statistical query models \cite{kearns1998efficient, feldman2015complexity}, to name a few.

Within the random graph framework, there are three primary problems of interest: optimization, testing, and estimation.
This work focuses on the optimization problem, i.e., finding a large (balanced) independent set in a random hypergraph.
One can analogously consider the testing and estimation problems by analyzing so-called ``planted'' models.
The above techniques have been applied to a wide variety of such problems including the densest $k$-subgraph problem \cite{feige1997densest, bhaskara2010detecting}, planted dense subgraph problem \cite{jones2023sum, PDC, PDC_IT}, and hypergraph versions \cite{chlamtac2018densest, corinzia2022statistical, dhawan2023detection}.

\paragraph{Low-degree algorithms.}
The study of low-degree algorithms has become a popular approach in recent years. 
The main focus of this work is to provide sharp thresholds for the tractability of \textit{low-degree polynomial algorithms} (to be formally defined in the next section).
Here, each vertex's membership in the (balanced) independent set is determined by a multivariate polynomial in the edge indicator variables of the hypergraph.
This is a powerful class of algorithms as it includes the class of local algorithms as well as the algorithmic paradigms of approximate message passing and power iteration (see the discussion in \cite[Appendix A]{gamarnik2024hardness}).
Given the current state of average-case complexity theory, it has been shown that low-degree algorithms are as powerful as the best known polynomial-time algorithms for a number of problems in high-dimensional statistics including those mentioned in the preceding paragraphs. Therefore, the failure of low-degree algorithms is a sign of concrete evidence for computational hardness of statistical problems. 

There is by now a standard method for proving low-degree optimization bounds based on the \textit{Overlap Gap Property}, which we describe in \S\ref{subsection: proof overview} (see the survey \cite{gamarnik2021overlap} for a more extensive overview).
This technique has been employed in a number of inference problems including independent sets in random graphs \cite{gamarnik2020low, wein2022optimal}, random $k$-SAT \cite{bresler2022algorithmic}, discrepancy of random matrices \cite{venkat2022efficient}, and random CSPs \cite{chen2023local}.
Our work is the among the first to apply this strategy to random hypergraphs.

\paragraph{Statistical inference on random hypergraphs.}
Random graph inference has been a central research area for decades, while the hypergraph extension is less well-studied.
In general, the hypergraph setting is believed to be considerably harder.
In recent years, however, there has been an increase in both theoretical and applied interest in this setting.
These include applying spectral methods to solve testing and estimation problems \cite{luo2022tensor, jones2023sum} and determining statistical and computational thresholds for planted problems \cite{yuan2021heterogeneous, yuan2021information, dhawan2023detection}.

While the maximum independent set problem is NP-hard, there has been progress for restricted classes of algorithms on structurally constrained hypergraphs.
In particular, \cite{halldorsson2009independent, guruswami2011complexity} consider bounded-degree hypergraphs, \cite{halldorsson2016streaming} consider streaming algorithms for sparse hypergraphs, \cite{khanna2021independent} consider semi-random hypergraphs, and \cite{halperin2002improved, agnarsson2013sdp} explore SDP based methods of solving maximum independent set.
To the best of our knowledge, this is the first paper to consider the statistical--computational gap of finding independent sets in \ER hypergraphs.

In contrast, multipartite hypergraphs have been less extensively studied from a computational standpoint, however, there are a number of theoretical results which may be of interest to the reader \cite{kamvcev2017bounded, dhawan2023list, bowtell2024matchings}.
In \cite{guruswami2015inapproximability}, the authors consider the minimum vertex cover problem on such hypergraphs.
A somewhat surprising result is that while the problem is tractable for $r = 2$, it is NP-hard for $r \geq 3$.
This makes our results all the more interesting as we show that the behavior of maximum balanced independent set does not exhibit such a distinction based on $r$.
A similar observation was made in \cite{botelho2012cores} who consider the appearance of a $k$-core (a subhypergraph of minimum degree $k$) in certain random $r$-partite hypergraph models.

\subsection{The Low-Degree Framework}\label{subsection: low-deg}

We now define the framework of \textit{low-degree polynomial algorithms}.
We say a function $f\,:\,\mathbb{R}^m \to \mathbb{R}^{n}$ is a \textit{polynomial of degree (at most) D} if it can be written as $f(A) = (f_{1}(A), \ldots, f_{n}(A))$, where each $f_i\,:\,\mathbb{R}^m \to \mathbb{R}$ is a multivariate polynomial of degree at most $D$. 
For a probability space $(\Omega, P_{\omega})$, we say $f\,:\,\mathbb{R}^m \times \Omega \to \mathbb{R}^{n}$ is a \textit{random polynomial} if $f(\cdot, \omega)$ is a degree $D$ polynomial for each $\omega \in \Omega$, i.e., the coefficients of $f$ are random but do not depend on $A$. 
For our purposes, the input of $f$ is an indicator vector $A \in \{0,1\}^{m}$ encoding the edges of an $n$-vertex $r$-uniform hypergraph with $m = \binom{n}{r}$ or an $(rn)$-vertex $r$-uniform balanced $r$-partite hypergraph with $m = n^r$.

We must formally define what it means for such a polynomial to find an independent set in a hypergraph.
We follow the notation and definitions from \cite{wein2022optimal}, in which a rounding procedure is used to produce an independent set of $\mathcal{H}_2(n, p)$ based on the output $f(A)$. 
For brevity, we just state the following definition in this section for ordinary hypergraphs (the multipartite analog can be inferred as it is nearly identical).

\begin{Definition} \label{Def:V_eta_f}
    Let $f\,:\,\mathbb{R}^m \to \mathbb{R}^{n}$ be a random polynomial, with $m = {n \choose r}$. For $A \in \{0,1\}^m$ indicating the edges of a hypergraph on $n$ vertices and $\eta >0$, let $V^{\eta}_f(A,\omega)$ be the independent set constructed as follows. 
    Let 
    \begin{align*}
        I &=\big\{ i \in [n]\,:\, f_i(A,\omega) \geq 1\big\} \\
        \Tilde{I} &= \big\{i \in I\,:\, \forall e\ni i,\, e \not\subseteq I\big\} \,  ,  \, \, \text{ and} \\
         J &= \left\{ i \in [n]\,:\, f_i(A,\omega) \in \left(\frac{1}{2},\,1\right)\right\} \,.
    \end{align*}
    Then define
    \begin{align*}
        V^{\eta}_f(A,\omega) =
        \begin{cases}
            \Tilde{I} & \text{if } |I \setminus  \Tilde{I}|+|J| \leq \eta n; \\
            \varnothing & \text{otherwise}.
        \end{cases}
    \end{align*}
\end{Definition}

In essence, a vertex $i$ is in the independent set if the output of the corresponding polynomial $f_i$ is at least $1$, and it is not in the independent set if the output of $f_i$ is at most $1/2$.
We allow up to $\eta\,n$ ``errors'', i.e., vertices $i$ which satisfy one of the following:
\begin{itemize}
    \item either $f_i(A) \in (1/2,\, 1)$, or
    \item there is some edge $e$ containing $i$ such that $e \subseteq I$ ($i$ violates the independence constraint).
\end{itemize}
The error tolerance of $(1/2, 1)$ is important for our impossibility results, as this ensures that a small change in $f(A, \omega)$ cannot cause a large change in the resulting independent set without encountering the failure event $\varnothing$. 
With the mapping $V_f^\eta$ in hand, let us formally define how a random polynomial finds an independent set of a certain size.

\begin{Definition} \label{Def:f_Optimize_ind}
    For parameters $k > 0$, $\delta \in [0,1]$, and $\xi \geq 1$, a random function $f \,:\, \mathbb{R}^m \to \mathbb{R}^{n}$ is said to $(k, \delta, \xi, \eta)$-optimize the independent set problem in $\Hrnp$ if the following are satisfied when $A \sim \Hrnp$:
    \begin{itemize}
        \item $\mathbb{E}_{A,\omega}\left[\|f(A,\omega)\|^2\right] \leq \xi k$ and
        \item $\mathbb{P}_{A,\omega}\left[|V^\eta_f(A,\omega)| \geq k \right] \geq 1-\delta$.
    \end{itemize}
\end{Definition}

Here, $k$ is the desired size of the independent set, $\delta$ is the failure probability of the algorithm, $\xi$ is a normalization parameter, and $\eta$ is the error tolerance parameter of the rounding procedure.
Similarly, we define how a random polynomial finds a balanced independent set of a certain size.

\begin{Definition} \label{Def:f_Optimize_ind_bal}
    For parameters $k_{1}, \ldots, k_r > 0$, $\delta \in [0,1]$, and $\xi \geq 1$, a random function $f \,:\, \mathbb{R}^m \to \mathbb{R}^{rn}$ is said to $(k_{1}, \ldots, k_{r}, \delta, \xi, \eta)$-optimize the balanced independent set problem in $\Hrrnp$ if the following are satisfied when $A \sim \Hrrnp$:
    \begin{itemize}
        \item $\mathbb{E}_{A,\omega}\left[\|f(A,\omega)\|^2\right] \leq \xi(k_{1}+ \cdots + k_r)$ and
        \item $\mathbb{P}_{A,\omega}\left[\forall i \in [r],\,|V^\eta_f(A,\omega) \cap V_i| \geq k_i\right] \geq 1-\delta$,
    \end{itemize}
    where $V_1, \ldots, V_r$ are the vertex partitions of the $r$-partite hypergraph $\Hrrnp$.
\end{Definition}

Here, $k_1, \ldots, k_r$ are the desired sizes of the intersections of the independent set with each partition, while the other parameters are the same as in Definition~\ref{Def:f_Optimize_ind}.
We note that for balanced independent sets, we would have $k_1 = \cdots = k_r$, however, as we will see in the next section, we consider the more general \textit{$\gamma$-balanced independent sets} in which we allow different values of $k_i$.

\subsection{Main Results}\label{subsection: results}

We are now ready to state our main results.
First, we define the random models we consider.

\begin{Definition}[The Random Hypergraph Models]\label{definition: models}
    Let $n,\, r \in \N$ such that $n \geq r \geq 2$.
    \begin{itemize}
        \item We construct the hypergraph $H \sim \Hrnp$ on vertex set $[n]$ by including each $e \subseteq [n]$ of size $r$ in $E(H)$ independently with probability $p$.
        \item We construct the hypergraph $H \sim \Hrrnp$ on vertex set $[n]\times [r]$ by including each
        \[e \in V_1 \times \cdots \times V_r,\]
        in $E(H)$ independently with probability $p$.
        Here, $V_i = [n] \times \set{i}$.
    \end{itemize}
\end{Definition}

Our first results are on the low-degree computational threshold for constructing large independent sets in \ER hypergraphs.

\begin{Theorem}[Achievability result for $\Hrnp$]\label{theorem: low-deg hypergraph achievability}
    Let $\eps > 0$ and $r \in \N$ such that $r\geq 2$.
    There exists $d^* > 0$ such that for any $d \geq d^*$ and $\eta > 0$, there exist $n^*,\,C,\,D > 0$ and $\xi \geq 1$ such that the following holds for all $n \geq n^*$. For 
    \[k = n\,(1 - \eps)\left(\frac{1}{r-1}\cdot \frac{\log d}{d}\right)^{1/(r-1)}, \quad \text{and} \quad \delta = \exp\left(-Cn^{1/3}\right),\]
    there exists a degree-$D$ polynomial that $(k, \delta, \xi, \eta)$-optimizes the independent set problem in $\Hrnp$ for $p = d/\binom{n-1}{r-1}$.
\end{Theorem}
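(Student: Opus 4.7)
My approach mirrors Wein's for the graph case: realize a truncated local randomized greedy algorithm as a bounded-degree polynomial. Let $\alpha := (1-\eps/2)\left(\frac{1}{r-1}\cdot\frac{\log d}{d}\right)^{1/(r-1)}$ denote the target density.

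First I would set up the local greedy. Assign i.i.d.\ labels $X_v \sim \mathrm{Uniform}[0,1]$ to each vertex $v$, constituting the extra randomness $\omega$. Let $I^{\mathrm{gr}}(A,\omega)$ be the output of the randomized greedy that processes vertices in increasing order of $X_v$ and includes $v$ iff doing so keeps the current set independent; equivalently, $v \in I^{\mathrm{gr}}$ iff for every hyperedge $e \ni v$ there is some $u \in e\setminus\{v\}$ with $X_u > X_v$ or $u \notin I^{\mathrm{gr}}$. A differential-equations (Wormald-style) analysis extended to $r$-uniform hypergraphs---in the spirit of the local-algorithm analyses of~\cite{bal2023larger}---shows that $|I^{\mathrm{gr}}|/n$ concentrates at $(1-o_d(1))\left(\frac{1}{r-1}\cdot\frac{\log d}{d}\right)^{1/(r-1)}$ on $\Hrnp$, which exceeds $\alpha$ for $d$ sufficiently large.

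Next, I would truncate the greedy and pass to a low-degree polynomial. The decision $v \in I^{\mathrm{gr}}$ depends only on the ``upward-label'' subtree emanating from $v$, so fix a constant depth $L = L(\eps, r, d)$ and define $I_L^{\mathrm{gr}}$ by cutting the recursion at hypergraph distance $L$ from $v$. Coupling the depth-$L$ neighborhood of a typical vertex with an $r$-uniform Galton--Watson hypertree of offspring mean $d$ shows that for $L$ large enough, $\mathbb{E}|I_L^{\mathrm{gr}} \triangle I^{\mathrm{gr}}| \leq (\eps/4)n$. Let $B_v$ count the edges inside this neighborhood; pick $M = M(\eps, r, d, L)$ large enough that $\mathbb{P}[B_v > M] \leq \eps/8$, and set $f_v(A,\omega) := \mathbf{1}[v \in I_L^{\mathrm{gr}}]\cdot \mathbf{1}[B_v \leq M]$. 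Then each $f_v$ is a multilinear polynomial in $A$ of degree at most $M$, so $f$ has overall degree $D = M$ and $\mathbb{E}\|f\|^2 = O(n)$, delivering the required normalization $\xi$.

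Finally, since $f_v \in \{0,1\}$, the set $J$ in Definition~\ref{Def:V_eta_f} is empty. Any $v \in I\setminus \tilde I$ must lie in some $e \subseteq I$, which forces at least one vertex of $e$ either to be a disagreement between $I^{\mathrm{gr}}$ (genuinely independent) and $I_L^{\mathrm{gr}}$ or to trigger the $B_v > M$ cutoff; both exceptional sets have expected size $o(n)$, giving $|I\setminus\tilde I|+|J| \leq \eta n$ with high probability. Exponential concentration of $|V_f^{\eta}|$ at rate $\exp(-Cn^{1/3})$ then follows from Kim--Vu-style polynomial concentration applied to $\sum_v f_v$, which is a degree-$M$ polynomial in the independent Bernoulli edge variables. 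The main technical obstacle is the first step: establishing the sharp constant $\left(\frac{1}{r-1}\right)^{1/(r-1)}$ in the greedy's asymptotic density. This requires solving the ODE governing the density of ``available'' vertices along the greedy process, where availability is now governed by $r$-wise blocking rather than pairwise constraints; once this is in hand, the remaining steps are essentially standard low-degree polynomial machinery adapted to the hypergraph setting.
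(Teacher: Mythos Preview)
Your overall strategy---randomized greedy, truncate to a bounded neighborhood, realize as a low-degree polynomial---is the paper's route. But there is a real gap in the polynomial step.

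The function $f_v(A,\omega) = \mathbf{1}[v \in I_L^{\mathrm{gr}}]\cdot\mathbf{1}[B_v \le M]$ is \emph{not} a degree-$M$ multilinear polynomial in $A$. Take $\omega$ with $X_v$ minimal (so greedy always includes $v$) and $L=1$; then $f_v(A,\omega)=\mathbf{1}\bigl[\sum_{e\ni v}A_e\le M\bigr]$, a symmetric threshold in $N=\binom{n-1}{r-1}$ Boolean variables whose multilinear expansion has top coefficient $(-1)^{N+M}\binom{N-1}{M}\neq 0$, hence degree exactly $N$. In general $\mathbf{1}[B_v\le M]$ is sensitive to every potential edge in the depth-$L$ ball, forcing degree polynomial in $n$, not $M$. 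You are conflating ``$f_v$ depends only on $\le M$ edges on typical inputs'' with ``$f_v$ has polynomial degree $\le M$''; these are different.

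The paper (following Wein) fixes this with the M\"obius-style expansion
\[
f_v(A,\mathbf X)=\sum_{H\in\mathbb H_{v,s,q}}\alpha(H,v,\mathbf X)\prod_{e\in E(H)}A_e,
\qquad
\alpha(H,v,\mathbf X)=g(H,v,\mathbf X)-\sum_{\substack{H'\in\mathbb H_{v,s,q}\\E(H')\subsetneq E(H)}}\alpha(H',v,\mathbf X),
\]
summed only over rooted subhypergraphs with at most $q$ edges. This is genuinely degree $q$ and agrees with the local rule $g$ whenever $N_s(A,v)$ is a hypertree with $\le q$ edges. The price is that on atypical neighborhoods $f_v\notin\{0,1\}$, so your claim ``$J$ is empty'' no longer holds; the $\eta n$ slack in Definition~\ref{Def:V_eta_f} is exactly what absorbs both $J$ and $I\setminus\tilde I$, via concentration of the number of vertices whose $2s$-neighborhood is not a small hypertree. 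Your downstream error accounting therefore also needs to be redone.

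Two further comparisons. For the constant $\bigl(\tfrac{1}{r-1}\bigr)^{1/(r-1)}$, the paper does not run your ODE directly on $\Hrnp$; it quotes Nie--Verstra\"ete for greedy on the $\Delta$-regular hypertree $T_\Delta^r$ and transfers the bound to $\mathbb T_d^{GW}$ by an edge-removal/regularization coupling with $\Delta=\lceil d+d^{3/4}\rceil$, so the constant enters as a black box. For the $\exp(-Cn^{1/3})$ rate, the paper uses a hypergraph version of the neighborhood-count concentration of Barak et al.\ rather than Kim--Vu; either tool yields the same exponent.
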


\begin{Theorem}[Impossibility result for $\Hrnp$]\label{theorem: low-deg hypergraph impossibility}
    Let $\eps > 0$ and $r \in \N$ such that $r\geq 2$.
    There exists $d^{*}>0$ such that for any $d \geq d^{*}$ there exist $\eta,\, C_1,\, C_2,\, n^* > 0$ such that the following holds for any $n \geq n^{*}$, $\xi \geq 1$, $1 \leq D \leq \frac{C_1 n}{\xi \log n}$, and $\delta \leq \exp(-C_2 \xi D \log n)$.
    If $k \geq n\,(1+\varepsilon) \left(\frac{1}{r-1}\cdot\frac{\log d}{d}\right)^{\frac{1}{r-1}}$, there is no degree-$D$ polynomial that $(k, \delta, \xi, \eta\big)$-optimizes the independent set problem in $\Hrnp$.
\end{Theorem}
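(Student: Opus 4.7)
The plan is to follow the Overlap Gap Property (OGP) plus low-degree stability framework developed by Gamarnik--Sudan, Rahman--Vir\'ag, and Wein for graphs, and adapt the structural calculations to the $r$-uniform setting. At a high level, I will assume for contradiction that a degree-$D$ polynomial $f$ exists that produces an independent set of density at least $(1+\varepsilon)\left(\frac{1}{r-1}\cdot\frac{\log d}{d}\right)^{1/(r-1)}$ with the claimed probability, construct a correlated family of hypergraphs interpolating smoothly between independent copies, and use a branching OGP together with a Boolean Efron--Stein / noise-stability bound for low-degree polynomials to derive a contradiction.

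First I will prove the \emph{structural OGP} for $\Hrnp$. Specifically, fix a large $m$ (the branching factor) and consider $m$ hypergraphs $H^{(1)},\dots,H^{(m)}$ that are coupled so that every pair has a prescribed large overlap of edges (obtained, e.g., by running a resampling process starting from a single $H$). I will use the first moment method to show that, for suitable constants $0<\nu_1<\nu_2<1$ depending on $\varepsilon$ and $r$, with probability at least $1-\exp(-\Omega(n))$ there do not exist independent sets $I_j \subseteq V(H^{(j)})$ of size at least $k$ with \emph{all} pairwise intersection densities $|I_i\cap I_j|/n$ lying in the interval $(\nu_1,\nu_2)$. The core calculation is a union bound over the $\binom{n}{k}^m$ configurations, where the probability that a prescribed collection of sets is independent in the correlated hypergraphs is controlled by a product of terms of the form $(1-p)^{\binom{|I_j|}{r}}$ together with a correction accounting for the shared edges in the coupling. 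The key exponent to beat is $k^r/n^{r-1}$ against $k\log(n/k)$, and the choice of threshold $k=(1+\varepsilon)n\left(\frac{\log d}{(r-1)d}\right)^{1/(r-1)}$ is precisely the one that makes the sum over overlaps in $(\nu_1,\nu_2)$ negative, yielding a forbidden band of overlaps.

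Next I will invoke the \emph{low-degree stability machinery}. Given the polynomial $f$, I will construct a continuous interpolation path of hypergraphs $A^{(0)},A^{(1)},\ldots,A^{(T)}$ where each step resamples a small fraction of the edges, so that $A^{(0)}$ and $A^{(T)}$ are nearly independent while consecutive $A^{(t)}, A^{(t+1)}$ are very close. Using the standard $L^2$-to-$L^q$ hypercontractive estimate for polynomials of degree $D$ over product spaces (as in Schramm--Wein and Wein), together with the normalization $\E\|f(A)\|^2\le \xi k$, I will show that with probability at least $1-T\delta$ each $V^\eta_f(A^{(t)})$ is a genuine independent set of size $\geq k$ and the symmetric differences $|V^\eta_f(A^{(t)})\triangle V^\eta_f(A^{(t+1)})|$ are at most $o(n)$. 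The error-tolerance window $(1/2,1)$ in Definition~1.4 is crucial here: it guarantees that small perturbations of $f$ produce only small changes in the rounded set $V^\eta_f$, so that the trajectory of overlaps $|V^\eta_f(A^{(0)})\cap V^\eta_f(A^{(t)})|/n$ changes by $o(1)$ per step. This forces the trajectory to enter the forbidden band $(\nu_1,\nu_2)$ en route from overlap $\approx 1$ at $t=0$ to overlap $\approx (k/n)^2$ at $t=T$.

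Finally, choosing the branching parameter $m$ large enough that among $m$ such trajectories one must find a pair whose overlaps all fall in $(\nu_1,\nu_2)$ (a pigeonhole/Ramsey-style argument) completes the contradiction with the structural OGP of the first step. The main obstacle, and the genuinely new work relative to the graph case, is the first moment calculation in Step~1: for $r$-uniform hypergraphs the number of $r$-edges inside a union of $m$ sets with prescribed overlap pattern involves an inclusion--exclusion over $r$-tuples of parts of the Venn diagram on $m$ sets, and one must verify that the resulting optimization over overlap densities produces a strictly negative exponent exactly when $k/n > (1+\varepsilon)\left(\frac{1}{r-1}\cdot\frac{\log d}{d}\right)^{1/(r-1)}$. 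Once this combinatorial threshold is pinned down, the remainder of the argument is a direct adaptation of the graph low-degree framework, with the stability and branching steps taken essentially verbatim from \cite{wein2022optimal}.
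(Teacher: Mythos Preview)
Your high-level plan is the right one, but the forbidden structure you propose in Step~1 is not the one that yields the sharp threshold, and this is a genuine gap. You phrase the OGP as: for $m$ correlated hypergraphs, rule out independent sets $I_1,\dots,I_m$ with \emph{all pairwise} intersection densities $|I_i\cap I_j|/n$ lying in a band $(\nu_1,\nu_2)$. This is the original Gamarnik--Sudan pairwise OGP, which for $r=2$ famously only rules out densities above $\bigl(\tfrac{1}{2}+\tfrac{1}{2\sqrt{2}}\bigr)\cdot 2\log d/d$, not $(1+\varepsilon)\log d/d$. Multiplying the number of sets does not by itself fix this: with a purely pairwise constraint, the first-moment calculation balances $m$ copies of the entropy $k\log(n/k)$ against edges avoided, but the latter scales with $|\bigcup_j I_j|$ and pairwise constraints do not pin down the size of this union tightly enough to drive the threshold down to the sharp value.

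The paper instead uses the \emph{ladder} (ensemble) OGP of Rahman--Vir\'ag and Wein: a sequence $S_1,\dots,S_K$ along a single interpolation path, where the constraint is on the \emph{new material} $|S_k\setminus\bigcup_{\ell<k}S_\ell|$ lying in a narrow window $[\tfrac{\varepsilon}{4}\Phi,\tfrac{\varepsilon}{2}\Phi]$. This is what makes the exponent calculation close: the entropy for $S_k$ (given the previous sets) is governed only by $b_k\Phi\log(n/\Phi)$, while the new edges avoided contribute $rb_k(a_k-b_k)^{r-1}\Phi^r d/n^{r-1}$, and the factor $(a_k-b_k)^{r-1}\ge (1+\varepsilon/2)^{r-1}/(r-1)$ is exactly what beats the entropy when $k$ exceeds the claimed threshold. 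Your ``inclusion--exclusion over $r$-tuples of parts of a Venn diagram on $m$ sets'' is the wrong optimization and will not produce this inequality. Relatedly, your Step~3 (pigeonhole among $m$ trajectories to find ``a pair'') does not match either framework; in the paper, the $S_k$ are selected greedily along \emph{one} path by waiting until the output has drifted enough from $\bigcup_{\ell<k}S_\ell$, and a separate first-moment lemma (no independent set has large intersection with a fixed set in all coordinates) guarantees this drift eventually occurs. Your stability Step~2 is essentially correct and matches the paper's use of $(D,\Gamma,c)$-stability.
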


Before we state our results for multipartite hypergraphs, we define $\boldgamma$-balanced independent sets.

\begin{Definition}[Balanced independent sets]\label{definition: balanced independent sets}
    Let $H = (V_1\cup \cdots \cup V_r,\, E)$ be an $r$-uniform $r$-partite hypergraph for $r \geq 2$, and let $\boldgamma = (\gamma_1, \ldots, \gamma_r)$ be such that $\gamma_i \in (0, 1)\cap \mathbb{Q}$ and $\sum_{i = 1}^r\gamma_i = 1$.
    An independent set $I \subseteq V(H)$ is \textit{$\boldgamma$-balanced} if $|I\cap V_i| = \gamma_i|I|$ for each $i \in [r]$.
    We let $\alpha_{\boldgamma}(H)$ denote the size of the largest $\boldgamma$-balanced independent set in $H$.
\end{Definition}

Our first result in this setting establishes the statistical threshold for $\alpha_{\boldgamma}(H)$ when $H \sim \Hrrnp$ (see \S\ref{subsection: IT bound} for the proof).

\begin{Theorem}[Statistical threshold for $\boldgamma$-balanced independent sets]\label{theorem: stat thresh balanced}
    Let $\eps > 0$ and $r \in \N$ such that $r\geq 2$, and let $\boldgamma = (\gamma_1, \ldots, \gamma_r)$ be such that $\gamma_i \in (0, 1) \cap \mathbb{Q}$ and $\sum_{i = 1}^r\gamma_i = 1$.
    There exists $d^* > 0$ such that for any $d \geq d^*$, there exists $n^* > 0$ such that for any $n \geq n^*$ and $p = d/n^{r-1}$, the hypergraph $H \sim \Hrrnp$ satisfies
    \[(1 - \eps)\left(\frac{\log d}{d(r-1)\prod_i\gamma_i}\right)^{\frac{1}{r-1}} \,\leq\, \frac{\alpha_{\boldgamma}(H)}{n} \,\leq\,  (1 + \eps)\left(\frac{\log d}{d(r-1)\prod_i\gamma_i}\right)^{\frac{1}{r-1}},\]
    with probability $1 - \exp\left(-\Omega(n)\right)$.
\end{Theorem}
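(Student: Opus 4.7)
The plan is to prove matching bounds by the moment method and then concentrate via a vertex-exposure martingale. Let $X_k$ be the number of $\boldgamma$-balanced independent sets of size $k$ in $H\sim\Hrrnp$. Each such set specifies $\gamma_i k$ vertices in $V_i$ and forbids the product-shaped family of $k^r\prod_i\gamma_i$ potential edges, so $\E[X_k]=\prod_{i=1}^{r}\binom{n}{\gamma_i k}(1-p)^{k^{r}\prod_i\gamma_i}$. Writing $k=\alpha r n$ and applying Stirling together with $\log(1-p)=-p(1+o(1))$, the exponential growth rate of $\E[X_k]$ equals
\[
\psi(\alpha):=\alpha r\bigl(1-\log(\alpha r)-\textstyle\sum_i\gamma_i\log\gamma_i\bigr)-d(\alpha r)^r\textstyle\prod_i\gamma_i.
\]
Let $\alpha^{\star}:=\bigl(\log d/(d\,r^{r-1}(r-1)\prod\gamma_i)\bigr)^{1/(r-1)}$ denote the claimed threshold. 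The identities $-\log(\alpha^{\star}r)=(\log d)/(r-1)+O(\log\log d)$ and $d(\alpha^{\star}r)^{r-1}\prod\gamma_i=(\log d)/(r-1)$ let me expand $\psi$ at $(1+\eps)\alpha^{\star}$ and extract the leading term $[(1+\eps)-(1+\eps)^r]\cdot\alpha^{\star}r\log d/(r-1)$, which is strictly negative for every $\eps>0$ and $r\ge 2$. Hence $\E[X_k]\le\exp(-\Omega(n))$ when $k=(1+\eps)\alpha^{\star}rn$, and Markov's inequality gives the upper bound with the required probability.

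For the matching lower bound I would run a second moment argument at $k=(1-\eps/2)\alpha^{\star}rn$, where the same expansion yields $\E[X_k]\to\infty$. Partitioning pairs of sets by the overlap vector $\ell_i=|I\cap J\cap V_i|$,
\[
\E[X_k^2]=\sum_{\ell_1,\ldots,\ell_r}\prod_{i=1}^{r}\binom{n}{\ell_i}\binom{n-\ell_i}{\gamma_ik-\ell_i}\binom{n-\gamma_ik}{\gamma_ik-\ell_i}(1-p)^{2k^{r}\prod\gamma_i-\prod_i\ell_i},
\]
and a Laplace-type analysis shows that the dominant contribution comes from the ``independent'' overlap $\ell_i\approx(\gamma_ik)^2/n$. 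The overlap correction $(1-p)^{-\prod\ell_i}$ stays $1+o(1)$ in the relevant regime because $p\prod\ell_i\ll 1$ under the sparsity $p=d/n^{r-1}$. Thus $\E[X_k^2]\le C(\boldgamma,r)\E[X_k]^2$, and Paley--Zygmund gives $\P(X_k>0)\ge c_0>0$.

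To upgrade this to probability $1-\exp(-\Omega(n))$, I would apply a vertex-exposure martingale to $Y=\alpha_{\boldgamma}(H)$: reveal the $rn$ vertices in order, each step exposing the edges whose endpoints have all been revealed. Adding a single vertex changes $Y$ by at most $\lceil 1/\gamma_{\min}\rceil$, since a unit gain in $|I\cap V_j|$ requires matching gains in the other partitions to preserve $\boldgamma$-balance. Azuma--Hoeffding then gives $\P(|Y-\E Y|\ge t)\le 2\exp(-\Omega(t^2/n))$, which for $t=(\eps/4)\alpha^{\star}rn$ is $\exp(-\Omega(n))$. Combined with the constant-probability lower bound, this forces $\E Y\ge(1-3\eps/4)\alpha^{\star}rn$, and concentration finishes the proof. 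The main obstacle is the second moment step: the sum over $r$-tuples of overlap parameters is more delicate than the graph ($r=2$) case of \cite{perkins2024hardness}, requiring uniform control over atypical overlaps, with the ordinary hypergraph analysis in \cite{krivelevich1998chromatic} serving as a useful template.
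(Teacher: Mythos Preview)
Your upper bound is correct and matches the paper's first-moment computation.

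The lower bound has a concrete error in the second-moment step. You claim that at the typical overlap $\ell_i\approx(\gamma_ik)^2/n$ one has $p\prod_i\ell_i\ll 1$, so that the correction $(1-p)^{-\prod_i\ell_i}$ is $1+o(1)$ and hence $\E[X_k^2]\le C\,\E[X_k]^2$. This is false: with $k=\Theta\bigl(n(\log d/d)^{1/(r-1)}\bigr)$ one computes
\[
p\prod_i\ell_i \;=\; \frac{d}{n^{r-1}}\cdot\Theta\!\left(n^r\Bigl(\tfrac{\log d}{d}\Bigr)^{2r/(r-1)}\right)\;=\;\Theta_d(n)\;\longrightarrow\;\infty,
\]
so already the typical-overlap contribution forces $\E[X_k^2]/\E[X_k]^2\ge\exp(\Theta_d(n))$ (indeed, by convexity $\E\bigl[(1-p)^{-\prod_i\ell_i}\bigr]\ge(1-p)^{-\prod_i\E[\ell_i]}$). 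Paley--Zygmund therefore yields only $\P[X_k>0]\ge\exp(-\Theta_d(n))$, not a constant. Whether this can be balanced against your vertex-exposure concentration would require a matching \emph{upper} bound on the full second-moment ratio, including the atypical large-overlap terms, which your Laplace heuristic does not supply; this is exactly where the difficulty lies and is the reason the direct second moment is not used in \cite{frieze1990independence} or \cite{krivelevich1998chromatic}.

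The paper instead follows the Frieze device of \emph{$P$-independent sets}: partition each $V_i$ into blocks of size $m=(d/(\log d)^r)^{1/(r-1)}$ and count only $\boldgamma$-balanced independent sets meeting each block at most once. The second moment for this restricted count is then bounded by a case analysis over the overlap range (Lemma~\ref{lemma: large P-ISET}). Crucially, the accompanying concentration (Lemma~\ref{lemma: beta bounds}) is run via McDiarmid on $rn/m$ block-indexed edge groups rather than on $rn$ vertices, gaining a factor of $m$ in the exponent relative to your vertex-exposure martingale. It is this strengthened concentration, together with the $P$-independence second moment, that allows the two exponents to be compared and delivers the $\exp(-\Omega(n))$ conclusion.
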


Note that setting $\gamma_i = 1/r$ for each $1 \leq i \leq r$ implies the maximum balanced independent set has density $(1\pm\eps)\left(\frac{r}{r-1}\cdot \frac{\log d}{d}\right)^{1/(r-1)}$ with high probability (as we claimed in \S\ref{section: intro}).
Finally, we prove the low-degree computational threshold for finding large $\boldgamma$-balanced independent sets.

\begin{Theorem}[Low-degree threshold for $\boldgamma$-balanced independent sets]\label{theorem: low-deg thresh balanced}
    Let $\eps > 0$ and $r \in \N$ such that $r\geq 2$, and let $\boldgamma = (\gamma_1, \ldots, \gamma_r)$ be such that $\gamma_i \in (0, 1) \cap \mathbb{Q}$ and $\sum_{i = 1}^r\gamma_i = 1$.
    Additionally, let $i^\star = \arg\max_i \gamma_i$.
    There exists $d^* > 0$ such that for any $d \geq d^*$, the following hold.
    \begin{itemize}
        \item There exist $n^* > 0$ and $\xi \geq 1$ such that for any $n \geq n^*$
        and
        \[k_j \leq \gamma_j\,n(1 - \eps)\left(\frac{\gamma_{i^\star}\log d}{d(r - 1)\prod_{i = 1}^r\gamma_i}\right)^{1/(r-1)} \quad \text{for } \quad 1 \leq j \leq r,\]
        there is a degree-$1$ polynomial that $(k_1, \ldots, k_r, \exp\left(-\Omega(n)\right), \xi, 0)$-optimizes the balanced independent set problem in $\Hrrnp$ for $p = d/n^{r-1}$.

        \item There exist $\eta,\, C_1,\, C_2,\, n^* > 0$ such that the following holds for any $n \geq n^*$, $\xi \geq 1$, $1 \leq D \leq \frac{C_1n}{\xi\log n}$, and $\delta \leq \exp\left(-C_2\xi\,D\log n\right)$.
        If
        \[k_j \,\geq\, \gamma_j\,n(1+\eps)\left(\frac{\gamma_{i^\star}\log d}{d(r - 1)\prod_{i = 1}^r\gamma_i}\right)^{1/(r-1)} \quad \text{for } \quad 1 \leq j \leq r,\]
        then there is no random degree-$D$ polynomial that $(k_1, \ldots, k_r, \delta, \xi, \eta)$-optimizes the balanced independent set problem in $\Hrrnp$ for $p = d/n^{r-1}$.

    \end{itemize}
\end{Theorem}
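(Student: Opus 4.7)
The statement combines an achievability claim realized by an explicit degree-$1$ random polynomial with a matching low-degree impossibility bound. I will address each separately.

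\textbf{Achievability via a direct degree-1 construction.} The plan is to generalize the bipartite construction of Perkins and the second author. Let $i^\star = \arg\max_i \gamma_i$. Using the internal randomness $\omega$, sample independently uniformly random subsets $S_i \subseteq V_i$ of size $k_i$ for each $i \neq i^\star$, and define
\[
f_v(A,\omega) \;=\; \begin{cases} \mathbf{1}\{v \in S_i\} & v \in V_i,\; i \neq i^\star,\\[4pt] 1 \,-\, \sum_{e \ni v\,:\,e \cap V_i \in S_i\;\forall i \neq i^\star} A_e & v \in V_{i^\star}.\end{cases}
\]
This is degree $1$ in $A$ for every fixed $\omega$. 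The output set $I = \{v : f_v(A,\omega) \geq 1\}$ is automatically independent: any edge $e \subseteq I$ would satisfy $e \cap V_i \in S_i$ for $i\neq i^\star$, forcing $f_{e \cap V_{i^\star}} \leq 0$ and contradicting $e \cap V_{i^\star} \in I$. Hence $\widetilde I = I$ and no rounding errors arise (so $\eta=0$ is tight). It remains to verify $|I \cap V_{i^\star}| \geq k_{i^\star}$ with probability $1 - o_n(1)$. Substituting the definition of $C = \gamma_{i^\star}\log d/(d\,r^{r-1}(r-1)\prod \gamma_i)$ into $p\prod_{i\neq i^\star}k_i$ gives the clean identity
\[
p \prod_{i \neq i^\star} k_i \;=\; \frac{(1-\eps)^{r-1} \log d}{r-1},
\]
so the expected size of $I \cap V_{i^\star}$ is $n\,d^{-(1-\eps)^{r-1}/(r-1)}$, which exceeds $k_{i^\star} \asymp n(\log d/d)^{1/(r-1)}$ by a factor $d^{(1-(1-\eps)^{r-1})/(r-1)}/(\log d)^{1/(r-1)} \to \infty$. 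Since the survival events for distinct $v \in V_{i^\star}$ depend on disjoint sets of edge indicators, their indicators are independent Bernoullis and a Chernoff bound yields the concentration.

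\textbf{Impossibility via the Overlap Gap Property.} For the lower bound I follow the low-degree OGP framework of Gamarnik--Jagannath--Wein and Wein. The argument decomposes into two independent steps. \emph{Step 1 (OGP).} Consider an $\eps$-correlated coupling $(H_0, H_1)$ of two copies of $\Hrrnp$. I will prove that with probability $1 - \exp(-\Omega(n))$, there do not exist $\boldgamma$-balanced independent sets $I_0 \subseteq V(H_0)$ and $I_1 \subseteq V(H_1)$, each of size $(1+\eps)$ times the algorithmic threshold, whose total overlap $|I_0 \cap I_1|$ lies in a carefully chosen forbidden window $(\nu_1 k, \nu_2 k)$. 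This is proved by a first-moment calculation, enumerating ordered pairs $(I_0, I_1)$ by their partition-wise overlap profile $(t_1,\ldots,t_r)$ with $t_j = |I_0 \cap I_1 \cap V_j|$, multiplying by the probability that both sets are independent in the correlated random hypergraphs, and optimizing the exponent. \emph{Step 2 (stability and chaining).} If a random degree-$D$ polynomial $(k_1,\ldots,k_r,\delta,\xi,\eta)$-optimizes the balanced IS problem, interpolating along slowly varying paths of correlated hypergraphs produces outputs $I_t = V_f^{\eta}(H_t,\omega)$ whose Hamming distance varies gradually. A standard union bound over a tree of $\exp(\Theta(\xi D \log n))$ such interpolation paths, combined with the failure budget $\delta \leq \exp(-C_2 \xi D \log n)$, forces some intermediate overlap to hit the forbidden window, contradicting Step 1.

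\textbf{Main obstacle.} The bulk of the technical work is the OGP first-moment estimate in Step 1. The entropy factor is the product over $j$ of multinomials $\binom{n}{k_j,\, k_j - t_j,\, t_j,\, n - 2k_j + t_j}$ (encoding the choices of $I_0 \cap V_j$ and $I_1 \cap V_j$ with intersection $t_j$), while the probability factor involves $(1-p)^{N}$ where $N$ counts the distinct potential edges that must be absent for both sets to be independent in the correlated model, and satisfies $N \sim \prod_i k_i + \prod_i k_i - \prod_i t_i$ (with an additive $\eps$-correction from the correlation). The rate function thus depends on the overlap profile nontrivially through the single term $\prod_i t_i$, and the crux is to show it is strictly negative on the entire forbidden window once the $k_j$ exceed $(1+\eps)$ times the threshold. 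I expect the binding partition in this optimization to again be $V_{i^\star}$, consistent with its bottleneck role in the achievability step and explaining the appearance of $\gamma_{i^\star}$ in the stated threshold; verifying this uniformly over the vector of overlaps $(t_1,\ldots,t_r)$ (rather than a single scalar, as in the graph case of Wein) is where the balance constraint makes the analysis genuinely new.
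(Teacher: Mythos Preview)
Your achievability argument is essentially identical to the paper's: the same degree-$1$ polynomial (fix subsets of size $k_i$ in all parts $i\neq i^\star$, then let $f_v$ on $V_{i^\star}$ be $1$ minus the number of edges from $v$ into those subsets), the same observation that the output is automatically independent with $\eta=0$, and the same Chernoff bound on the independent survival indicators in $V_{i^\star}$. You omit the check of the normalization $\E\|f\|^2 \le \xi\sum k_i$, which the paper handles by bounding $\E[f_v^2]\le 1+\E[\deg^2(v)]=O(d^2)$, but this is routine.

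The impossibility sketch, however, has a genuine gap. Your Step~1 is a \emph{two-point} OGP: a forbidden overlap window for a single pair $(I_0,I_1)$ in correlated copies. This is the original Gamarnik--Sudan structure, and it is well known (already in the graph case) \emph{not} to reach the sharp $(1+\eps)$ threshold --- it only rules out densities above roughly $(1/2+1/(2\sqrt 2))\cdot(\text{statistical threshold})$. Concretely, your first-moment exponent at overlap profile $(t_1,\dots,t_r)$ has the form
\[
\sum_j \Bigl(2H(k_j,t_j)\Bigr) \;-\; p\Bigl(2\prod_j k_j - \prod_j t_j\Bigr)
\]
up to lower order terms, and for $k_j$ at $(1+\eps)$ times the algorithmic threshold this is \emph{not} strictly negative over any nontrivial window of overlaps; the calculation you outline will not close. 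The paper instead proves an \emph{ensemble} OGP (Proposition~\ref{proposition: interpolating ogp balanced}): it forbids a chain $S_1,\dots,S_K$ of independent sets along the interpolation path with $|S_k\setminus\bigcup_{\ell<k}S_\ell|$ in a small window for each $k\ge 2$, where $K\asymp \eps^{-r}$. The extra multiplicative savings from the $K-1$ successive ``fresh mass'' constraints is exactly what drives the exponent negative at the sharp $(1+\eps)$ threshold (see the term $-(K-1)\cdot\Theta(\eps^r)$ in their estimate). Your Step~2 chaining argument is fine in outline, but it must be fed the ensemble structure rather than a pairwise one; in particular the auxiliary Lemma~\ref{lemma: no large intersection balanced} (any independent set has small intersection with a fixed set in at least one part) is needed to guarantee the chain can be built along the path.
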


Setting $\gamma_i = 1/r$ for each $i$ provides evidence toward Conjecture~\ref{conjecture: balanced hypergraph SCG}.
For arbitrary $\boldgamma$, we observe a statistical--computational gap of a multiplicative factor of $\gamma_{i^\star}^{-1/(r-1)}$ for low-degree polynomial algorithms, matching the result of \cite{perkins2024hardness} for $r = 2$.

\subsection{Proof Overview}\label{subsection: proof overview}

We will now provide an overview of our proof techniques.
We will consider each model separately.

\paragraph{Independent Sets in $\Hrnp$.}
Let us first discuss the achievability result (Theorem~\ref{theorem: low-deg hypergraph achievability}).
It is a well-known fact that sparse random graphs are locally tree-like.
In \cite{wein2022optimal}, Wein showed that the $s$-neighborhood of a vertex in $\mathcal{H}_2(n, d/n)$ can be well approximated by the \textit{Poisson Galton-Watson tree}, a random graph model for generating trees.
In \cite{pal2021community}, Pal and Zhu generalized this model to the hypergraph setting, describing the $r$-uniform \textit{Poisson Galton-Watson Hypertree $\mathbb{T}_d^{GW}$} (see \S\ref{subsection: local} for a formal description).
We observe that for $p = d/\binom{n-1}{r-1}$, the hypergraph $\Hrnp$ is locally $\mathbb{T}_d^{GW}$-like, which implies that the performance of a local algorithm on $\Hrnp$ is determined, up to the first order, by the expectation of the corresponding algorithm evaluated at the root of $\mathbb{T}_d^{GW}$.
Our proof now follows in two steps: first, we describe a local algorithm for finding large independent sets in $\mathbb{T}_d^{GW}$ (which can be adapted to one on $\Hrnp$ by the aforementioned observation); next, we show how a local algorithm on $\Hrnp$ can be well-approximated by a low-degree one, completing the proof.
For the first step, we adapt a local algorithm of Nie and Verstra{\"e}te on the \textit{$\Delta$-regular $r$-uniform hypertree $T_{\Delta}^r$} \cite{nie2021randomized} to one on $\mathbb{T}_d^{GW}$ for an appropriate $\Delta \coloneqq \Delta(d) > 0$.
The second step follows a similar approach to that of Wein on graphs.

The proof of our impossibility result (Theorem~\ref{theorem: low-deg hypergraph impossibility}) falls into a line of work initiated by Gamarnik and Sudan \cite{gamarnik2014limits}, who studied local algorithms for independent sets of $\mathcal{H}_2(n, p)$.
The proof of their impossibility result relies on the so-called \textit{Overlap Gap Property (OGP)}.
In particular, they show the following: if $I_1$ and $I_2$ are ``large'' independent sets, then either $I_1\cap I_2$ is a ``large'' independent set or it is ``small'', i.e., either $I_1$ and $I_2$ contain a lot of common vertices or are nearly disjoint.
They then show that if a local algorithm succeeds in finding a large independent set, it can be used to construct two independent sets violating the OGP condition.
Rahman and Vir\'ag \cite{rahman2017local} improved upon their results by considering a more intricate ``forbidden'' structure involving many independent sets as opposed to just two.
Their approach inspired further works in a number of different areas (see for example \cite{gamarnik2014performance, coja2017walksat, chen2019suboptimality}).
The hardness of random optimization problems within the low-degree polynomial framework was first studied in ~\cite{gamarnik2020low}.  Wein applied this ``ensemble'' variant of the OGP, first introduced in~\cite{chen2019suboptimality}, to establish his impossibility result for independent sets in $\mathcal{H}_2(n, p)$ \cite{wein2022optimal}.
In our proof, we employ the \textit{ensemble-OGP} in a similar fashion by considering a forbidden structure that involves many independent sets across many correlated random hypergraphs.
We show that with high probability this structure does not exist, and any low-degree algorithm that can construct a large independent set can also be used to construct an instance of this structure, leading to a contradiction.

\paragraph{Balanced Independent sets in $\Hrrnp$.}
Let us first discuss the statistical threshold (Theorem~\ref{theorem: stat thresh balanced}).
The proof of the upper bound follows by a standard application of the first moment method.
For the lower bound, our approach is inspired by that of Frieze in \cite{frieze1990independence} where he proves a high probability bound on the asymptotic size of the maximum independent set in $\mathcal{H}_2(n, p)$ (Krivelevich and Sudakov employed a similar approach in \cite{krivelevich1998chromatic} for $\Hrnp$).
The strategy follows a second moment argument which shows that the maximum independent set with an additional property has the desired size with positive probability.
The additional property ensures this random variable concentrates well, implying the lower bound with high probability through an application of Azuma's inequality.
A similar approach was employed by Perkins and the second author in \cite{perkins2024hardness} where they consider the case when $r = 2$.
For $r \geq 3$, however, certain combinatorial arguments no longer hold, requiring a more extensive analysis.
The details are provided in \S\ref{subsection: IT bound}.

The low-degree algorithm we describe for our achievability result in Theorem~\ref{theorem: low-deg thresh balanced} is inspired by recent work of the first author who considered balanced independent sets of deterministic $r$-partite hypergraphs \cite{dhawan2023balanced}.
While the primary interest of that work is on the existence of large balanced independent sets, the proof is constructive and yields a simple degree-$1$ algorithm.
The goal is to adapt this algorithm to construct $\boldgamma$-balanced independent sets in $\Hrrnp$.
We remark that for $r = 2$, a similar approach was used by Perkins and the second author \cite{perkins2024hardness}, where they describe an algorithm inspired by work of Chakraborty \cite{chakraborti2023extremal}.

The proof of our impossibility result follows a similar approach to that of Theorem~\ref{theorem: low-deg hypergraph impossibility}.
We once again employ the ensemble-OGP and are able to arrive at the desired contradiction.
There is an additional layer of complexity due to the structural constraints on the hypergraph and on the independent sets considered (being $r$-partite and $\boldgamma$-balanced, respectively).
Therefore, while the overall strategy of the proof is identical to that of Theorem~\ref{theorem: low-deg hypergraph impossibility}, the analysis differs greatly.
We remark that for $r = 2$, Perkins and the second author employed a similar approach.
They were able to reduce their argument to the case when $\gamma_1 = \gamma_2 = 1/2$, which substantially simplified their arguments.
Unfortunately, such a reduction is not always possible for larger uniformities (one can verify that this reduction is possible only when at least $r - 1$ of the $\gamma_i$'s satisfy $\gamma_i \geq 1/r$).
As a result, our analysis is much more involved.


\subsection{Concluding Remarks}\label{subsection: future}

In this work, we consider the computational hardness of finding large (balanced) independent sets in random hypergraphs.
We focus on low-degree polynomials, a powerful class of algorithms considered to be a useful proxy for computationally efficient algorithms.
For the \ER hypergraph $\Hrnp$, we determine the threshold for success of such algorithms in constructing independent sets, recovering and generalizing results of Wein \cite{wein2022optimal}.
For the \ER $r$-partite hypergraph $\Hrrnp$, we prove a matching threshold for balanced independent sets, generalizing results of Perkins and the second author \cite{perkins2024hardness}.
In both problems, our results indicate a statistical--computational gap of a multiplicative factor of $r^{1/(r-1)}$, that is, the low-degree threshold depends on $r$ for $r$-uniform hypergraphs. This is not entirely expected and cannot be easily deduced from the graph case.

A natural extension of our results would be to consider denser hypergraphs, i.e., larger values of the probability parameter $p$.
For $\mathcal{H}_2(n, p)$ where $p = \Theta(1)$, Matula and Karp's results \cite{matula1976largest, karp1976probabilistic, wein2022optimal} indicate a statistical--computational gap of a multiplicative factor of $2$ for independent sets.
It is not clear how to describe Karp's algorithm as a low-degree polynomial and so the question of the low-degree threshold in this regime still remains open.
Furthermore, to the best of our knowledge, hypergraph versions of these results have not yet appeared in the literature, which begs the following question: is there a statistical--computational gap of a multiplicative factor of $r^{1/(r-1)}$ for independent sets in $\Hrnp$ when $p = \Theta(1)$?

By now, OGP based methods have become standard in proving low-degree thresholds for optimization problems on graphs (see for example \cite{wein2022optimal, huang2022tight, du2023algorithmic, gamarnik2024hardness, perkins2024hardness}).
Our work is the among the first to apply the low-degree framework and the OGP based proof strategy to optimization problems on random hypergraphs.
A potential future line of inquiry would be to adapt this strategy to other optimization problems in this more general setting.

We conclude this section with an observation regarding Theorem~\ref{theorem: low-deg thresh balanced}.
What is most surprising about our result is that the algorithm achieving the low-degree threshold is a simple degree-$1$ polynomial.
While Conjecture~\ref{conjecture: hypergraph SCG} poses a long-standing challenge, the simplicity of our algorithm seems to indicate that finding large balanced independent sets may be easier (which would imply Conjecture~\ref{conjecture: balanced hypergraph SCG} is false).

\section{Preliminaries}\label{section: prelim}

\subsection{Notation}\label{subsection: notation}

Let $\N$ denote the set of nonnegative integers.
For $n \in \N$ such that $n \geq 1$, we let $[n] \coloneqq \set{1, \ldots, n}$.
Throughout this work, we use the asymptotic notation $O_x(\cdot)$, $\Omega_x(\cdot)$, $o_x(\cdot)$, etc. as $x \to \infty$.
We ignore the subscript when $x$ is clear from context.
Note that our results are in the double limit $n \to \infty$ followed by $d \to \infty$.
In particular, $o_n(1)$ is a quantity depending arbitrarily on $d$ tending to $0$ as $n \to \infty$, while $o_d(1)$ is a quantity independent of $n$ which tends to $0$ as $d \to \infty$.

For any fixed $r \geq 2$, we let $K_n^r$ denote the $n$-vertex complete $r$-uniform hypergraph and $K_{r\times n}$ denote the $(rn)$-vertex complete balanced $r$-uniform $r$-partite hypergraph.
For a hypergraph $H$, we let $V(H)$ denote its vertex set and $E(H)$ denote its edge set. 
The degree of a vertex $v \in V(H)$ (denoted $\deg_H(v)$) is the number of edges containing $v$.  
We say $H$ is $d$-regular if all vertices have degree $d$. 
If $H$ is $r$-partite, we let $V_1(H), \ldots, V_r(H)$ denote each vertex partition. 
We say $H$ is \textit{linear} if for any $e,f \in E(H)$, $|e\cap f|\leq 1$. 

A \textit{path} in a hypergraph $H$ of length $k$ is a sequence $(v_1,e_1,v_2, e_2, \ldots, v_k, e_k, v_{k+1})$ of distinct vertices and edges such that $v_i, v_{i+1} \in e_i$ for each $i \in [k]$.
A \textit{cycle} in a hypergraph $H$ of length $k$ is a path of length $k$ with $v_{k+1} = v_1$. 
The \textit{girth} of a hypergraph $H$ is the length of the shortest cycle in $H$.
We say a hypergraph $H$ is \textit{connected} if every pair of vertices $u,\,v \in V(H)$ can be connected by a path, and \textit{acyclic} if it contains no cycles.
A \textit{hyperforest} is an acyclic graph and a \textit{hypertree} is a connected hyperforest (as analogously defined for graphs).
It is not too difficult to see that hyperforests are linear as well.

For a distribution $\mathcal{P}$, we let $\E_{\mathcal{P}}[\cdot]$ and $\Var_{\mathcal{P}}(\cdot)$ denote the expectation and variance, respectively.
We drop the subscript when $\mathcal{P}$ is clear.
We let $\mathsf{Unif}[0,1]$ denote the uniform distribution over the closed interval $[0,1]$, and  $\mathsf{Unif}[0,1]^n$ denote the $n$-dimensional vector whose entries are i.i.d $\mathsf{Unif}[0,1]$.
Let $\mathsf{Ber}(p)$ denote the Bernoulli distribution with parameter $p \in [0, 1]$, and $\mathsf{Pois}(d)$ denote the Poisson distribution with parameter $d>0$.

\subsection{Local Algorithms on Hypergraphs}\label{subsection: local}

In this section, we define some terminology pertaining to local algorithms on hypergraphs, which will be important in the proof of Theorem~\ref{theorem: low-deg hypergraph achievability}.
We will consider locally finite hypergraphs $H$, i.e. each vertex has a finite number of edges containing it. 

For any integer $s \geq 0$ and $v\in V(H)$, let $N_{s}(H,v)$ denote the the rooted hypergraph with root $v$ and its depth-$s$ neighborhood in $H$, and let $|N_{s}(H,v)|$ denote the number of hyperedges in $N_{s}(H,v)$. 
An \textit{$s$-local} algorithm for finding an independent set in a hypergraph $H$ is defined by a measurable $s$-local function $g = g(H,v, \mathbf{X})$.
Here, the input is a rooted hypergraph $H$ with root $v$ and depth at most $s$, and a vector $\mathbf{X} \in [0,1]^{|V(H)|}$.
The output of the function is $0$ (`out') or $1$ (`in') and is invariant under labeled isomorphisms of $N_{s}(H,v)$. 
We apply an $s$-local function $g$ to a hypergraph $H$ by assigning each vertex $v$ a label $\mathbf{X}_v \sim \textsf{Unif}[0,1]$ independently, and then evaluating $g(\cdot)$ on $N_{s}(H,v)$ and the restriction of the labels $\mathbf{X}$ to $V\left(N_{s}(H,v)\right)$.
We say an $s$-local function $g$ is valid if for any hypergraph $H$, and any label $\mathbf{X} \in [0,1]^{|V(H)|}$, the set $\set{v\,:\,g(H,v,\mathbf{X}) = 1}$ is an independent set in $H$.
A local algorithm is a valid $s$-local function for some constant $s$. 

We measure the performance of local algorithms for independent sets in random hypergraphs by the typical size or density of an independent set returned by the algorithm, with high probability over both the random graph and the random labels $\mathbf{X}$.

As mentioned in \S\ref{subsection: proof overview}, we will consider two kinds of hypertrees in our proof of Theorem~\ref{theorem: low-deg hypergraph achievability}.
First, we describe the $r$-uniform Poisson$(d)$ Galton-Watson Hypertree (denoted $\PGWLT$) introduced by Pal and Zhu in \cite{pal2021community}.
The process generates a rooted $r$-uniform hypertree $(T,o)$ as follows:
\begin{itemize}
    \item start with a root vertex $o$ at level $0$.
    \item For $k = 0,1,2, \ldots,$ each vertex at level $k$ independently spawns $\mathsf{Pois}(d)$ offspring edges that pairwise intersect only at the parent vertex.
    Note that for each vertex at level $k$, we have $(r-1)\mathsf{Pois}(d)$ children vertices at level $k+1$.
\end{itemize}

It is a well-known fact that sparse random graphs are locally tree-like.
The same holds for hypegraphs\footnote{To the best of our knowledge, no proof of this fact exists in the literature, however, the proof is nearly identical to the graph case. For completeness, we include such a proof in \S\ref{section: tree like}.} and so the performance of a local algorithm on $\Hrnp$ is determined, up to the first order, by the expectation of the corresponding local function $g$ evaluated at the root of $\PGWLT$. 
More precisely, we have the following lemma:

\begin{lemma}\label{Lemma:local_tree_np}
    Suppose $g$ is a valid $s$-local function for independent sets. Let $I$ be the random independent set obtained by applying $g$ to $H \sim \Hrnp$. Then for $T \sim \PGWLT$  with high probability over the randomness in $H$ and the random label $\mathbf{X} \sim \mathsf{Unif}[0,1]^n$,
    \[
        \frac{|I|}{n} = \E[g(T,o,\mathbf{X})]+o_n(1).
    \]
\end{lemma}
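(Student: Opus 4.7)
The plan is a first-moment plus variance argument of the kind used by Wein \cite{wein2022optimal} in the graph case. Write $p_n \coloneqq \E[g(H, v, \mathbf{X})]$ (which is independent of $v$ by vertex symmetry of $\Hrnp$) and $p \coloneqq \E[g(T, o, \mathbf{X})]$. Since the output of $g$ is $\{0,1\}$-valued, $|I| = \sum_v g(H, v, \mathbf{X})$ and so $\E[|I|/n] = p_n$. The task therefore reduces to proving $p_n = p + o(1)$ together with concentration of $|I|/n$ around $p_n$.

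\textbf{First moment.} The key input is that the depth-$s$ rooted neighborhood $N_s(H, v)$ converges in total variation to the depth-$s$ truncation of $\PGWLT$ at rate $O(1/n)$. Indeed, with $p = d/\binom{n-1}{r-1}$, the number of edges at any fixed vertex is Binomial with mean $d$, which couples with $\mathsf{Pois}(d)$ at rate $O(1/n)$. Exploring layer by layer, only $O_{s, d, r}(1)$ vertices are visited up to depth $s$, so a union bound preserves the $O(1/n)$ coupling error; similarly, cycles of length $\leq s$ within this neighborhood appear with probability $O(1/n)$, so the exploration is a hypertree with high probability. Since $g$ is bounded and $s$-local, and the labels are i.i.d.\ $\mathsf{Unif}[0,1]$ at each vertex, this gives $p_n = p + O(1/n)$.

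\textbf{Variance.} Expanding,
\[
    \Var(|I|) \,=\, \sum_{u, v \in [n]} \cov\bigl(g(H, u, \mathbf{X}),\, g(H, v, \mathbf{X})\bigr).
\]
If $V(N_s(H, u)) \cap V(N_s(H, v)) = \emptyset$, then $g(H, u, \mathbf{X})$ and $g(H, v, \mathbf{X})$ are independent, so the covariance contribution vanishes. This follows from a peeling argument: reveal the depth-$s$ exploration from $u$ first; conditioned on the vertices and edges seen, the hypergraph on the complementary vertex set is still distributed as $\mathcal{H}_r(\cdot, p)$ by independence of edges, and $g(H, v, \mathbf{X})$ is then determined by this fresh randomness together with the i.i.d.\ labels on fresh vertices, which are independent of everything on the $u$-side. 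On the complementary event the covariance is bounded by $1$ since $g \in \set{0, 1}$, and this event forces $v \in V(N_{2s}(H, u))$, so the number of contributing pairs is at most $\sum_u |V(N_{2s}(H, u))|$, whose expectation is $n \cdot O_{d, r, s}(1)$ by the Poisson-like degree bound in $\Hrnp$. Hence $\Var(|I|/n) = O(1/n)$, and Chebyshev's inequality combined with $p_n = p + o(1)$ yields the claimed concentration.

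The main technical obstacle is making the conditional-independence step in the variance bound fully rigorous; this is a routine but slightly fiddly peeling computation that exploits the hypertree structure of the depth-$s$ exploration together with the independence of edges in $\Hrnp$. The first-moment convergence can alternatively be cited from the local weak convergence statements in \cite{pal2021community}, in which case only the variance bound requires direct work.
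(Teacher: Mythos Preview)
Your proposal is correct and follows essentially the same approach as the paper's (very brief) proof: compute $\E[|I|] = np + o(n)$ via local weak convergence to $\PGWLT$, bound $\Var(|I|) = O(n)$ using that only $O(n)$ pairs $(u,v)$ have overlapping $s$-neighborhoods, and conclude by Chebyshev. Your sketch is in fact considerably more detailed than the paper's two-sentence argument, and you correctly flag the one genuinely fiddly point (making the conditional-independence step in the variance bound rigorous via exploration).
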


To assist with the proof of the above result, we will employ the following proposition.

\begin{Proposition}\label{prop: local alg moment inequality}
    Fix some integer $s$ and $r \geq 2$. 
    Let $H \sim \Hrnp$ for $p = d/{n-1 \choose r-1}$ where $d \geq 1$ and let $\mathbf{X} \sim \mathsf{Unif}[0,1]^n$.
    Let $g$ be an $s$-local function taking values in $\set{0, 1}$.
    We have
    \[\E\left[\bigg|\sum_{v \in [n]}g(H, v,\mathbf{X}) - \E\sum_{v\in [n]}g(H, v,\mathbf{X})\bigg|^2\right] = O(n)\]
\end{Proposition}

\begin{proof}
    Fix some order on the vertices of $H$.
    Let $E_x$ be set $\set{e\in K_n^{r}\,:\, x = \min e}$.
    Note that the random variables $(E_x)_{x\in [n]}$ are independent.
    Furthermore, there is a function $F$ such that
    \[\sum_{v \in [n]}g(H, v,\mathbf{X}) = F(E_1, \ldots, E_n).\]
    Let $Y$ denote this sum and let $Y_x = F(E_1, \ldots, E_{x-1}, \emptyset, E_{x+1}, \ldots, E_n)$.

    Equivalently, if $H_{x}$ indicates the subhypergraph of $H$ where all hyperedges containing $x$ as the minimal vertex have been deleted, we have
    \[
        Y_{x} = \sum_{v \in [n]}g(H_{x},v,\mathbf{X}).
    \]
    Since $(E_x)_{x \in [n]}$ are independent and $0 \le g(H,v,\mathbf{X}) \leq 1$ for all $v$, we can apply the moment inequality \cite[Theorem 15.5]{boucheron2013concentration}: there exists some constant $c_1 > 0$ such that
    \[           \mathbb{E}\left[\Bigg|\sum_{v} g(H, v,\mathbf{X}) - \mathbb{E} \sum_{v} g(H, v,\mathbf{X})\Bigg|^2\right] \leq c_1\mathbb{E}\left[\sum_{x} (Y-Y_x)^2\right].
    \]

    Now fix vertices $v$ and $x$ of $H$. 
    Since $g$ is $s$-local, we must have $g(H, v,\mathbf{X}) - g(H_x, v, \mathbf{X})=0$ unless $v$ is in the $s$-neighborhood of $x$, denoted $N_{s}(H,x)$.
    This implies
    \begin{align*}
        |Y-Y_{x}| 
        = \left|\sum_{v\in N_{s}(H,x)} g(H, v,\mathbf{X}) - g(H_x, v, \mathbf{X})\right|
        \leq \sum_{v\in N_{s}(H,x)} g(H, v,\mathbf{X}) + g(H_x, v, \mathbf{X}) \leq 2\left|N_{s}(G,x)\right|.
    \end{align*}
    It follows that
    \begin{align*}
        \mathbb{E}\left[\sum_{x} (Y-Y_{x})^2\right] \leq 4  \mathbb{E}\left[\sum_{x}\left|N_{s}(H,x)\right|^2\right] 
        = 4n\mathbb{E}\left[\left|N_{s}(H,x)\right|^2\right]
    \end{align*}
    Note that $|N_s(H, x)| \mid |N_{s-1}(H, x)| \sim (r-1)|N_{s-1}(H, x)|\mathsf{Pois}(d)$.
    By repeated applications of the tower property of expectation, one can show that there exists some constant $c_2 \coloneqq c_2(r,s)>0$ such that 
    \[
        \E\left[\left|N_{s}(H,x)\right|^2\right] \leq c_2 d^{2s}.
    \]
    Putting everything together, we have
    \begin{align*}
        \mathbb{E}\left[\Bigg|\sum_{v \in R} g_{r}(H_v,\mathbf{x}) - \mathbb{E} \sum_{v \in R} g_{r}(H_v,\mathbf{x})\Bigg|^2\right] \leq 4c_1c_2d^{2s} n,
    \end{align*}
    as desired.
\end{proof}

We note that Wein employed a version of the above result for graphs \cite[Corollary 3.5]{wein2022optimal}, which he obtained as a corollary to a special case of a more general statement on graph functionals \cite[Proposition 12.3]{bordenave2023detection}.
With this in hand, we are ready to prove Lemma~\ref{Lemma:local_tree_np}.

\begin{proof}[Proof of Lemma~\ref{Lemma:local_tree_np}]
    Utilizing Proposition~\ref{prop:ER_Local_convergence_PW_tree}, the local weak convergence of $\Hrnp$ to $\PGWLT$, it is easy to deduce that the expectation of $|I|$ is $\E[g(T,o,\mathbf{X})]n+o(n)$.
    Furthermore, Proposition~\ref{prop: local alg moment inequality} implies that the variance is $O(n)$.
    The claim now follows by Chebyshev's inequality.
\end{proof}
    Observe that when applying an $s$-local algorithm $g$ to $T\sim\PGWLT$ with vertex labels $\mathbf{X}$, the distribution of $g(T, v, \mathbf{X})$ does not depend on the vertex $v$, since the distribution of $N_{s}(T,v)$ does not depend on the choice of $v$ in $T$. 
    Hence we can define the density of the independent set $I$ obtained by applying $g$ to $T$ as $\mathsf{density}(I) = \P[h(T, o, \mathbf{X})=1] = \E[h(T, o, \mathbf{X})]$.
    To assist with our proofs, we define
    \begin{align}\label{Def:optimal__ind_density_GWtree}
    \alpha_{r}(d) \coloneqq \left(\frac{d}{\log d}\right)^{\frac{1}{r-1}}\,\sup\left\{\E[g(T,o,\mathbf{X})] \,:\, s \geq 0,\, g \text{ is a valid } s\text{-local function}\right\}.
\end{align}
By Lemma~\ref{Lemma:local_tree_np}, $\alpha_{r}(d)\left(\frac{\log d}{d}\right)^{\frac{1}{r-1}}$ is the optimal density (up to the first order) of an independent set that a local algorithm can find in $\Hrnp$ with high probability.

The second hypertree we consider is the the rooted $r$-uniform $\Delta$-regular hypertree (denoted $T_{\Delta}^r$) with root $o$. 
Again, observe that when applying an $s$-local algorithm $h$ to $T_{\Delta}^r$ with vertex labels $\mathbf{X}$, the distribution of $h(T_{\Delta}^r, v, \mathbf{X})$ does not depend on the vertex $v$. 
Hence we can define the density of the independent set $I$ obtained by applying $h$ to $T_{\Delta}^r$ as $\mathsf{density}(I) = \P[h(T_{\Delta}^r, o, \mathbf{X})=1] = \E[h(T_{\Delta}^r, o, \mathbf{X})]$. 
Therefore, we can define the following parameter similar to \eqref{Def:optimal__ind_density_GWtree}:
\begin{align} \label{Def:optimal_ind_density_regtree}
    \alpha(r,\Delta) \coloneqq \left(\frac{\Delta}{\log \Delta}\right)^{\frac{1}{r-1}}\,\sup\left\{\E[h(T_\Delta^r,o,\mathbf{X})] \,:\, s \geq 0,\, h \text{ is a valid } s\text{-local function}\right\}.
\end{align}

\subsection{Interpolating Paths and Stable Algorithms}

We write $A \sim \Hrnp$ (resp. $A \sim \Hrrnp$) to denote a random vector $A \in \{0,1\}^m$ for $m={n \choose r}$ (resp. $m = n^r$), with i.i.d $\mathsf{Ber}(p)$ entries.
As mentioned in \S\ref{subsection: proof overview}, the proof of our impossibility results will involve analyzing a sequence of correlated hypergraphs, which we will refer to as interpolation paths.
Let us define these for each of the random hypergraph models we consider.

\begin{Definition}[Interpolation Paths for $\Hrnp$ and $\Hrrnp$]\label{definition: interpolation path}
    Let $T \in \N$ and let $m = \binom{n}{r}$ (resp. $m = n^r$).
    We construct a length $T$ interpolation path $A^{(0)}, \ldots, A^{(T)}$ as follows:  $A^{(0)} \sim \Hrnp$ (resp. $\Hrrnp$); then for each $1 \leq t \leq T$, $A^{(t)}$ is obtained from $A^{(t-1)}$ by resampling the coordinate $\sigma(t) \in [m]$ from $Ber\left(d/{n-1 \choose r-1}\right)$ (resp. $Ber\left(d/n^{r-1}\right)$). 
    Here $\sigma(t) = t-k_t m$ where $k_t$ is the unique integer satisfying $1 \leq \sigma(t) \leq m$. 
\end{Definition}

Consider the hypercube $Q_m$ with vertex set $\{0,1\}^m$ and edges of the form $uv$ where $u$ and $v$ differ in exactly one coordinate.
Interpolation paths can therefore be viewed as random walks on $Q_m$ (for the appropriate value of $m$).
We define $(D, \Gamma, c)$-stable algorithms with respect to this random walk (the distribution $\mathcal{P}$ below is either $\Hrnp$ or $\Hrrnp$ with the edge probability  $p$ set to $d/{n-1 \choose r-1}$ or $d/n^{r-1}$, respectively).

\begin{Definition} \label{Def:c-bad}
    Let $f : \{0,1\}^m \to \R^{n}$ and $c >0$. 
    An edge $uv$ of the hypercube $Q_m$ is $c$-bad for $f$ if
    \[
        \|f(u)-f(v)\|^2 \geq c \cdot \mathbb{E}_{A\sim \mathcal{P}}\left[\|f(A)\|^2\right]
    \]
\end{Definition}

\begin{Definition}\label{definition: (D, G, c)-stable}
    Let $\Gamma \in \N$ and $c > 0$. 
    Let $A^{(0)}, \ldots, A^{(T)}$ be the interpolation path from Definition~\ref{definition: interpolation path} of length $T = \Gamma m$, with appropriate $p \leq 1/2$.
    A degree-$D$ polynomial $f : \{0,1\}^m \to \R^{n}$ is $(D, \Gamma, c)$-stable if
    \[
        \P\left[\text{no edge of } A^{(0)}, \ldots, A^{(T)} \text{ is $c$-bad for $f$}\right] \geq p^{4\Gamma D/c}.
    \]
\end{Definition}

The following lemma will be crucial to proving our impossibility results.
The proof follows an identical argument to that of \cite[Lemmas~2.7 and 2.8]{wein2022optimal} on graphs and so we omit it here.

\begin{lemma}\label{lemma: stable}
    For any constants $\Gamma \in \N$ and $c > 0$, any degree-$D$ polynomial $f$ on $A\sim \Hrnp$ or $A \sim \Hrrnp$ is $(D, \Gamma, c)$-stable.
\end{lemma}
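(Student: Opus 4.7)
The plan is to observe that the stability notion $(D,\Gamma,c)$-stable refers only to a random walk on the hypercube $Q_m$ driven by resampling iid $\bern(p)$ coordinates, together with a polynomial $f\colon\{0,1\}^m\to\R^n$. Neither the hypergraph structure of $\Hrnp$ nor the multipartite structure of $\Hrrnp$ enters; only the ambient dimension $m = \binom{n}{r}$ or $n^r$ differs between the two models, and this parameter cancels from the final exponent. The proof therefore reduces to a model-free Boolean analysis statement, exactly the content of \cite[Lemmas~2.7 and~2.8]{wein2022optimal}, and I will only sketch how the ingredients combine.

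First, I would expand each $f_j$ in the orthonormal $p$-biased Fourier basis $\chi_S := \prod_{i\in S}(A_i-p)/\sqrt{p(1-p)}$ as $f_j = \sum_{|S|\leq D}\hat f_{j,S}\chi_S$, set $M := \E\|f(A)\|^2$, and define the influence $v_i := \sum_j\sum_{S\ni i}\hat f_{j,S}^2$. The degree bound gives $\sum_i v_i \leq DM$ by counting $|S|$. The discrete derivative $Y_i(A) := f(A|_{A_i=1}) - f(A|_{A_i=0})$ depends only on $A_{-i}$, satisfies $\|f(u)-f(u\oplus e_i)\|^2 = \|Y_i(u_{-i})\|^2$, and has $\E\|Y_i\|^2 = v_i/(p(1-p))$ by the Fourier expansion.

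Let $\F_{t-1}$ denote the $\sigma$-field generated by $A^{(0)},\ldots,A^{(t-1)}$ and set $X_t := \1[\|Y_{\sigma(t)}(A^{(t-1)})\|^2 \geq cM]$. Step $t$ contributes a $c$-bad edge to the path only if both the resample flips bit $\sigma(t)$ and $X_t=1$; at stationarity the flip indicator has probability $2p(1-p)$ and is independent of $X_t$. Combining Markov on $\|Y_i\|^2$ with the influence bound and the fact that each coordinate is resampled $\Gamma$ times over the $T=\Gamma m$ steps yields $\E[\sum_t X_t]\leq \Gamma D/(p(1-p)c)$. A careful sequential conditioning along the walk, together with the elementary estimate $1-x\geq e^{-2x}$ on $[0,1/2]$ and Jensen's inequality, then produces
\[
\P[\text{no $c$-bad edge in the path}] \;\geq\; \exp\!\Big(-4p(1-p)\,\E\big[\textstyle\sum_t X_t\big]\Big) \;\geq\; e^{-4\Gamma D/c} \;\geq\; p^{4\Gamma D/c},
\]
where the last inequality uses $p\leq e^{-1}$, automatic in the sparse regimes $p = d/\binom{n-1}{r-1}$ and $p = d/n^{r-1}$ for large $n$.

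The hard part will be the sequential conditioning, since the $X_t$'s at different $t$ are correlated through the shared evolution of $A^{(t-1)}$; Wein executes this carefully in the graph case, and because his manipulations touch only the product-Bernoulli structure on $\{0,1\}^m$, the argument transfers verbatim to the hypergraph setting.
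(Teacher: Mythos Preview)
Your proposal is correct and matches the paper's own approach: the paper omits the proof entirely, noting that it follows identically from \cite[Lemmas~2.7 and~2.8]{wein2022optimal} since the argument depends only on the product-$\bern(p)$ structure of $\{0,1\}^m$ and not on any hypergraph-specific features. Your sketch accurately reproduces the Fourier-analytic influence bound and the sequential conditioning from Wein's paper, and your observation that the ambient dimension $m$ cancels is exactly why the graph proof transfers verbatim.
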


\section{Independent Sets in $\Hrnp$}

In this section, we will consider the random $r$-uniform hypergraph $\Hrnp$ for $p = d/{n-1 \choose r-1}$ (see Definition~\ref{definition: models}).
We split this section into three subsections.
First, we prove the achievability result (Theorem~\ref{theorem: low-deg hypergraph achievability}).
In the second subsection, we prove a version of the \textit{Overlap Gap Property} for independent sets in $\Hrnp$.
In the final subsection, we apply this result to prove intractability of low-degree algorithms for independent sets in the stated regime.

\subsection{Proof of the Achievability Result}

In this section we prove Theorem~\ref{theorem: low-deg hypergraph achievability}, which shows that there exist low-degree algorithms that can find independent sets of density $(1-\varepsilon)\left(\frac{1}{r-1}\cdot \frac{\log d}{d}\right)^{\frac{1}{r-1}}$ in $\Hrnp$ for $p = d/\binom{n-1}{r-1}$. 
We further split this subsection into two parts.
First, we show how local algorithms for finding independent sets in $T_\Delta^r$ can be adapted to find independent sets in $\PGWLT$.
Next, we show how to describe a low-degree algorithm for finding independent sets in $\Hrnp$ from a local algorithm for finding independent sets in $\PGWLT$.

\subsubsection{Local Algorithms: from $T_\Delta^r$ to $\PGWLT$}

In this section, we prove that there exists a local algorithm that outputs an independent set of the desired density on $\PGWLT$, which is crucial for the construction of our low-degree polynomial in the sequel.
First, we prove the following theorem.

\begin{Theorem}\label{Theorem:local_achievability}
    For any $\varepsilon>0$ and any sufficiently large $d = d(\varepsilon) > 0$, there exists $s \coloneqq s(\varepsilon,d)$ and a valid $s$-local function $g$ such that for $(T,o) \sim \PGWLT$ and i.i.d $\mathsf{Unif}[0,1]$ labels $\mathbf{X}$ over its vertices, $\E[g(T,o,\mathbf{X})] \geq (1-\varepsilon)\left(\frac{1}{r-1}\cdot \frac{\log d}{d}\right)^{\frac{1}{r-1}}$.
\end{Theorem}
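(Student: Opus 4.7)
My plan is to reduce the statement to the analogous result on the $\Delta$-regular hypertree $T_\Delta^r$ due to Nie and Verstra\"{e}te \cite{nie2021randomized}: for any $\varepsilon' > 0$ and $\Delta$ sufficiently large there is an $s'$-local function $h$ on $T_\Delta^r$ with
\[
\E[h(T_\Delta^r,o,\mathbf{X})] \;\geq\; (1-\varepsilon')\left(\frac{1}{r-1}\cdot\frac{\log \Delta}{\Delta}\right)^{1/(r-1)}.
\]
Choosing $\Delta = \Delta(d)$ close to $d$ translates this density bound into the target bound on $\PGWLT$ up to a $1-o_d(1)$ factor (since $(\log \Delta/\Delta)^{1/(r-1)} = (1-o_d(1))(\log d/d)^{1/(r-1)}$ whenever $\Delta/d \to 1$), so the task reduces to transferring $h$ from the regular hypertree to the Galton--Watson setting with negligible density loss.

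For this I would use a padding construction. Set $\Delta = \lceil d + d^{2/3}\rceil$. By a Chernoff estimate for $\mathsf{Pois}(d)$ together with a union bound over the $O(d^{s'})$ vertices in the depth-$s'$ neighborhood of $o$, the event $\mathcal{G}$ that every such vertex of $T \sim \PGWLT$ has at most $\Delta$ offspring edges satisfies $\P[\mathcal{G}] = 1 - o_d(1)$. On $\mathcal{G}$ I pad $T$ into a $\Delta$-regular rooted hypertree $T''$ by appending, at each underfilled vertex, enough dummy offspring edges (containing fresh dummy vertices with a canonical ``out-of-range'' marker label placed outside $[0,1]$) to bring the offspring count to exactly $\Delta$; the padding is cascaded down so that $T''$ is $\Delta$-regular throughout the depth-$s'$ neighborhood of $o$. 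I then set $s := s'$ and define $g(T,o,\mathbf{X})$ to output $h(T'',o,\mathbf{X})$ on $\mathcal{G}$ and $0$ otherwise. Since $E(T)\subseteq E(T'')$, any independent set of $T''$ produced by $h$ is automatically independent in $T$.

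The density analysis then factors as
\[
\E[g(T,o,\mathbf{X})] \;\geq\; \P[\mathcal{G}] \cdot \E\bigl[h(T'',o,\mathbf{X}) \bigm| \mathcal{G}\bigr],
\]
so it suffices to lower bound the conditional expectation by $(1-o_d(1))\E[h(T_\Delta^r,o,\mathbf{X})]$. Because dummy vertices carry the marker label, the algorithm $h$ never selects them for the IS, so their only effect is to weaken the local constraints imposed on non-dummy neighbors. For a simple threshold rule this shows immediately that the root's inclusion probability can only increase upon adding dummies; for the Nie--Verstra\"{e}te algorithm, which is an iterated thresholding procedure, I expect a level-by-level monotonicity argument to give the same conclusion up to a perturbation that vanishes as $d \to \infty$. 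Combined with the $(1-\varepsilon')$ factor inherited from $h$ (taking $\varepsilon' = \varepsilon/3$), this will yield the target bound $(1-\varepsilon)\left(\frac{1}{r-1}\cdot\frac{\log d}{d}\right)^{1/(r-1)}$ for all sufficiently large $d$.

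The main obstacle is precisely this monotonicity claim: transferring the Nie--Verstra\"{e}te guarantee from $T_\Delta^r$ (all labels i.i.d.\ uniform) to $T''$ (some labels clamped at the marker value) in a black-box way. If $h$ does not behave monotonically under label clamping, the fallback would be to build an alternative local algorithm directly on $\PGWLT$ by porting the Nie--Verstra\"{e}te recursion from a $\Delta$-regular fixed-point equation to a $\mathsf{Pois}(d)$-averaged one, and analyzing the resulting density threshold via the same variational argument; this avoids padding altogether at the cost of re-running the combinatorial optimization that underlies Nie--Verstra\"{e}te's bound.
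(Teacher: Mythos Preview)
Your strategy—transfer Nie--Verstra\"{e}te from $T_\Delta^r$ to $\PGWLT$ by padding, with $\Delta = d + d^{\alpha}$ for some $\alpha \in (1/2,1)$—is the paper's as well. The genuine gap is exactly the one you flag: the monotonicity of $h$ under out-of-range marker labels is neither proved nor obviously true, and $h$ is not even defined on such inputs. The paper sidesteps this entirely by relabeling the \emph{entire} padded tree with fresh i.i.d.\ $\mathsf{Unif}[0,1]$ labels $\mathbf{X}'$ (independent of $\mathbf{X}$), rather than keeping the original labels on real vertices and inventing markers for dummies. Since the padded tree is $\Delta$-regular and its labels are now genuinely i.i.d.\ uniform, $h$ sees exactly the distribution of $(T_\Delta^r,o,\mathbf{Y})$, so the root's inclusion probability equals $\E[h(T_\Delta^r,o,\mathbf{Y})]$ on the nose—no monotonicity needed, and no need for your fallback of re-deriving the Nie--Verstra\"{e}te bound over $\mathsf{Pois}(d)$ offspring.

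Two further differences worth noting. First, because the paper throws away the original labels when running $h$, it needs a separate mechanism to ensure the output is independent in $T$: it uses $\mathbf{X}$ only for an \emph{edge-removal} step that prunes every vertex of degree $>\Delta$ down to exactly $\Delta$ edges, then pads the resulting forest, runs $h$ with $\mathbf{X}'$, and finally discards from the output any vertex that had an incident edge removed. Second, this edge-removal trick makes the density loss factor equal to $\P[E(d,\Delta)]$, the probability that the root \emph{and its immediate neighbors} have degree $\le \Delta$—a depth-$1$ event, independent of the locality $s'$ of $h$. Your conditioning on the depth-$s'$ event $\mathcal{G}$ would also work once you swap markers for fresh uniforms, but it ties the loss to $s'$ (which may itself depend on $\Delta$), whereas the paper's bound is uniform in $s'$.
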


We start by describing the following random greedy algorithm analyzed in \cite{nie2021randomized}, which generates an independent set in an $r$-uniform $\Delta$-regular hypergraph $H$ with large girth (which implies the case of $T_\Delta^r$):
\begin{itemize}
    \item Equip each vertex $v$ with i.i.d. labels $\mathbf{X}_v \sim \mathsf{Unif}[0,1]$.
    \item Iteratively select the vertex with the largest label of the remaining vertices of $H$, and add it to the independent set.
    Next, remove any remaining vertices that form an edge with the selected vertices, and repeat until no vertices remain.
\end{itemize}
We remark that while this algorithm as described is not local, it can be simulated by one that is (this can be inferred by the discussion in \cite[\S2]{nie2021randomized}).
Nie and Verstra\"ete~\cite[Theorem 4]{nie2021randomized} showed that the random greedy algorithm outputs an independent set of density $(1-\varepsilon)\left(\frac{1}{r-1}\cdot \frac{\log \Delta}{\Delta}\right)^{1/r-1}$ on $T_\Delta^r$.

Following an idea of Rahman and Vir\'ag in \cite{rahman2017local}, we will show that a local algorithm finding an independent set in $T_\Delta^r$ can be adapted to construct an independent set in $\mathbb{T}_d^{GW}$ as well. 
We remark that the parameter $d$ we choose will not be equal to $\Delta$, however, it will be chosen such that $\log d/d = (1\pm o_d(1))\log \Delta/\Delta$.
In particular, the independent set output by our local algorithm on $\PGWLT$ will have the same density as the one in $T_{\Delta}^r$ for $d$ sufficiently large.
Let $E(d,\Delta)$ denote the event that the root of $\mathbb{T}_d^{GW}$ and all of its neighbors have degree at most $\Delta$.
With this definition in hand, we prove the following proposition:

\begin{Proposition}\label{Prop:h_regular_to_g_GW}
    Given a local algorithm $h$ to find an independent set in $T_\Delta^{r}$, there exists a local algorithm $g$ satisfying the following:
    \[
        \E_{\mathbf{Y}}[h(T_\Delta^{r},o,\mathbf{Y})] \, \P[E(d,\Delta)] \,\leq\, \E[g(T,o,\mathbf{X})] \,\leq\,  \E_{\mathbf{Y}}[h(T_\Delta^{r},o,\mathbf{Y})].
    \]
    for any $T \sim \mathbb{T}_d^{GW}$ and i.i.d $\mathsf{Unif}[0,1]$ vertex labels $\mathbf{X},\, \mathbf{Y}$.
\end{Proposition}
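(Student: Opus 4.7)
The plan, following Rahman and Vir\'ag, is to define $g$ by applying $h$ to a canonical $\Delta$-regular completion $\tilde T$ of the Galton--Watson hypertree $T$ (with auxiliary i.i.d.\ uniform labels filling in the phantom positions) and using an indicator to ensure the output is an independent set of $T$ itself.

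Let $s$ be the locality of $h$, and augment $g$'s internal randomness by an auxiliary i.i.d.\ $\mathsf{Unif}[0,1]$ field indexed by canonical slots in an idealized $T_\Delta^r$. Given $(T,v,\mathbf{X})$, I build a rooted $\Delta$-regular $r$-uniform hypertree $\tilde T$ from $T$ as follows: at every real vertex $u$, if $\deg_T(u)\le\Delta$ I append $\Delta-\deg_T(u)$ canonical phantom offspring edges (together with phantom $\Delta$-regular subtrees) to raise $u$'s degree in $\tilde T$ to exactly $\Delta$; if $\deg_T(u)>\Delta$ I discard all of $u$'s real offspring edges in $T$ and replace them with $\Delta$ (or $\Delta-1$ for non-root) canonical phantom offspring edges and phantom subtrees. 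Let $\tilde{\mathbf{X}}$ combine $\mathbf{X}$ on the real vertices with the auxiliary field on the phantom ones, and define
\[
g(T,v,\mathbf{X})\;:=\;h(\tilde T,v,\tilde{\mathbf{X}})\cdot\mathbf{1}_{\{\deg_T(v)\le\Delta\}}.
\]
Since the depth-$s$ neighborhood of $v$ in $\tilde T$ is a local function of the depth-$s$ neighborhood of $v$ in $T$ and of the auxiliary field, $g$ is $s$-local.

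For the independent-set property, if $g(T,u,\mathbf{X})=1$ for every $u$ in some edge $e\in E(T)$, then $\deg_T(u)\le\Delta$ for each $u\in e$; in particular the parent vertex of $e$ is not truncated, so $e\in E(\tilde T)$. But then $h$ selects every vertex of $e$ in the $\Delta$-regular hypertree $\tilde T$, contradicting the fact that $h$ produces an independent set there. For the expectation bounds, by induction on depth the canonical construction makes $(\tilde T,\tilde{\mathbf{X}})$ restricted to the depth-$s$ neighborhood of $v$ equal in law to the depth-$s$ neighborhood of the root of $(T_\Delta^r,\mathbf{Y})$ with i.i.d.\ $\mathsf{Unif}[0,1]$ labels: the combinatorial skeleton is deterministically $\Delta$-regular, and each offspring slot carries a uniform label regardless of whether it is filled by a real offspring of $T$ (labeled by $\mathbf{X}$) or by a phantom (labeled by the auxiliary field). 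Moreover this law is independent of the event $\{\deg_T(v)\le\Delta\}$, which only influences the (invisible to $h$) real/phantom partition of $v$'s offspring. Consequently
\[
\E[g(T,o,\mathbf{X})]\;=\;\E_{\mathbf{Y}}[h(T_\Delta^r,o,\mathbf{Y})]\cdot\P[\deg_T(o)\le\Delta]\;\ge\;\E_{\mathbf{Y}}[h(T_\Delta^r,o,\mathbf{Y})]\cdot\P[E(d,\Delta)],
\]
using $E(d,\Delta)\subseteq\{\deg_T(o)\le\Delta\}$ for the inequality; the upper bound $\E[g(T,o,\mathbf{X})]\le\E_{\mathbf{Y}}[h(T_\Delta^r,o,\mathbf{Y})]$ follows by dropping the indicator.

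The main obstacle is the inductive verification that the canonical completion really produces $\tilde T$ equal in distribution to $T_\Delta^r$ with i.i.d.\ uniform labels at the depth-$s$ level. This reduces, at each vertex, to checking that $u$'s truncated-Poisson real offspring (with labels from $\mathbf{X}$) together with the complementary phantom offspring (with labels from the auxiliary field) combine into $\Delta-1$ (or $\Delta$ at the root) i.i.d.\ uniformly-labeled offspring slots on a $\Delta$-regular skeleton; this is the $r$-uniform adaptation of the coupling argument in \cite{rahman2017local}.
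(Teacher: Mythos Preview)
Your coupling argument for the expectation bounds is correct, but the verification of the independent-set property has a real gap. When you write that $h$ ``selects every vertex of $e$ in the $\Delta$-regular hypertree $\tilde T$'', you are implicitly treating $\tilde T$ as a single labeled tree; but in your construction the completion $\tilde T=\tilde T_v$ and its labeling $\tilde{\mathbf X}=\tilde{\mathbf X}_v$ depend on the evaluation vertex $v$, because the phantom labels are read off from the auxiliary field via canonical positions \emph{relative to the root}. For an edge $e=\{u_1,\dots,u_r\}$ with all $\deg_T(u_i)\le\Delta$, the unlabeled trees $\tilde T_{u_i}$ do coincide, but the phantom vertices lying in the overlap of the depth-$s$ neighborhoods of the $u_i$'s receive different auxiliary labels in the different views, so $h$'s independent-set guarantee (which applies to a \emph{single} labeling of $T_\Delta^r$) does not rule out $h(\tilde T_{u_i},u_i,\tilde{\mathbf X}_{u_i})=1$ for all $i$. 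A concrete failure: take $r=2$, $\Delta=2$, $s=2$, and $T$ a single edge $\{u_1,u_2\}$; then the two depth-$2$ views are the labeled paths $(\mathrm{aux}_2,\mathrm{aux}_1,\mathbf X_{u_1},\mathbf X_{u_2},\mathrm{aux}_3)$ and $(\mathrm{aux}_2,\mathrm{aux}_1,\mathbf X_{u_2},\mathbf X_{u_1},\mathrm{aux}_3)$, and the symmetrized $2$-local rule $h(a_2,a_1,v,b_1,b_2)=\mathbf 1\{a_2\in I_1,\ a_1\in I_2,\ v\in I_3,\ b_1\in I_3,\ b_2\in I_5\}$ (with $I_1,I_2,I_3,I_5$ pairwise disjoint intervals) is a valid independent-set rule on the infinite path for which both views return $1$ with positive probability.

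The paper sidesteps this by making the completion global and root-independent: it first uses $\mathbf X$ to prune edges at every high-degree vertex (yielding a forest $S$ of maximum degree $\le\Delta$), regularizes once to a single $\Delta$-regular forest $T'$, and then applies $h$ with one fresh i.i.d.\ labeling $\mathbf X'$ on all of $T'$. Now $I'=\{v:h(T',v,\mathbf X')=1\}$ is automatically independent in $T'$, and the output $J$ keeps only those $v\in I'$ at which no edge was pruned. Your approach can be repaired along similar lines by indexing the auxiliary field by the real vertex of attachment (plus a local position within the attached phantom subtree) rather than by position relative to the current root; then $\tilde T$ and $\tilde{\mathbf X}$ become root-invariant and both of your arguments go through. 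The point you flag as the main obstacle---the distributional identity for $(\tilde T,\tilde{\mathbf X})$---is in fact fine; it is the root-dependence of the phantom labeling that breaks the proof as written.
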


\begin{proof}
    Let $I$ be an independent set generated by $h$ on $T_\Delta^r$ and some i.i.d $\mathsf{Unif}[0,1]$ vertex labels, and let $T \sim \mathbb{T}_d^{GW}$.
    The local algorithm $g$, that can generate the independent set $J$ of the desired density in $T$, is described explicitly through the following three steps.
    \begin{enumerate}[label=\textbf{Step~\arabic*}:, leftmargin=0.75in, wide]
        \item\label{step:removal} \textbf{Edge removal}. 
        We remove edges from $T$ to define a hyperforest $S$ where all vertices have degree at most $\Delta$.
        Begin with a random labelling $\mathbf{X}$ of the vertices of $T$.
        Consider a vertex $v$ satisfying $\deg_T(v) > \Delta$.
        For each edge $e \ni v$, let $X_e \coloneqq \max\set{\mathbf{X}_u\,:\, u \in e\setminus\set{v}}$.
        Order the edges containing $v$ by $X_e$ and remove the edges with the $\deg_T(v) - \Delta$ largest values (break ties arbitrarily).
        It is not difficult to see that $S$ is a disjoint collection of countably many $r$-uniform hypertrees.

        \item \textbf{Regularization}.
        We describe how to construct a $\Delta$-regular hyperforest $T'$ from $S$ such that $S \subseteq T'$.
        Define the $(\Delta - 1)$-ary hypertree to be the rooted hypertree in which every vertex has exactly $\Delta-1$ offspring edges.
        Consider $r-1$ such $(\Delta-1)$-ary hypertrees with roots $u_1, u_2, \ldots, u_{r-1}$ and create an edge $\set{v, u_1, \ldots, u_{r-1}}$. 
        Repeat this process $\Delta-\deg_{S}(v)$ many times for each $v \in V(S)$.
        Randomly label $T^{'}$ by a new labeling $\mathbf{X}'$ independent of $\mathbf{X}$.

        \item \textbf{Inclusion}.
        Since $T'$ is a disjoint collection of $r$-uniform $\Delta$-regular hypertrees, we can use $h$ with input $\mathbf{X}'$ to construct an independent set $I'$ of $T'$ with the same density as $I$. 
        However, due to the removal of edges from $T$, $I'$ may no longer be independent in $T$.
        We construct the desired set $J$ from $I'$ by including all vertices $v \in I'$ such that no edges containing $v$ were removed during \hyperref[step:removal]{Step 1}.
    \end{enumerate}

     By design, $g$ is a local algorithm on $\mathbb{T}_d^{GW}$. 
     We now show that $J$ is an independent set. Suppose for contradiction that $e \subseteq J$ for some  $e \in E(T)$. Then $e \subseteq I'$ as well.
     Since $I'$ is an independent set in $T'$, $e$ must have been removed during \hyperref[step:removal]{Step 1}.
     By the construction in the last step, none of the vertices in $e$ would have been added to $J$. 
     Therefore, $J$ is indeed an independent set in $T$.

     Let us now prove the desired bounds on the density of $J$. 
     Clearly, $\mathsf{density}(J) \leq \mathsf{density}(I')$ since $J \subseteq I'$. 
     Furthermore, for any $v \in I'$, if $v$ and all of its neighbors in $T$ have degree at most $\Delta$ then none of the edges containing $v$ will be removed, which implies that $v \in J$. 
     Hence the lower bound follows by the previous observation.
\end{proof}

\begin{lemma}\label{lemma:Pro_E_lambda_d}
    If $d= \Delta-\Delta^{u}$ for any $1/2 < u < 1$ then $\P[E(d,\Delta)] \to 1$ as $\Delta \to \infty$.
\end{lemma}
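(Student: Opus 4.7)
The plan is to control $\P[E(d,\Delta)^c]$ by a union bound over the root and its (random) set of neighbors, each of which has degree distribution built from $\mathsf{Pois}(d)$; then the gap $\Delta - d = \Delta^u$ is large enough compared to the standard deviation $\sqrt{d} = O(\sqrt{\Delta})$ (since $u > 1/2$) that a Poisson Chernoff bound yields super-polynomially small tails. Specifically, I would begin by recording that in $\mathbb{T}_d^{GW}$ the root has degree $\deg(o) \sim \mathsf{Pois}(d)$, while any non-root vertex $v$ has degree $1 + M_v$ with $M_v \sim \mathsf{Pois}(d)$ independently. The root has $N := (r-1)\deg(o)$ neighbors, and given $\deg(o)$, the $\mathsf{Pois}(d)$ variables governing the offspring counts of these neighbors are independent of one another and of $\deg(o)$ (by the tree construction).

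Next I would invoke the standard Poisson tail estimate: for $X \sim \mathsf{Pois}(d)$ and any $t \geq 0$,
\[
\P[X \geq d + t] \leq \exp\!\left(-\frac{t^2}{2(d+t)}\right).
\]
Applying this with $t = \Delta^u$ (so $d + t = \Delta$) gives
\[
\P[\mathsf{Pois}(d) \geq \Delta - 1] \leq \exp\!\left(-\tfrac{1}{2}\Delta^{2u-1}(1 + o(1))\right),
\]
which is $o(\Delta^{-k})$ for every fixed $k$, because $2u - 1 > 0$.

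Then I would assemble the union bound. The root fails the degree condition with probability at most $\P[\mathsf{Pois}(d) > \Delta]$, which is bounded above by the display. For the neighbors, condition on $N$ and use independence of their offspring counts:
\[
\P[\exists\, v \sim o : \deg(v) > \Delta \mid N] \leq N \cdot \P[\mathsf{Pois}(d) \geq \Delta - 1].
\]
Taking expectations and using $\E[N] = (r-1)d \leq (r-1)\Delta$,
\[
\P[E(d,\Delta)^c] \leq \P[\mathsf{Pois}(d) > \Delta] + (r-1)\Delta \cdot \exp\!\left(-\tfrac{1}{2}\Delta^{2u-1}(1 + o(1))\right),
\]
and both terms tend to $0$ as $\Delta \to \infty$, proving the claim.

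The only delicate point is making sure the Chernoff exponent $\Delta^{2u-1}$ dominates the polynomial factor $(r-1)\Delta$ from the union bound; this is immediate from $u > 1/2$, so in effect there is no real obstacle — the whole argument is a clean concentration estimate plus a routine union bound, and the hypothesis $1/2 < u < 1$ is precisely tuned so that $\Delta^u$ exceeds the Poisson fluctuation scale $\sqrt{d}$ while still allowing $d/\Delta \to 1$ (so that $\log d / d = (1 \pm o_d(1))\log\Delta/\Delta$, which is what the application in Proposition~\ref{Prop:h_regular_to_g_GW} ultimately needs).
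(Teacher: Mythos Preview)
Your proof is correct and, at its core, reduces to the same estimate the paper needs: that $d\cdot\P[\mathsf{Pois}(d)\geq\Delta-1]\to 0$ when $\Delta-d=\Delta^{u}$ with $u>1/2$. The difference is purely in packaging. The paper computes $\P[E(d,\Delta)]$ exactly by conditioning on the root degree $X$ and then uses the Poisson moment generating function to evaluate $\E[(1-p(d,\Delta-1))^{(r-1)X}]$, arriving at the lower bound $\exp(-d(r-1)p(d,\Delta-1))-p(d,\Delta-1)$ before invoking the Poisson tail bound (citing \cite{rahman2017local}). Your union bound on the complement, together with $\E[N]=(r-1)d$, reaches the equivalent inequality $\P[E(d,\Delta)^c]\leq p(d,\Delta)+(r-1)d\cdot p(d,\Delta-2)$ in one line, bypassing the MGF computation entirely. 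Your route is more elementary and arguably the natural one; the paper's MGF step is not needed for the result, though it does give a slightly sharper intermediate bound on $\P[E(d,\Delta)]$.
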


\begin{proof} 
    Let $p(d,\Delta) \coloneqq \P[\mathsf{Pois}(d)\geq \Delta]$.     
    A direct Chernoff bound using the moment generating function of $\mathsf{Pois}(d)$ gives
    \begin{align*}
       p(d,\Delta) \leq
       p_{err} \coloneqq \exp\left\{-\Big(1+O\left(\Delta^{u-1}\right)\Big)\frac{\Delta^{2u-1}}{2}\right\}.
    \end{align*}
    By a simply union bound, with probability at least $1-r\Delta \cdot p_{err}$, the event $E(d,\Delta)$ holds. 
\end{proof}

We are now ready to prove Theorem~\ref{Theorem:local_achievability}.

\begin{proof}[Proof of Theorem~\ref{Theorem:local_achievability}]
    Given $d$, let $\Delta \coloneqq \lceil d+d^{3/4}\rceil$. 
    Recall the definitions of $\alpha_{r}(d)$ and $\alpha(r,\Delta)$ in \eqref{Def:optimal__ind_density_GWtree} and \eqref{Def:optimal_ind_density_regtree}, respectively.
    By the conclusion of Proposition~\ref{Prop:h_regular_to_g_GW}, we have
    \[
        \alpha_{r}(d)\left(\frac{\log d}{d}\right)^{\frac{1}{r-1}} \geq \alpha(r,\Delta)\left(\frac{\log \Delta}{\Delta}\right)^{\frac{1}{r-1}} \cdot \P[E(d,\Delta)].
    \]
    The random greedy algorithm discussed at the beginning of this section implies that
    \[
        \liminf_{\Delta \to \infty} \alpha(r,\Delta) \geq \left(\frac{1}{r-1}\right)^{\frac{1}{r-1}}. 
    \]
    By the choice of $\Delta$ (as a function of $d$), we have $(\log \Delta/\Delta)/(\log d/d) \to 1$ as $d \to \infty$. 
    Combining the above with Lemma~\ref{lemma:Pro_E_lambda_d}, we may conclude
    \[
        \liminf_{d \to \infty} \alpha_r(d) \geq \left(\frac{1}{r-1}\right)^{\frac{1}{r-1}},
    \]
    which completes the proof.
\end{proof}

\subsubsection{From Local to Low-Degree}

Before we prove the main result (Theorem~\ref{theorem: low-deg hypergraph achievability}), we state two useful lemmas.

Let $\mathcal{T}$ be a set of rooted $r$-uniform hypertrees consisting of one representative from each isomorphism class of rooted hypertrees of depth at most $2s$, and let $\mathcal{T}_q \subseteq \mathcal{T}$ contain only those hypertrees with at most $q$ edges.
For $H \sim \mathcal{H}_r\left(n, d/{n-1 \choose r-1}\right)$ and for $T \in \mathcal{T}$, let $n_T$ denote the number of occurrences of $T$ in a local neighborhood of $H$, that is 
\[
    n_T = \big|\{v \in [n]:N_{2s}(H,v) \cong T\}\big|.
\] 
Here, $\cong$ denotes the isomorphism of rooted hypergraphs. 
The following lemma shows that $n_T$ is well-concentrated.

\begin{lemma}\label{lemma: concentration_n_T}
    There is a constant $c \coloneqq c(r) > 0$ such that for any $t \geq (2e)^{3/2}c\sqrt{n}(2rd)^{2s}$, we have
    \[
        \P\left[\big|n_T-\E[n_T]\big| \geq t \right] \leq \exp\left(-\frac{3t^{2/3}}{2ec^{2/3}n^{1/3}(2rd)^{2s/3}}\right).
    \]
\end{lemma}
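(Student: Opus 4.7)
The plan is to express $n_T$ as a sum of locally-determined indicators and apply a concentration inequality adapted to the sparse random hypergraph setting, following the graph-case strategy of \cite{wein2022optimal}.

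First, write
\[
    n_T = \sum_{v \in [n]} X_v, \qquad X_v \coloneqq \mathbf{1}\set{N_{2s}(H,v) \cong T},
\]
so that each $X_v$ depends only on those edges of $H$ lying within distance $2s$ of $v$. The key Lipschitz observation is that if we expose the edge indicators $A_e$, $e \subseteq [n]$ with $|e|=r$, in a fixed order and form the associated Doob martingale for $n_T$, then flipping a single $A_e$ can alter $X_v$ only for $v$ in the $2s$-neighborhood $B_{2s}(H,e)$ of $e$, yielding a per-coordinate Lipschitz bound of $|B_{2s}(H,e)|$.

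Next I would control $|B_{2s}(H,e)|$ via a branching-process domination. Since each vertex of $\Hrnp$ with $p = d/\binom{n-1}{r-1}$ has expected degree $d$ and each edge contributes $r-1$ further vertices, the breadth-first exploration of $B_{2s}(H,e)$ is dominated by a multitype branching process with mean offspring of order $(r-1)d$. Standard moment estimates then give $\E\left[|B_{2s}(H,e)|^k\right] \leq (c(r)(2rd)^{2s})^k$ for a suitable $c(r)$ and moderate $k$, where the factor of $2$ absorbs Poisson-type fluctuations in the offspring distribution.

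Concentration itself may then be derived either by a truncated Azuma--Hoeffding argument (restricted to the event $\mathcal{E}_M$ that every revealed $|B_{2s}(H,e)|$ is at most some threshold $M$, with the complement $\mathcal{E}_M^c$ absorbed by a high moment of $\max_e |B_{2s}(H,e)|$) or, more directly, by bounding the centered moments $\E[(n_T - \E n_T)^{2k}]$ by an expression of the form $(Ck)^{3k}\cdot(n(2rd)^{4s})^k$ and applying Markov's inequality. In either formulation, optimizing the free parameter (the truncation threshold $M$ or the moment order $k$) against the deviation $t$ yields the stated exponent $t^{2/3}/(n^{1/3}(2rd)^{2s/3})$; the characteristic $t^{2/3}$ shape arises precisely because cubic, rather than linear, growth in the relevant parameter enters the optimization. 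The main obstacle is to establish the moment bound (equivalently, a sufficiently sharp tail on $\max_e|B_{2s}(H,e)|$) tightly enough that the explicit constants match the stated form, and to identify $t \geq (2e)^{3/2} c\sqrt{n}(2rd)^{2s}$ as exactly the regime in which the optimized bound is non-trivial.
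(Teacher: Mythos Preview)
Your proposal is correct and, in its second formulation (bound the centered moments $\E[(n_T-\E n_T)^{2k}]$ and apply Markov's inequality, optimizing over $k$), is exactly the route the paper takes: the paper invokes a hypergraph version of \cite[Proposition~12.3]{barak2019nearly} for the moment bound and then applies Markov. Your branching-process sketch for controlling neighborhood sizes is precisely how one would carry out the \emph{mutatis mutandis} adaptation the paper alludes to.
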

\begin{proof}
    The proof follows by combining Markov's inequality with a hypergraph version of Proposition 12.3~\cite{barak2019nearly} (for which the proof is identical, \textit{mutatis mutandis}).
\end{proof}

Let $p_T$ denote the probability that $T$ appears as the $2s$-neighborhood of a rooted $\PGWLT$, i.e.,
\[
    p_T = \P_{(W,o)\sim \PGWLT}[N_{2s}(W,o) \cong T].
\]
The following lemma shows that the local neighborhood of $\Hrnp$ converges to $\PGWLT$.

\begin{lemma}\label{lemma:H_close_to_T_Eg}
    There is a constant $c\coloneqq c(r) > 0$ such that the following holds for $n$ sufficient large:
    \[
        \Big|\E[n_T] - p_T\cdot n\Big| \leq cn^{3/4}\log n.
    \]
\end{lemma}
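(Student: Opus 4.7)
The plan is to reduce the claim to a coupling argument. By symmetry of $\Hrnp$ under relabelings of $[n]$, the random variable $\1\{N_{2s}(H,v) \cong T\}$ has the same distribution for every $v$, so
\[
    \E[n_T] \,=\, n\cdot \P\bigl[N_{2s}(H,v_0) \cong T\bigr]
\]
for any fixed $v_0 \in [n]$. Hence the lemma reduces to showing
\[
    \Bigl|\P\bigl[N_{2s}(H,v_0) \cong T\bigr] \,-\, p_T\Bigr| \,\leq\, c\,n^{-1/4}\log n.
\]
In fact, the coupling I have in mind will yield a much stronger bound (polynomial in $1/n$), so the stated inequality will follow with plenty of slack.

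The coupling explores $N_{2s}(H,v_0)$ and $N_{2s}(W,o)$ simultaneously by breadth-first search. At each currently-explored vertex $u$ at depth $k < 2s$, I reveal the set of offspring edges containing $u$ and $r-1$ as-yet-unrevealed vertices. In $H$, conditional on the history, the number of offspring edges is $\bin(N_u, p)$ where $N_u$ is the number of available $(r-1)$-subsets and $p = d/\binom{n-1}{r-1}$, and the chosen $(r-1)$-subsets are uniform; in the PGW hypertree, the offspring count is $\mathsf{Pois}(d)$ and all offspring are fresh. I couple the two offspring counts using the optimal TV coupling and then reveal offspring vertices either freshly (on PGW's side) or by uniform sampling from the unexplored pool (on $H$'s side), matching them when possible.

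Two events can break the coupling at a given exploration step: (i) the binomial and Poisson offspring counts disagree, which happens with probability $O(d^2/\binom{n-1}{r-1}) = O(n^{-(r-1)})$; and (ii) an offspring edge in $H$ lands on at least one previously revealed vertex, creating a cycle, whose probability given $D_u$ offspring is at most $O(D_u \cdot r \cdot M/n)$ where $M$ is the current size of the explored set. Since with overwhelming probability the BFS explores at most $O_{d,s,r}(1)$ vertices (by a routine tail bound on the binomial offspring distributions), both $M$ and the total number of exploration steps are bounded by a constant depending only on $r, s, d$. A union bound then yields a coupling failure probability of $O_{d,s,r}(n^{-1})$. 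Since $N_{2s}(H,v_0) \cong N_{2s}(W,o)$ on the success event, this forces $|\P[N_{2s}(H,v_0) \cong T] - p_T| = O_{d,s,r}(n^{-1})$, comfortably inside $c\,n^{-1/4}\log n$.

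The main step requiring care is controlling the BFS size; a priori the exploration could be huge and $M$ unbounded. This is handled by a truncation: on the high-probability event that all binomial offspring counts at the revealed vertices are bounded by, say, $\log n$, the total number of edges revealed is at most $(r\log n)^{2s}$, which is $n^{o(1)}$. Conditioned on this event, the collision and Binomial-to-Poisson errors are each $O(n^{-1+o(1)})$; outside this event (probability $\exp(-\Omega(\log n \cdot \log\log n))$ by a Chernoff bound on each of at most $n^{o(1)}$ binomial variables) one pays at most the full event probability, which is far smaller than $n^{-1/4}\log n$. This truncation is the only delicate point; otherwise the argument is a direct BFS-to-Galton--Watson coupling in the spirit of standard sparse random (hyper)graph analysis.
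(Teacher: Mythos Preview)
Your proposal is correct and is essentially the same approach as the paper's. The paper's own proof simply defers to a hypergraph analogue of \cite[Lemma~12.4]{barak2019nearly} together with linearity of expectation, and that lemma is exactly the BFS-to-Galton--Watson coupling you sketch; you have in effect reconstructed the argument the paper cites rather than offering a different one. One minor presentational point: your first pass asserts the BFS explores $O_{d,s,r}(1)$ vertices, which is false as stated, but you immediately correct this with the $\log n$ truncation in the next paragraph, and that truncated version is what actually delivers the bound.
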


\begin{proof}
    The proof follows by a hypergraph version of \cite[Lemma 12.4]{barak2019nearly} (for which the proof is identical, \textit{mutatis mutandis}), and linearity of expectation.
\end{proof}

\begin{Corollary}\label{corollary:concentration_nT_to_pTn}
    For any $\tau>0$ and $n$ sufficiently large in terms of $d,\,s,\,r,$ and $\tau$, there is a constant $C \coloneqq C(d,s,r,\tau) > 0$ such that
    \[
        \P\left[\big|n_T-p_T\cdot n\big| \geq \tau n\right] \leq \exp(-Cn^{1/3}).
    \]
\end{Corollary}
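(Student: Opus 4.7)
The plan is a straightforward triangle-inequality combination of Lemma~\ref{lemma: concentration_n_T} and Lemma~\ref{lemma:H_close_to_T_Eg}. The point is that Lemma~\ref{lemma:H_close_to_T_Eg} says $\E[n_T]$ and $p_T n$ differ by at most $c n^{3/4}\log n$, which is $o(n)$, so once $n$ is large enough that this gap is below $\tau n/2$, any deviation $|n_T - p_T n|\geq \tau n$ forces $|n_T - \E[n_T]| \geq \tau n/2$.

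First, I would fix $\tau > 0$ together with the parameters $d, s, r$ (which are treated as constants), and choose $n^*$ large enough so that $c n^{3/4}\log n \leq \tau n/2$ for all $n \geq n^*$, where $c$ is the constant from Lemma~\ref{lemma:H_close_to_T_Eg}. Then by the triangle inequality,
\[
    \P\bigl[|n_T - p_T n| \geq \tau n\bigr] \,\leq\, \P\bigl[|n_T - \E[n_T]| \geq \tau n/2\bigr].
\]

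Next, I would apply Lemma~\ref{lemma: concentration_n_T} with $t = \tau n/2$. Since $\tau, d, s, r$ are fixed, the hypothesis $t \geq (2e)^{3/2} c' \sqrt{n}(2rd)^{2s}$ (with $c'$ the constant from that lemma) holds for all $n$ sufficiently large, so
\[
    \P\bigl[|n_T - \E[n_T]| \geq \tau n/2\bigr] \,\leq\, \exp\!\left(-\frac{3(\tau n/2)^{2/3}}{2e\,c'^{\,2/3} n^{1/3}(2rd)^{2s/3}}\right) \,=\, \exp\bigl(-C n^{1/3}\bigr),
\]
where
\[
    C \,=\, \frac{3(\tau/2)^{2/3}}{2e\,c'^{\,2/3}(2rd)^{2s/3}}
\]
depends only on $d, s, r, \tau$, as required.

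There is no substantive obstacle here: everything reduces to verifying two threshold conditions on $n$ and tracking constants. The only mild subtlety is to check that the sub-leading correction $c n^{3/4}\log n$ from Lemma~\ref{lemma:H_close_to_T_Eg} is indeed dominated by $\tau n$ (it is, for $n$ large), and that the lower bound $t \geq (2e)^{3/2} c'\sqrt{n}(2rd)^{2s}$ required by Lemma~\ref{lemma: concentration_n_T} is met by $t = \tau n/2$ for $n$ large in terms of $d, s, r, \tau$. Both are immediate, and the corollary follows.
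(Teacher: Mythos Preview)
Your proposal is correct and matches the paper's approach exactly: the paper's proof simply states that the corollary follows directly by combining Lemmas~\ref{lemma: concentration_n_T} and~\ref{lemma:H_close_to_T_Eg}, which is precisely the triangle-inequality combination you wrote out in detail.
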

\begin{proof}
    Follows directly by combining Lemmas~\ref{lemma: concentration_n_T} and \ref{lemma:H_close_to_T_Eg}.
\end{proof}

Before we prove the main result, we make an analogous observation to the one in \cite[Lemma 2.11]{wein2022optimal} on graphs.
In particular, a random polynomial can be converted to a deterministic polynomial that works almost as well.

\begin{lemma}\label{lemma:ram_lowdegree_same_as Deter_lowdegree}
    Suppose $f$ is a random degree-$D$ polynomial that $(k,\delta,\xi, \eta)$-optimizes the independent set problem in $\Hrnp$. Then for any $c>2$, there exists a deterministic degree-$D$ polynomial that  $(k,c\delta,c\xi, \eta)$-optimizes the independent set problem in $\Hrnp$. 
\end{lemma}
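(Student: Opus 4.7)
The plan is a standard derandomization by averaging over the randomness $\omega$ and applying Markov's inequality twice. By definition of a random degree-$D$ polynomial, $f(\cdot,\omega)$ is itself a deterministic degree-$D$ polynomial for every fixed $\omega$, so it suffices to find a \emph{single} $\omega^\star$ for which both optimization conditions still hold (with the inflated parameters).

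Concretely, I would define two ``bad'' events on $\omega$:
\begin{align*}
    B_1 &\,\coloneqq\, \Bigl\{\omega \,:\, \E_A\bigl[\|f(A,\omega)\|^2\bigr] > c\,\xi\,k\Bigr\}, \\
    B_2 &\,\coloneqq\, \Bigl\{\omega \,:\, \P_A\bigl[|V^\eta_f(A,\omega)| < k\bigr] > c\,\delta\Bigr\}.
\end{align*}
The assumption $\E_{A,\omega}[\|f(A,\omega)\|^2] \leq \xi k$ together with Fubini gives $\E_\omega\bigl[\E_A[\|f(A,\omega)\|^2]\bigr] \leq \xi k$, so Markov's inequality yields $\P_\omega[B_1] < 1/c$. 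Similarly, $\P_{A,\omega}[|V^\eta_f(A,\omega)| < k] \leq \delta$ combined with Fubini gives $\E_\omega\bigl[\P_A[|V^\eta_f(A,\omega)| < k]\bigr] \leq \delta$, so Markov again yields $\P_\omega[B_2] < 1/c$. Since $c > 2$, a union bound gives $\P_\omega[B_1 \cup B_2] < 2/c < 1$, and hence there exists $\omega^\star \notin B_1 \cup B_2$.

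Taking $\tilde f(A) \coloneqq f(A,\omega^\star)$ produces a deterministic degree-$D$ polynomial that satisfies both $\E_A[\|\tilde f(A)\|^2] \leq c\,\xi\,k$ and $\P_A[|V^\eta_{\tilde f}(A)| \geq k] \geq 1 - c\,\delta$, which is exactly the $(k, c\delta, c\xi, \eta)$-optimization property. No step here is expected to be an obstacle; the only thing to be slightly careful about is that $V^\eta_f$ is defined pointwise in $(A,\omega)$, so the inner probability in $B_2$ really does make sense for each fixed $\omega$ and Fubini applies without issue.
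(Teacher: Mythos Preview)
Your proof is correct and is essentially identical to the paper's own argument: two applications of Markov's inequality over $\omega$ followed by a union bound to extract a good $\omega^\star$. The only cosmetic difference is that the paper bounds each bad event by $1/c < 1/2$ separately rather than writing out an explicit union bound, but the content is the same.
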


\begin{proof}
    By Definition~\ref{Def:f_Optimize_ind}, we have
    \[
        \E_{A,\omega}[\|f(A,\omega)\|^2] \leq \xi k, \quad \text{and} \quad 
        \P_{A,\omega}[|V_f^{\eta}(A,\omega)|<k] \leq \delta.
    \]
    By Markov's inequality,
    \[
        \P_{\omega}\left[\E_{A}[\|f(A,\omega)\|^2] \geq c \xi k\right] \leq \frac{1}{c}<\frac{1}{2},
        \quad \text{and} \quad
         \P_{\omega}\left[ \P_{A}[|V_f^{\eta}(A,\omega)|<k] \geq c\delta\right] \leq \frac{1}{c}<\frac{1}{2}.
    \]
    Thus, there exists an $\omega^{*} \in \Omega$ such that the deterministic degree-$D$ polynomial $f^{*}(\cdot) = f(\cdot, \omega)$ satisfies
    \[
         \E_{A}[\|f^*(A)\|^2] \leq c \xi k \quad \text{and} \quad 
        \P_{A}[|V_f^{\eta}(A)|<k] \leq c \delta
    \]
    as desired.
\end{proof}

We are now ready to prove Theorem~\ref{theorem: low-deg hypergraph achievability}.

\begin{proof}[Proof of Theorem~\ref{theorem: low-deg hypergraph achievability}]
    Given $\varepsilon>0$, let $d$ and $s$ be sufficiently large in order to apply Theorem~\ref{Theorem:local_achievability}.
    As a result, we obtain a valid $s$-local function $g$ that outputs independent sets of density 
    \[\E[g(T,o,\mathbf{X})] \geq (1-\varepsilon/5)\left(\frac{1}{r-1}\frac{\log d}{d}\right)^{\frac{1}{r-1}},\] 
    when $(T,o)\sim \PGWLT$ and $\mathbf{X}$ encodes i.i.d $\mathsf{Unif}[0,1]$ labels on the vertices of $T$.
    By Lemma~\ref{lemma:ram_lowdegree_same_as Deter_lowdegree}, it is sufficient to prove the results for a random polynomial instead of deterministic one (up to a change in the constants $\xi$, $C$). 
    We will construct a random polynomial $f:\R^{n \choose r} \to \R^n$ where the input $A$ to $f$ represents a hypergraph on vertex set $[n]$, and the internal randomness of $f$ samples vertex labels $\mathbf{X}$ i.i.d from $\mathsf{Unif}[0,1]$. 
    Let $\mathbb{H}_{v,s,q}$ be the collection of hypergraphs $H$ on vertex set $[n]$ for which $|E(H)| \leq q$ and every non-isolated vertex is reachable from $v$ by a path of length at most $s$. 
    In other words, $\mathbb{H}_{v,s,q}$ consists of all possible $s$-neighborhoods of $v$ of size at most $q$. Let 
    \begin{align}\label{Def:f_v_based_on_local_g}
        f_{v}(A,\mathbf{X}) = \sum_{H \in  \mathbb{H}_{v,s,q}} \alpha(H,v,\mathbf{X})\prod_{e \in E(H)}A_e,
    \end{align}
    where $\alpha(H,v,\mathbf{X})$ is defined recursively by
    \begin{align}\label{Def:Coeff_of_f_v}
        \alpha(H,v,\mathbf{X}) = g(H,v,\mathbf{X}) - \sum_{\substack{H' \in \mathbb{H}_{v,s,q} \\ E(H') \subsetneq E(H)}} \alpha(H',v,\mathbf{X}).
    \end{align}
    Observe that by this construction, we have the following for any $v \in [n]$: if $N_{s}(A,v)$ is a hypertree with $|N_{s}(A,v)| \leq q$, then $f_v(A,\mathbf{X})= g(H,v,\mathbf{X})$, where $q = q(\varepsilon,d,\eta)$ will be chosen later.

    We first show that, under this construction, the rounding procedure $V_f^{\eta}(A, \mathbf{X})$ will not fail.
    Fix any vertex $v$, if $N_{2s}(A,v)$ is a hypertree with $|N_{2s}(A,v)|\leq q$, then for all $u \in V(N_1(A,v))$, $N_s(A,u)$ is a hypertree with with $|N_s(A,u)|\leq q$. 
    Hence $f_u(A,\mathbf{X}) = g(A,u,\mathbf{X})$ by construction.
    Since the output of $g$ is an independent set, $v$ will not be in the ``bad'' set $(I \setminus \Tilde{I}) \cup J$ (see Definition~\ref{Def:V_eta_f}). 
    Therefore, $(I \setminus \Tilde{I}) \cup J$ is disjoint from the following set:
    \[
        V_q := \bigcup_{T \in\mathcal{T}_q} \set{v \in [n] \,:\, N_{2s}(A,v)\cong T}.
    \]
    By Corollary~\ref{corollary:concentration_nT_to_pTn}, we have that $|n_T-p_T\cdot n|\leq \eta n/(2|\mathcal{T}_q|)$ with probability $1-\exp(-\Omega(n^{1/3}))$, where $\Omega(\cdot)$ hides a constant depending on $\varepsilon,d,r,s,q,\eta$. 
    Choose $q$ sufficiently large such that
    \[
        \sum_{T \in \mathcal{T}_q}p_T \geq 1-\eta/2.
    \]
    Then we have
    \[
        |I \setminus \Tilde{I}|+|J| = |(I \setminus \Tilde{I}) \cup J| \leq n-|V_q| \leq n- \sum_{T \in T_q}\left(p_T\cdot n-\frac{\eta n}{2|\mathcal{T}_q|}\right)
        = \left(1- \sum_{T \in \mathcal{T}_q}p_T\right)+\frac{\eta n}{2} \leq \eta n,
    \]
    which proves that the rounding procedure succeeds with probability at least $1-\exp(-\Omega(n^{1/3}))$.

    Next we show that the independent set $I_f :=V^{\eta}_f(A,\mathbf{X})$ has the desired density with high probability. 
    By our choice of the local function $g$, we have
    \[
        \left(1-\frac{\varepsilon}{5}\right)\left(\frac{1}{r-1}\cdot\frac{\log d}{d}\right)^{1/r-1} \leq \E_{(T,o)\sim\PGWLT}[g(T,o,\mathbf{X})] = \sum_{T\in \mathcal{T}}p_T\phi_T,
    \]
    where $\phi_T$ is defined as the probability over $\mathbf{X}$ that $g(A,v,\mathbf{X})=1$ conditioned on the event $N_{2s}(A,v) \cong T$.    
    Let $q$ be sufficiently large such that 
    \[\sum_{T \in \mathcal{T}_q}p_T \geq 1-\frac{\varepsilon}{5}\left(\frac{1}{r-1}\cdot\frac{\log d}{d}\right)^{1/r-1}.\] 
    Given that $\phi_T \in [0,1]$ for any $T$, we have
    \begin{align*}
        \frac{\varepsilon}{5}\left(\frac{1}{r-1}\cdot\frac{\log d}{d}\right)^{1/r-1} 
        &\geq 1-\sum_{T \in \mathcal{T}_q}p_T 
        = \sum_{T \in \mathcal{T}}p_T - \sum_{T \in \mathcal{T}_q}p_T \\
        &\geq  \sum_{T \in \mathcal{T}}p_T\phi_T - \sum_{T \in \mathcal{T}_q}p_T\phi_T \\
        &\geq   \left(1-\frac{\varepsilon}{5}\right)\left(\frac{1}{r-1}\cdot\frac{\log d}{d}\right)^{1/r-1} - \sum_{T \in \mathcal{T}_q}p_T\phi_T,
    \end{align*}
    which implies
    \[\sum_{T \in \mathcal{T}_q}p_T\phi_T \geq \left(1-\frac{2\varepsilon}{5}\right)\left(\frac{1}{r-1}\cdot\frac{\log d}{d}\right)^{1/r-1}.\]
    Using Corollary~\ref{corollary:concentration_nT_to_pTn} once again, with probability at least $1-\exp(-\Omega(n^{1/3}))$ over $A \sim \Hrnp$, we have
    \begin{align*}
        \sum_{T \in \mathcal{T}_q}n_T\phi_T &\geq  \sum_{T \in \mathcal{T}_q}\left(p_T n - \frac{\varepsilon}{5|\mathcal{T}_q|}\left(\frac{1}{r-1}\cdot\frac{\log d}{d}\right)^{1/r-1}n\right)\phi_T \\
        &= \left(\sum_{T \in \mathcal{T}_q}p_T\phi_T\right)n -  \frac{\varepsilon}{5}\left(\frac{1}{r-1}\cdot\frac{\log d}{d}\right)^{1/r-1}n
        \\
        &\geq 
        n\left(1-\frac{3\varepsilon}{5}\right)\left(\frac{1}{r-1}\cdot\frac{\log d}{d}\right)^{1/r-1}.
    \end{align*}
    Now fix $A$ satisfying the inequality above and consider the randomness over the vertex label $\mathbf{X}$. 
    As we showed above, $V_q$ is disjoint from the bad set $(I \setminus \Tilde{I}) \cup J$.
    Therefore, if $N_{2s}(A,v) \cong T$ for some $v \in V(H)$ and $T \in \mathcal{T}_q$, then $v \in I_f$ if and only if $g(A,v, \mathbf{X}) = 1$, which happens with probability $\phi_v = \phi_T$. 
    
    The idea now is to partition $V_q$ into subsets $U_1, \ldots, U_{(r-1)q+1}$ such that for each $U_i$, the vertices in $U_i$ have disjoint $s$-neighborhoods. 
    In particular, the random variables $\{\mathbbm{1}_{v \in I_f}\}_{v \in U_i}$ are independent conditioned on the outcome $A$.
    Each vertex $v \in V_q$ has at most $(r-1)q$ vertices in $N_{2s}(A,v)$, which implies there are at most $(r-1)q$ vertices $u \in V_q$ such that $u \neq v$ and $N_s(A,v) \cap N_s(A,u) \neq \varnothing$. 
    We can greedily form the partition $\set{U_i}$ while maintaining the disjointness property (given that there are $(r-1)q+1$ many such subsets available).
    For such a partition $\{U_i\}$, we apply Chernoff's bound for each $i$ to obtain the following:
    \begin{align}\label{Ineqaulity:Chernoff_v_in_I_f}
        \P_{\mathbf{X}}\left[\sum_{v\in U_i}\mathbbm{1}_{v \in I_f} \leq \left(1-\frac{\varepsilon}{5}\right)\mu_i\right] \leq \exp \left(-\frac{\varepsilon^2}{50}\mu_i\right)
    \end{align}
    where
    \[
        \mu_i = \E_{\mathbf{X}}\left[\sum_{v \in U_i}\mathbbm{1}_{v \in I_f}\right] = \sum_{v \in U_i}\phi_v.
    \]
    We say a subset $U_i$ is ``large'' if the corresponding value $\mu_i$ satisfies
    \[\mu_i \geq \frac{\varepsilon}{5((r-1)q+1)}\left(\frac{1}{r-1}\cdot\frac{\log d}{d}\right)^{1/r-1}n,\]
    and ``small'' otherwise.
    By \eqref{Ineqaulity:Chernoff_v_in_I_f} and a union bound over $i$, we have that every large subset $U_i$ satisfies $\sum_{v\in U_i}\mathbbm{1}_{v \in I_f} \geq \left(1-\frac{\varepsilon}{5}\right)\mu_i$ with probability $1-\exp(-\Omega(n))$. With this in hand, we have
    \begin{align*}
        |I_f| 
        &\geq \sum_{v \in V_q} \mathbbm{1}_{v \in I_f} 
        \geq \sum_{i: U_i\text{ large}}\sum_{v \in U_i}\mathbbm{1}_{v \in I_f}
        \geq \sum_{i: U_i\text{ large}}\left(1-\frac{\varepsilon}{5}\right)\mu_i
        \\
        &= \left(1-\frac{\varepsilon}{5}\right)\left[\sum_{i}\mu_i-\sum_{i: U_i\text{ small}}\mu_i \right] 
        \\
        &\geq  \left(1-\frac{\varepsilon}{5}\right)\left[\sum_{i}\mu_i- \sum_{i: U_i\text{ small}}\frac{\varepsilon}{5((r-1)q+1)}\left(\frac{1}{r-1}\cdot\frac{\log d}{d}\right)^{\frac{1}{r-1}}n\right]\\
        &\geq \left(1-\frac{\varepsilon}{5}\right)\left[\sum_{i}\mu_i- \frac{\varepsilon}{5}\left(\frac{1}{r-1}\cdot\frac{\log d}{d}\right)^{\frac{1}{r-1}}n\right] \\
        &=  \left(1-\frac{\varepsilon}{5}\right)\left[\sum_{v \in V_q}\phi_v- \frac{\varepsilon}{5}\left(\frac{1}{r-1}\cdot\frac{\log d}{d}\right)^{\frac{1}{r-1}}n\right] \\
        &=  \left(1-\frac{\varepsilon}{5}\right)\left[\sum_{T \in \mathcal{T}_{q}}n_T\phi_T- \frac{\varepsilon}{5}\left(\frac{1}{r-1}\cdot\frac{\log d}{d}\right)^{\frac{1}{r-1}}n\right]\\
        &\geq  \left(1-\frac{\varepsilon}{5}\right)\left[\left(1-\frac{3\varepsilon}{5}\right)\left(\frac{1}{r-1}\cdot\frac{\log d}{d}\right)^{\frac{1}{r-1}}n- \frac{\varepsilon}{5}\left(\frac{1}{r-1}\cdot\frac{\log d}{d}\right)^{\frac{1}{r-1}}n\right] \\
        &=  \left(1-\frac{\varepsilon}{5}\right) \left(1-\frac{4\varepsilon}{5}\right)\left(\frac{1}{r-1}\cdot\frac{\log d}{d}\right)^{\frac{1}{r-1}}n \\
        &\geq \left(1-\varepsilon\right) \left(\frac{1}{r-1}\frac{\log d}{d}\right)^{\frac{1}{r-1}}n.
    \end{align*}
    Thus, the random polynomial $f$ outputs an independent set of density at least $\left(1-\varepsilon\right) \left(\frac{1}{r-1}\cdot\frac{\log d}{d}\right)^{\frac{1}{r-1}}$ with probability $1-\exp(-\Omega(n^{1/3}))$ over both $A$ and $\mathbf{X}$.

    Finally, we need to verify the normalization condition in Definition~\ref{Def:f_Optimize_ind}:
    \[\E_{A,\mathbf{X}}[\|f(A,\mathbf{X})\|^2] \leq \xi \left(1-\varepsilon\right) \left(\frac{1}{r-1}\frac{\log d}{d}\right)^{\frac{1}{r-1}}n,\quad  \text{for some constant} \quad  \xi \geq 1.\]
    It suffices to show that $\E_{A,\mathbf{X}}[f_v(A,\mathbf{X})^2] = O(1)$ for all vertices $v$. 
    Fix a vertex $v$ and let $N = |N_s(A,v)|$. By \eqref{Def:f_v_based_on_local_g}, for each $H \in \mathbb{H}_{v,s,q}$ the corresponding term is not $0$ only if $H$ is a subgraph of $N_s(A,v)$. This implies that number of of nonzero terms is at most
    \[
        {N \choose \leq q} = \sum_{i=0}^q {N \choose i} \leq \sum_{i=1}^q N^i \leq (N+1)^{q}.
    \]
    Observe that by \eqref{Def:Coeff_of_f_v}, $|\alpha(H,v,\mathbf{X})| \leq \chi$ for some $\chi = \chi(r,s,q)$ independent of $v$ and $\mathbf{X}$. 
    Therefore, we may conclude that
    \begin{align} \label{Inequality: bound_on_Fv(X,Y)2}
        f_v(A, \mathbf{X})^2 \leq [\chi(N+1)^q]^2 = \chi^2(N+1)^{2q}.
    \end{align}
    In order to bound the $\E_{A,\mathbf{X}}[f_v(A,\mathbf{X})^2]$, it is sufficient to prove a tail bound for $N$.
    Starting from $m_0=1$, let $m_i$ be the number of vertices whose distance from $v$ in $A$ is exactly $i$. Conditioned on $m_i$, $m_{i+1}$ is stochastically dominated by $\mathsf{Bin}\left(r{n-1 \choose r-1}m_i, d/{n-1 \choose r-1}\right)$.
    By a Chernoff bound, for fixed $m_i \geq 1$ and any $\delta \geq 1$, we have
    \[
        \P[m_{i+1} \geq (1+\delta)rdm_i] \leq \exp\left(-\frac{\delta rdm_i}{3}\right) \leq \exp\left(-\frac{\delta rd}{3}\right).
    \]
    Note that this inequality holds trivially if $m_i = 0$.
    Taking a union bound over all $i \in [s]$,
    \[
        \P\big[m_i < [(1+\delta)rd]^i, \forall i \in [s]\big] \geq 1-s\cdot \exp\left(-\frac{\delta rd}{3}\right).
    \]
    In particular, on this event, 
    \[
        N<\sum_{i=1}^s [(1+\delta)rd]^i \leq [(1+\delta)rd+1]^s \leq (2\delta rd+1)^s \leq (3\delta rd)^s.
    \]
    We conclude that,
    \[
        \P[N \geq (3\delta rd)^s] \leq s\cdot \exp\left(-\frac{\delta rd}{3}\right), ~~~\forall \delta\geq 1.
    \]
    For $t=(3\delta rd)^s$, we have a subexponentially small tail bound on $N$:
    \[
        \P[N \geq t] \leq s\cdot \exp\left(-\frac{t^{1/s}}{9}\right), ~~~\forall t \geq (3rd)^s.
    \]
    Putting everything together, we have
    \begin{align*}
        \E_{A,\mathbf{X}}[(f_v(A,\mathbf{X})^2] 
        &\leq \sum_{i =0}^{\infty} \chi^2(t+1)^{2q} \P[N=t] 
        \\
        &\leq   \sum_{i =0}^{\lceil(3rd)^s\rceil} \chi^2(i+1)^{2q}+\sum_{i=\lceil(3rd)^s\rceil+1}^{\infty} \chi^{2}(i+1)^{2q}s\exp\left(-\frac{i^{1/s}}{9}\right),
    \end{align*}
   which is a finite constant independent of $n$.
\end{proof}

\subsection{Forbidden Structures}

In this section, we will define two ``forbidden structures'' in order to assist with our proof of Theorem~\ref{theorem: low-deg hypergraph impossibility}.
Before we define these structures and prove the corresponding results, we state the following theorem due to Krivelevich and Sudakov on the size of the maximum independent set in $\Hrnp$ (this result wasn't explicitly stated, but can inferred from Corollary~1 in their paper) \cite{krivelevich1998chromatic}.

\begin{Theorem}\label{theorem: hypergraph independence stat thresh}
    Let $\eps > 0$ and $r \in \N$ such that $r\geq 2$.
    There is $d_0 > 0$ such that for any $d \geq d_0$, there exists $n_0 > 0$ such that for any $n \geq n_0$ and $p = d/\binom{n-1}{r-1}$, the graph $H \sim \Hrnp$ satisfies
    \[(1 - \eps)\left(\frac{r}{r-1}\frac{\log d}{d}\right)^{\frac{1}{r-1}} \,\leq\, \frac{\alpha(H)}{n} \,\leq\,  (1 + \eps)\left(\frac{r}{r-1}\frac{\log d}{d}\right)^{\frac{1}{r-1}},\]
    with probability at least $1 - \exp\left(-\Omega(n)\right)$.
\end{Theorem}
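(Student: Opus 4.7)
Let $k^+ := \lceil(1+\eps)\, n\, (r\log d/((r-1)d))^{1/(r-1)}\rceil$ and let $X_k$ count the size-$k$ independent sets of $H$. Standard estimates give
\[
\log \E[X_{k^+}] \;\leq\; k^+\log(en/k^+) \,-\, p\binom{k^+}{r}.
\]
Since $p = d/\binom{n-1}{r-1}$ and $\binom{n-1}{r-1} = (1+o_n(1))\,n^{r-1}/(r-1)!$, we have $p\binom{k^+}{r} = (1+o_n(1))\,d(k^+)^r/(rn^{r-1})$. Writing $\alpha^+ := k^+/n$, using the identity $d(\alpha^+)^{r-1}/r = (1+\eps)^{r-1}\log d/(r-1)$ and the estimate $\log(e/\alpha^+) = (1+o_d(1))\log d/(r-1)$, the right-hand side evaluates to $(1 - (1+\eps)^{r-1} + o_d(1)) \cdot n\alpha^+ \log d/(r-1)$, which is negative and linear in $n$ once $d$ is large. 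Markov's inequality then yields $\P[\alpha(H) \geq k^+] \leq \E[X_{k^+}] = \exp(-\Omega(n))$.

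\textbf{Lower bound via second moment plus Azuma.} For $k^- := \lfloor(1-\eps)\, n\, (r\log d/((r-1)d))^{1/(r-1)}\rfloor$, the symmetric first moment estimate gives $\E[X_{k^-}] = \exp(\Omega(n))$. I would first establish $\P[X_{k^-} \geq 1] \geq \Omega(1)$ via the second moment method: writing $\E[X_{k^-}^2]/\E[X_{k^-}]^2$ as a sum over the intersection size $\ell := |S \cap T|$ and using $\P[S, T \text{ both independent}] = (1-p)^{2\binom{k^-}{r} - \binom{\ell}{r}}$, one must show
\[
\sum_{\ell = 0}^{k^-}\frac{\binom{k^-}{\ell}\binom{n-k^-}{k^- - \ell}}{\binom{n}{k^-}} (1-p)^{-\binom{\ell}{r}} \;=\; O(1).
\]
The constant-probability existence statement, combined with a bootstrap using concentration, yields $\E[\alpha(H)] \geq (1-\eps/2)\, n\, (r\log d/((r-1)d))^{1/(r-1)}$. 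Finally, the vertex-exposure Doob martingale for $\alpha(H)$ has bounded differences at most $1$, so Azuma's inequality gives $\P[\alpha(H) \leq \E[\alpha(H)] - t] \leq \exp(-t^2/(2n))$; taking $t$ to be an $\eps/2$-fraction of the threshold produces the desired lower tail bound of $\exp(-\Omega(n))$ (with the implicit constant depending on $d, r, \eps$ but not on $n$).

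\textbf{Main obstacle.} The technical core lies in the hypergraph second moment computation. In the sparse regime, $(1-p)^{-\binom{\ell}{r}}$ grows rapidly for intermediate $\ell$, and the naive second moment can diverge relative to $\E[X_{k^-}]^2$. For $r = 2$ Frieze~\cite{frieze1990independence} circumvented this by restricting the count to a structured subclass of independent sets enforcing an appropriate balance condition on their interaction with the remaining vertices. For $r \geq 3$, the hypergraph analogue in~\cite{krivelevich1998chromatic} requires a delicate analysis of the exponent $2\binom{k^-}{r} - \binom{\ell}{r}$ over the full range of $\ell$: identifying the critical $\ell$ at which the log-summand is maximized, and verifying that its contribution is controlled at the claimed threshold. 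This combinatorial balancing would form the bulk of a complete proof; the remaining steps (first moment for the upper bound, Azuma for concentration) are routine.
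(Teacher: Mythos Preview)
The paper does not prove this statement: it is quoted as a known result of Krivelevich and Sudakov (see the sentence introducing Theorem~\ref{theorem: hypergraph independence stat thresh}), so there is no in-paper argument to compare against. Your outline---first-moment upper bound, a second-moment existence statement at the lower threshold, then a concentration bootstrap---is exactly the Krivelevich--Sudakov strategy, and the paper carries out the same template in \S\ref{subsection: IT bound} for the $r$-partite analogue (Theorem~\ref{theorem: stat thresh balanced}).

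Two remarks on the details. First, the second-moment ratio you display is the \emph{unrestricted} one and is not $O(1)$ in this regime; you acknowledge this, but your phrasing suggests that for $r\ge 3$ one salvages it by sharper analysis of that same sum. In fact Krivelevich--Sudakov use the Frieze partition trick for all $r\ge 2$: split $[n]$ into blocks of size $m\approx (d/\mathrm{polylog}\,d)^{1/(r-1)}$ and count only $P$-independent sets taking at most one vertex per block; it is the second moment of \emph{that} count which becomes tractable. Second, even the restricted second moment typically does not yield $\P[X_{k^-}>0]=\Omega(1)$ but only a sub-exponential bound (e.g.\ $\exp(-n/d^{\Theta(1)})$ in the paper's own computation for the analogue, Lemma~\ref{lemma: large P-ISET}). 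Your bootstrap via vertex-exposure Azuma then guarantees only $\E[\alpha]\ge k^- - \sqrt{2n\log(1/q)}$, and with $q$ of that form the correction $\sqrt{2n\log(1/q)}=\Theta(n/d^{\Theta(1)})$ is of the same order as $k^-$, not negligible. The standard remedy, also used in \S\ref{subsection: IT bound}, is to run the bounded-differences martingale over the $\Theta(n/m)$ partition blocks rather than over single vertices, gaining a factor of $m$ in the Azuma exponent (cf.\ Lemma~\ref{lemma: beta bounds}); the block size $m$ is what has to be calibrated so that this sharpened concentration beats the second-moment lower bound.
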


First, we will define a sequence of correlated random graphs and a corresponding forbidden structure.
Recall that we represent an $r$-uniform hypergraph $H \sim \Hrnp$ by its edge indicator vector $A = \set{0, 1}^m$ where $m = \binom{n}{r}$.
Additionally, recall the interpolation path for $\Hrnp$ from Definition~\ref{definition: interpolation path}.
It is easy to see that the marginal distribution of $A^{(t)}$ is $\Hrnp$.
Furthermore, $A^{(t+m)}$ is independent of $A^{(0)}, \ldots, A^{(t)}$ as all edges are resampled by then.
We will now show that a certain structure of independent sets across the correlated sequence of random hypergraphs exists only with exponentially small probability.

\begin{Proposition}\label{Prop:OGP}
    Fix constants $\varepsilon>0$ and $K \in \N$ with $K \geq \left \lceil \frac{2^{r+1}}{\varepsilon^r}\right \rceil+ 1$. Consider the interpolation path $A^{(0)}, \cdots, A^{(T)}$ of any length $T = n^{O(r)}$. If $d=d(\varepsilon,K)>0$ is sufficiently large, then with probability $1-\exp\Big(-\Omega(n)\Big)$, there does not exist a sequence of sets $S_1, \cdots, S_K \subseteq [n]$ satisfying the following properties
    \begin{enumerate}[label=(H\arabic*)]
        \item\label{item:exist_indset} For each $k \in [K]$ there exist $0 \leq t_k \leq T$ such that $S_{k}$ is an independent set in $A^{(t_k)}$,
        \item\label{item:large_size} $|S_k| \geq (1+\varepsilon)\left(\frac{1}{r-1}\cdot \frac{\log d}{d}\right)^{\frac{1}{r-1}}n$ for all $k \in [K]$,
        \item\label{item:ogp} $|S_k \setminus (\cup_{\ell<k}S_\ell)| \in \left[\frac{\varepsilon}{4}\left(\frac{1}{(r-1)}\cdot \frac{\log d}{d}\right)^{\frac{1}{r-1}}n,\, \frac{\varepsilon}{2}\left(\frac{1}{(r-1)}\cdot \frac{\log d}{d}\right)^{\frac{1}{r-1}}n\right]$ for all $2 \le k \le K$.
    \end{enumerate}
\end{Proposition}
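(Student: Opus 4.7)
The plan is to prove the claim by a first-moment/union-bound argument. Since the number of time tuples is at most $T^K=n^{O(rK)}$ and $K$ is constant in $n$, it suffices to bound, for each fixed $(t_1,\ldots,t_K)$, the expected number of $K$-tuples $(S_1,\ldots,S_K)$ satisfying \ref{item:exist_indset}--\ref{item:ogp}; the polynomial factor in time tuples (and in the choices of the sizes $s_k=|S_k|$, $u_k=|S_k\setminus\bigcup_{\ell<k}S_\ell|$) is absorbed by the exponential decay obtained below. I will work with worst-case choices $s_k=\lceil(1+\varepsilon)\gamma n\rceil$ and $u_k=\lceil\varepsilon\gamma n/4\rceil$, where $\gamma=\bigl(\tfrac{\log d}{(r-1)d}\bigr)^{1/(r-1)}$; larger values only make the probability bound stronger.

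For the probability bound, represent the interpolation path via independent Bernoulli$(p)$ variables $\{B_e^{(j)}\}$, where $B_e^{(j)}$ is the value of edge $e$ after its $j$th resample, so that $A_e^{(t)}=B_e^{(j(e,t))}$ for deterministic indices $j(e,t)$. The event that each $S_k$ is independent in $A^{(t_k)}$ requires $B_e^{(j(e,t_k))}=0$ for every $k$ and every $e\in\binom{S_k}{r}$; since distinct $(e,j)$-slots are mutually independent and each relevant slot needs to vanish at least once,
\[
\P\bigl[\ref{item:exist_indset}\text{ holds for fixed } (S_1,\ldots,S_K)\bigr]\;\le\;(1-p)^{\lvert\bigcup_k\binom{S_k}{r}\rvert}\;\le\;\exp\!\left(-p\cdot\Big\lvert{\textstyle\bigcup_k}\tbinom{S_k}{r}\Big\rvert\right).
\]
Writing $W_k=\bigcup_{\ell<k}S_\ell$, the new $r$-subsets added at step $k$ include all those meeting $S_k\setminus W_k$, giving
\[
\Big\lvert{\textstyle\bigcup_k}\tbinom{S_k}{r}\Big\rvert\;\ge\;\binom{s_1}{r}+\sum_{k=2}^{K}\left[\binom{s_k}{r}-\binom{s_k-u_k}{r}\right].
\]
To count configurations, specify $S_1$ and, for each $k\ge 2$, disjointly choose the ``new'' piece $U_k=S_k\setminus W_k$ and the ``old'' piece $R_k=S_k\cap W_k\subseteq W_k$, yielding at most $\binom{n}{s_1}\prod_{k=2}^K\binom{n}{u_k}\binom{|W_k|}{s_k-u_k}$ configurations.

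The crux---and the main obstacle---is a careful algebraic calibration of these two bounds. Taking logs and using $|W_k|=O_{\varepsilon,r}(\gamma n)$ (which holds after restricting $s_k$ to $[(1+\varepsilon)\gamma n,\,O_{r}(\gamma n)]$, the larger values being separately excluded by Theorem~\ref{theorem: hypergraph independence stat thresh}), the count contributes at most $\gamma n\cdot\tfrac{\log d}{r-1}\cdot\bigl[(1+\varepsilon)+(K-1)\tfrac{\varepsilon}{4}\bigr]+O_{\varepsilon,r}(\gamma n)$, while $p\cdot\bigl\lvert\bigcup_k\binom{S_k}{r}\bigr\rvert\ge \gamma n\cdot\tfrac{\log d}{r-1}\cdot(1+\varepsilon)^{r-1}\bigl[\tfrac{1+\varepsilon}{r}+(K-1)\tfrac{\varepsilon}{4}\bigr]$. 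Their difference is nonnegative of order $\Omega(\varepsilon^r\gamma n\log d)$ precisely when the ``gain'' $(K-1)(\varepsilon/4)\bigl[(1+\varepsilon)^{r-1}-1\bigr]$ dominates the ``deficit'' $(1+\varepsilon)\bigl[1-(1+\varepsilon)^{r-1}/r\bigr]$. Using Bernoulli's inequality $(1+\varepsilon)^{r-1}-1\ge(r-1)\varepsilon$ and crudely upper-bounding the deficit by $1+\varepsilon$, the hypothesis $K\ge\lceil 2^{r+1}/\varepsilon^r\rceil+1$ suffices, and choosing $d$ large relative to $\varepsilon$ absorbs the $O_{\varepsilon,r}(\gamma n)$ term coming from $\binom{|W_k|}{s_k-u_k}$ as well as lower-order binomial-expansion corrections. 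The resulting per-tuple first moment is $\exp(-\Omega(n))$, and multiplying by the polynomial number of time tuples and size vectors still gives $\exp(-\Omega(n))$, as claimed.
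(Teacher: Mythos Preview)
Your approach is essentially the same first-moment/union-bound argument as the paper's: count configurations via $\binom{n}{s_1}\prod_k\binom{n}{u_k}\binom{|W_k|}{s_k-u_k}$, lower-bound the number of forced non-edges by $\binom{s_1}{r}+\sum_k\bigl[\binom{s_k}{r}-\binom{s_k-u_k}{r}\bigr]$, invoke Theorem~\ref{theorem: hypergraph independence stat thresh} to cap the $s_k$ from above, and then carry out the algebraic calibration.

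Two small points of imprecision to clean up. First, you should not simply \emph{fix} $s_k=(1+\varepsilon)\gamma n$ and $u_k=\varepsilon\gamma n/4$ and declare them worst case: larger $s_k$ inflates the count as well as the probability bound, and larger $u_k$ moves both terms in competing directions. The paper handles this by taking a supremum over all admissible sizes; that is the safe way to go, and the bounds you wrote are then the correct leading order for the supremum. Second, your probability lower bound has the factor $(1+\varepsilon)^{r-1}$, but the mean-value estimate $s_k^r-(s_k-u_k)^r\ge r\,u_k\,(s_k-u_k)^{r-1}$ only yields $\bigl(\tfrac{s_k-u_k}{\gamma n}\bigr)^{r-1}\ge(1+\varepsilon/2)^{r-1}$ over the allowed range $u_k\le \varepsilon\gamma n/2$; the paper uses exactly this. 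With $(1+\varepsilon/2)^{r-1}$ in place, your Bernoulli lower bound becomes $(r-1)\varepsilon/2$, and then the crude deficit bound $1+\varepsilon$ is not quite enough at $r=2$ to match the stated threshold $K\ge\lceil 2^{r+1}/\varepsilon^r\rceil+1$; the paper instead uses the sharper deficit $\sup_a\bigl(\tfrac{a}{r-1}-\tfrac{a^r}{r}\bigr)=\tfrac{1}{r}(r-1)^{-1/(r-1)}$ together with $(1+\varepsilon/2)^{r-1}-1\ge(\varepsilon/2)^{r-1}$, which makes the constants line up exactly. None of this affects the structure of the argument, only the bookkeeping.
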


\begin{proof}
    
    Let $N$ be the number of the sequences $(S_1, \ldots, S_K)$ satisfying the properties \ref{item:exist_indset}--\ref{item:ogp}. 
    We will compute $\E[N]$ and show that it is exponentially small, then the assertion follows by Markov's inequality. 

    Define $\Phi := \left(\frac{\log d}{d}\right)^{\frac{1}{r-1}}n$ and let $a_k, b_k$ and $c$ be defined as follows:
    \[a_k = \frac{|S_k|}{\Phi}, \quad b_k = \frac{|S_k \setminus \cup_{l<k}S_\ell|}{\Phi}, \quad \text{and} \quad c = \frac{|\cup_{k}S_k|}{\Phi}.\] 
    By \ref{item:large_size} and \ref{item:ogp}, we have 
    \begin{align*}
        a_k &\in \left[(1+\varepsilon)\left(\frac{1}{r-1}\right)^{\frac{1}{r-1}},\,\left(1+\varepsilon\right)\left(\frac{r}{r-1}\right)^{\frac{1}{r-1}}\right], \\
        b_{k} &\in \left[\frac{\varepsilon}{4}\left(\frac{1}{r-1}\right)^{\frac{1}{r-1}},\, \frac{\varepsilon}{2}\left(\frac{1}{r-1}\right)^{\frac{1}{r-1}}\right]
    \end{align*}

    Given the even event in conclusion of Theorem~\ref{theorem: hypergraph independence stat thresh} holds with probability $1-e^{-\Omega(n)}$, we have 
    \[
        c \leq a_1+\frac{\varepsilon(K-1)}{2}\left(\frac{1}{(r-1)}\right)^{\frac{1}{r-1}},
    \]
    with probability $1-e^{-\Omega(n)}$.
    In other words, $c$ is bounded from above by a constant $C(\varepsilon, K)$ that does not depend on $d$.

    First, note that there are at most $n^{2K}$ choices for $\{a_k\}$ and $\{b_k\}$.
    Once $\{a_k\}$ and $\{b_k\}$ are fixed, the number of choices for $\{S_k\}$ is at most
    \begin{align*}
        &~~~~~{n \choose a_1\Phi} \prod_{k=2}^{K} {n \choose b_k \Phi} {c\Phi \choose (a_k-b_k) \Phi} 
        \\
        &\leq 
        \left(\frac{en}{a_1\Phi}\right)^{a_1\Phi} \prod_{k=2}^{K} \left(\frac{en}{b_k\Phi}\right)^{b_k\Phi} \left(\frac{ec}{(a_k-b_k)}\right)^{(a_k-b_k)\Phi} 
        \\
        & = 
        \exp \left(a_1\Phi \log \left(\frac{e}{a_1}\left(\frac{d}{\log d}\right)^{\frac{1}{r-1}}\right)+\sum_{k=2}^K b_k\Phi\log \left(\frac{e}{b_k}\left(\frac{d}{\log d}\right)^{\frac{1}{r-1}}\right)+(a_k-b_k)\Phi \log \left(\frac{ec}{a_k - b_k} \right)\right) \\
        &= \exp \left(\frac{1}{r-1}\Phi\log d \left(a_1+\sum_{k=2}^K b_k + o_d(1)\right)\right)
    \end{align*}
    where we used the bounds on $a_k,\, b_k$, and $c$ computed earlier. 

    Now for a fixed $\{S_k\}$ satisfying \ref{item:large_size} and \ref{item:ogp}, we take union bound over all $(T+1)^K $ possible choices of $\{t_k\}$ and compute an upper bound on the probability that \ref{item:exist_indset} is satisfied. 
    Let $E$ be the number of potential edges $e$ for which there exists $k$ such that all vertices of $e$ lie inside of $S_k$. 
    We compute a lower bound on $E$ as follows:
    \begin{align*}
        E
        &\geq {a_1\Phi \choose r}+ \sum_{k=2}^K {a_k\Phi \choose r} - {(a_k-b_k)\Phi \choose r} \\
        &= 
        \frac{\Phi^r}{r!}\left(a_1^r+\sum_{k=2}^K a_k^r-(a_k-b_k)^r+o_d(1)\right)\\
        &= \frac{\Phi^r}{r!}\left( a_1^r+\sum_{k=2}^K b_k\Big((a_k-b_k)^{r-1}+a_k(a_k-b_k)^{r-2}+ \cdots + a_k^{r-1}\Big)+o_d(1)\right)\\
        &\geq   \frac{\Phi^r}{r!}\left( a_1^r+\sum_{k=2}^K rb_k(a_k-b_k)^{r-1}+o_d(1)\right)
    \end{align*}
    In the first line, the first term counts potential edges within $S_1$ and the $k$th term counts potential edges crossing $S_k$ and $\cup_{\ell<k}S_\ell$. Notice that equality holds when all the $t_k$ are equal, and otherwise the number of non-edges that need to occur is only larger.
    For fixed $\{S_k\}$ and $\{t_k\}$, \ref{item:exist_indset} occurs when at least $E$ independent non-edges occur in the sampling of $\{A^{(t)}\}$; this happens with probability at most 
    \[\left(1-\frac{d}{{n-1 \choose r-1}}\right)^{E} \leq \exp \left(- \frac{Ed}{{n-1 \choose r-1}}\right).\]
    We compute the following supremum which will be used to upper bound $\E[N]$, 
    \begin{align*}
        &~~~~~~~~
        \sup_{\{a_k\},\{b_k\}} \exp \left(\frac{1}{r-1}\Phi\log d \left(a_1+\sum_{k=2}^K b_k + o_d(1)\right)\right) \cdot \exp\left(-\frac{d}{{n-1 \choose r-1}}E\right)
        \\
        &\leq 
        \sup_{\{a_k\},\{b_k\}} \exp \left(\frac{1}{r-1}\Phi\log d \left(a_1+\sum_{k=2}^K b_k + o_d(1)\right)-\frac{1}{r}\Phi \log d \left(a_1^r+\sum_{k=2}^K rb_k(a_k-b_k)^{r-1}+o_d(1)\right)\right)
        \\
        &\leq 
        \sup_{\{a_k\},\{b_k\}} \exp \left(\Phi\log d \left(\frac{a_1}{r-1} -\frac{a_1^r}{r}+\sum_{k=2}^K \frac{b_k}{r-1}- {b_k}(a_k-b_k)^{r-1}+o_d(1)\right)\right)
        \\
        &= 
        \sup_{\{a_k\},\{b_k\}} \exp \left(\Phi\log d \left(\frac{a_1}{r-1} -\frac{a_1^r}{r}-\sum_{k=2}^K {b_k}\left((a_k-b_k)^{r-1}-\frac{1}{r-1}\right)+o_d(1)\right)\right)
    \end{align*}

    From a basic calculus argument, we may conclude the following:
    \[\sup_{a \in \R} \left(\frac{a}{r-1} - \frac{a^r}{r}\right) = \frac{1}{r}\left(\frac{1}{r-1}\right)^{\frac{1}{r-1}}.\]
    In addition, by the bounds on $a_k$ and $b_k$,
    we can bound the other term as follows:
    \begin{align*}
        &~~~~~~{b_k}\left((a_k-b_k)^{r-1}-\frac{1}{r-1}\right) \\
        &\geq
        \frac{\varepsilon}{4}\left(\frac{1}{r-1}\right)^{\frac{1}{r-1}}
        \left[\left((1+\varepsilon)\left(\frac{1}{r-1}\right)^{\frac{1}{r-1}}-\frac{\varepsilon}{2}\left(\frac{1}{r-1}\right)^{\frac{1}{r-1}}\right)^{r-1}-\frac{1}{r-1}\right]
        \\
        &= 
        \frac{\varepsilon}{4} \left(\frac{1}{r-1}\right)^{\frac{r}{r-1}}\left[\left(1+\frac{\varepsilon}{2}\right)^{r-1}-1\right]
        \\
        &\geq
        \frac{\varepsilon}{4} \left(\frac{1}{r-1}\right)^{\frac{r}{r-1}}\left[1+\left(\frac{\varepsilon}{2}\right)^{r-1}-1\right] 
        \\
        &\geq
        \frac{\varepsilon^r}{2^{r+1}} \left(\frac{1}{r-1}\right)^{\frac{r}{r-1}}
    \end{align*}
    Combing the above with the assumption that $K \geq \left \lceil \frac{2^{r+1}}{\varepsilon^r}\right \rceil+ 1$, we have
    \begin{align*}
        &~~~~~\frac{a_1}{r-1} -\frac{a_1^r}{r}-\sum_{k=2}^K {b_k}\left((a_k-b_k)^{r-1}-\frac{1}{r-1}\right)
        \\
        &\leq \frac{1}{r}\left(\frac{1}{r-1}\right)^{\frac{1}{r-1}} - \sum_{k=2}^K \frac{\varepsilon^r}{2^{r+1}} \left(\frac{1}{r-1}\right)^{\frac{r}{r-1}}
        \\
        &= 
        \frac{1}{r}\left(\frac{1}{r-1}\right)^{\frac{1}{r-1}} - (K-1)\frac{\varepsilon^r}{2^{r+1}} \left(\frac{1}{r-1}\right)^{\frac{r}{r-1}}
        \\
        &\leq
        -\frac{1}{r(r-1)}\left(\frac{1}{r-1}\right)^{\frac{1}{r-1}}.
    \end{align*}

    Put everything together, we have 
    \begin{align*}
        \E[N] &= n^{2K}(T+1)^K \sup_{\{a_k\},\{b_k\}} \exp \left(\frac{1}{r-1}\Phi\log d \left(a_1+\sum_{k=2}^K b_k + o_d(1)\right)\right) \cdot \exp\left(-\frac{d}{{n-1 \choose r-1}}E\right) 
        \\
        &\leq
        n^{2K}(T+1)^K \exp \left(\Phi\log d \left(-\frac{1}{r(r-1)}\left(\frac{1}{r-1}\right)^{\frac{1}{r-1}}+o_d(1)\right)\right) 
        \\
        &= \exp\left(-\Omega(n)\right),
    \end{align*}
    for sufficiently large $d$, as desired.
\end{proof}

Next, we show that no independent set in $\Hrnp$ has a large intersection with some fixed set of vertices.

\begin{lemma}\label{Lemma:no_large_inter_fix}
    Let $\varepsilon>0$, $r\geq 2$, and $a >0$ be constants. 
    Fix $S \subseteq [n]$ with $|S| \leq a\left(\frac{\log d}{d}\right)^{\frac{1}{r-1}}n$. If $d$ is sufficiently large in terms of $\varepsilon$, $r$, and $a$, then with probability $1-\exp(\Omega(n))$ there is no independent set $T$ in $\Hrnp$ such that $|T \cap S| \geq \varepsilon\left(\frac{\log d}{d}\right)^{\frac{1}{r-1}}n.$
\end{lemma}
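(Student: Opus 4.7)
The plan is a direct first-moment computation. Set $\Phi \coloneqq (\log d/d)^{1/(r-1)} n$ and $k \coloneqq \lceil \varepsilon \Phi \rceil$. First I would observe the following reduction: if there exists an independent set $T$ in $H \sim \Hrnp$ with $|T \cap S| \geq \varepsilon \Phi$, then $T \cap S$ is itself independent (as a subset of an independent set) and contains a subset $U \subseteq S$ of size exactly $k$ that is also independent. It therefore suffices to prove that, with probability $1 - \exp(-\Omega(n))$, no $k$-subset of $S$ is independent in $H$. By Markov's inequality this reduces further to showing that the expected number $\E[N]$ of independent $k$-subsets of $S$ satisfies $\E[N] \leq \exp(-\Omega(n))$.

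Since $|S| \leq a \Phi$ and each potential edge of $H$ is present independently with probability $p = d/\binom{n-1}{r-1}$, I would estimate
\[
    \E[N] \,\leq\, \binom{|S|}{k}(1-p)^{\binom{k}{r}} \,\leq\, \left(\frac{ea}{\varepsilon}\right)^{k}\exp\!\left(-\,p\binom{k}{r}\right).
\]
The next step is to evaluate $p\binom{k}{r}$ asymptotically. Using $\binom{k}{r} = (1+o_n(1))(\varepsilon\Phi)^r/r!$ and $\binom{n-1}{r-1} = (1+o_n(1))\,n^{r-1}/(r-1)!$, together with the identity $\Phi^r/n^{r-1} = (\log d/d)\cdot \Phi$, one obtains
\[
    p\binom{k}{r} \,=\, (1+o_n(1))\,\frac{\varepsilon^r d}{r}\cdot \frac{\Phi^r}{n^{r-1}} \,=\, (1+o_n(1))\,\frac{\varepsilon^r \log d}{r}\,\Phi.
\]

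Taking logarithms in the bound on $\E[N]$ and dividing by $\Phi$, the exponent is at most
\[
    \Phi\!\left[\,\varepsilon\log(ea/\varepsilon) \,-\, \frac{\varepsilon^r \log d}{r}\bigl(1+o_n(1)\bigr)\right].
\]
For $d$ sufficiently large in terms of $\varepsilon$, $r$, and $a$, the second term dominates the first, so the bracketed quantity is bounded above by a negative constant depending only on these parameters. Since $\Phi = \Theta(n)$ for fixed $d$ and $r$, this yields $\E[N] \leq \exp(-\Omega(n))$, and Markov's inequality then delivers the lemma. I do not foresee any serious obstacle: this is a textbook first-moment argument, and the only minor points of care are (i) absorbing the ceiling $k = \lceil \varepsilon\Phi\rceil$ versus $\varepsilon\Phi$ into $1+o_n(1)$ factors, and (ii) choosing $d$ large enough so that the entropy term $\varepsilon\log(ea/\varepsilon)$ is dominated by the $\log d$ factor coming from the expected number of edges inside a $k$-set.
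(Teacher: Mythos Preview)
Your proposal is correct and follows essentially the same first-moment argument as the paper: both reduce to counting independent $k$-subsets of $S$ with $k=\lceil\varepsilon\Phi\rceil$, bound the expectation by $\binom{|S|}{k}(1-p)^{\binom{k}{r}}\le (ea/\varepsilon)^k\exp(-p\binom{k}{r})$, and show this is $\exp(-\Omega(n))$ for $d$ large. The only cosmetic difference is that the paper writes $b\Phi$ for $k$ and absorbs the entropy term as $o_d(1)$ relative to $\Phi\log d$, exactly as you do.
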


\begin{proof}
    As usual, we let $\Phi \coloneqq \left(\frac{\log d}{d}\right)^{\frac{1}{r-1}}n$. Let $N$ be the number of subsets $S^{'}$ of $S$ such that $|S'| = \left\lceil \varepsilon \Phi\right\rceil :=b\Phi$ and $S'$ is an independent set in $\Hrnp$. It remains to show that $N=0$ with high probability.
    To this end, we have
    \begin{align*}
        \E[N] &= {|S| \choose b\Phi} \left(1-\frac{d}{{n-1 \choose r-1}}\right) ^{b\Phi \choose r} 
        \\
        &\leq
        \left(\frac{e a}{b}\right)^{b\Phi}
        \exp\left(-\frac{d}{{n-1 \choose r-1}}\cdot {b\Phi \choose r} \right)
        \\
        &\leq
        \exp\left(b\Phi \log \left(\frac{e a}{b}\right)- \Phi\log d \cdot \frac{b^{r}}{r} +O(1)\right)
        \\
        &= 
        \exp\left(\Phi\log d \left(- \frac{b^{r}}{r}+o_d(1)\right)\right)
        \\
        &= \exp\left(-\Omega(n)\right),
    \end{align*}
    for sufficiently large $d$. The result follows by Markov’s inequality.
\end{proof}

\subsection{Proof of the Intractability Result}

We are now ready to prove Theorem~\ref{theorem: low-deg hypergraph impossibility}.
We will show that no $(D, \Gamma, c)$-stable algorithm can find a large independent set in $\Hrnp$ in the sense of Definition~\ref{Def:f_Optimize_ind}.
As a result of Lemma~\ref{lemma: stable}, this would complete the proof of the impossibility result.

\begin{Proposition}\label{proposition: stability}
    For any $r\geq 2$ and $\eps > 0$, there exist $K,\, d_0 > 0$ such that for any $d \geq d_0$ there exists $n_0, \eta, C_1, C_2 > 0$ such that for any $n \geq n_0$, $\xi \geq 1$, $1 \leq D \leq \dfrac{C_1n}{\xi\log n}$, and $\delta \leq \exp\left(-C_2\xi\,D\log n\right)$, if 
    \[k \,\geq\, n\,(1+\eps)\left(\frac{1}{r-1}\cdot\frac{\log d}{d}\right)^{1/(r-1)},\]
    there is no $\left(D, K-1, \frac{\varepsilon}{32 \xi(1+\varepsilon)}\right)$-stable function that $(k, \delta, \xi, \eta)$-optimizes the independent set problem in $\Hrnp$ for $p = d/\binom{n-1}{r-1}$.
\end{Proposition}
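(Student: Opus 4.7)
The plan is to derive a contradiction via the ensemble Overlap Gap Property. Writing $\Phi \coloneqq (\log d / d)^{1/(r-1)} n$ and choosing $K \geq \lceil 2^{r+1}/\varepsilon^r\rceil + 1$, I will assume toward contradiction that some $(D, K-1, c)$-stable function $f$ with $c = \varepsilon/(32\xi(1+\varepsilon))$ successfully $(k, \delta, \xi, \eta)$-optimizes the independent set problem, and I will construct a sequence of $K$ independent sets $S_1, \ldots, S_K$ lying along the interpolation path $A^{(0)}, \ldots, A^{(T)}$ of length $T = (K-1)m$ from Definition~\ref{definition: interpolation path} that realizes the forbidden configuration of Proposition~\ref{Prop:OGP}. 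The main object is the time-dependent family $S(t) \coloneqq V^\eta_f(A^{(t)})$, each of size at least $k$ for every $t$ with probability at least $1 - (T+1)\delta$ by a union bound. Without loss of generality I may take $k$ exactly at the threshold $\lceil(1+\varepsilon)(1/(r-1))^{1/(r-1)}\Phi\rceil$, since if the conclusion holds for this value it holds for all larger $k$.

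The sets $S_k$ are chosen inductively by setting $t_1 = 0$, $S_1 = S(0)$, and, given $S_1, \ldots, S_{k-1}$ with times $0 = t_1 < \cdots < t_{k-1}$, tracking the integer-valued function $\phi(t) \coloneqq |S(t) \setminus U_{k-1}|$ on $t \in [t_{k-1}, t_{k-1} + m]$, where $U_{k-1} = \bigcup_{\ell<k} S_\ell$. The goal is to land $\phi(t_k)$ inside the target interval of (H3). At the left endpoint $\phi(t_{k-1}) = 0$, which is below the interval. At the right endpoint $t_{k-1}+m$, every coordinate of $A$ has been resampled at least once since $t_{k-1}$, so $A^{(t_{k-1}+m)}$ is independent of $U_{k-1}$; since $|U_{k-1}|/\Phi$ is bounded by a universal constant (using the upper bound of Theorem~\ref{theorem: hypergraph independence stat thresh} applied to each $S_\ell$), Lemma~\ref{Lemma:no_large_inter_fix} gives $|S(t_{k-1}+m) \cap U_{k-1}| < (\varepsilon/4)(1/(r-1))^{1/(r-1)}\Phi$ with probability $1 - \exp(-\Omega(n))$, whence $\phi(t_{k-1}+m) > (\varepsilon/2)(1/(r-1))^{1/(r-1)}\Phi$, above the target interval.

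To connect the two endpoints by an intermediate value argument, I will bound the increments $|\phi(t) - \phi(t-1)| \leq |S(t) \triangle S(t-1)|$ via stability. Any $v \in S(t) \triangle S(t-1)$ with $S(t), S(t-1) \neq \varnothing$ either has $f_v$ crossing the full gap between $1/2$ and $1$ (contributing at least $1/4$ to $\|f(A^{(t)}) - f(A^{(t-1)})\|^2$), or lies in the ``bad'' set $(I \setminus \Tilde{I}) \cup J$ of Definition~\ref{Def:V_eta_f} at one of the two times, of size at most $\eta n$. Hence $|S(t) \triangle S(t-1)| \leq 4\|f(A^{(t)}) - f(A^{(t-1)})\|^2 + 2\eta n$. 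On the stability event this is at most $4c\xi k + 2\eta n = (\varepsilon/8 + o(1))(1/(r-1))^{1/(r-1)}\Phi$ for $\eta$ sufficiently small, strictly less than the width $(\varepsilon/4)(1/(r-1))^{1/(r-1)}\Phi$ of the target interval. Therefore some $t_k \in (t_{k-1}, t_{k-1}+m]$ lands $\phi(t_k)$ inside the interval, and I set $S_k = S(t_k)$. After $K-1$ inductive steps $t_K \leq (K-1)m = T$, so the construction stays inside the path.

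Finally, union-bounding the good events --- stability (probability at least $p^{4(K-1)D/c}$), the $(T+1)\delta$ checkpoint failures, and the $K$ applications each of Lemma~\ref{Lemma:no_large_inter_fix} and of the upper tail of Theorem~\ref{theorem: hypergraph independence stat thresh} (each failing with probability $\exp(-\Omega(n))$) --- leaves positive joint probability provided $C_2$ in $\delta \leq \exp(-C_2 \xi D \log n)$ is chosen large enough to dominate the $n^{-O(\xi D)}$ stability probability. Under this joint event the constructed sequence $(S_1, \ldots, S_K)$ satisfies (H1)--(H3), contradicting Proposition~\ref{Prop:OGP} once $d$ is large enough. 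The main technical obstacle is the simultaneous calibration of the constants: $c$ must be small enough that a single hypercube step cannot vault over the OGP window (forcing $c = \Theta(\varepsilon/\xi)$, which in turn dictates the $n^{O(\xi D)}$ scaling of the stability probability and thus the constants $C_1, C_2$); the rounding tolerance $\eta$ must dominate neither the target interval nor the stability increment; and $K$ must exceed the $2^{r+1}/\varepsilon^r$ threshold so that Proposition~\ref{Prop:OGP} actually forbids the configuration.
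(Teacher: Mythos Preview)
Your proposal is correct and follows essentially the same route as the paper: the same choice of $K$, $c$, and interpolation length $T=(K-1)m$, the same inductive construction of $S_k$ as the first $V^\eta_f(A^{(t)})$ whose novelty $|S(t)\setminus U_{k-1}|$ reaches the lower threshold of (H3), the same stability-based bound on $|S(t)\triangle S(t-1)|$ to ensure the upper threshold is not overshot, the same use of Lemma~\ref{Lemma:no_large_inter_fix} at the independent endpoint $t_{k-1}+m$, and the same final union bound against Proposition~\ref{Prop:OGP}. If anything, your bookkeeping of the $(1/(r-1))^{1/(r-1)}$ factor in the increment bound (via $4c\xi k$ rather than $4c\xi(1+\varepsilon)\Phi$) is slightly cleaner than the paper's.
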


\begin{proof}
    Fix $\epsilon > 0$, and let $K = \left \lceil \frac{2^{r+1}}{\varepsilon^r}\right \rceil+ 1$, $T=(K-1)m$, $c = \frac{\varepsilon}{32 \xi(1+\varepsilon)}$ and $\eta = \frac{\varepsilon}{16} \left(\frac{\log d}{(r-1)d}\right)^{\frac{1}{r-1}}$. Let $d_0 = d_0(\varepsilon)$ large enough to apply Theorem~\ref{theorem: hypergraph independence stat thresh}, Proposition~\ref{Prop:OGP}, and Lemma~\ref{Lemma:no_large_inter_fix}.
    For the convenience of this proof, we let $\Phi := \left(\frac{\log d}{(r-1)d}\right)^{\frac{1}{r-1}}n$, which differs from the previous definition only by a multiplicative constant. 
    
    Assume for a contradiction that such a degree-$D$ polynomial $f$ indeed exists. Sample the interpolation path $A^{(0)}, \ldots, A^{(T)}$ as in Definition~\ref{definition: interpolation path}, and let $V_{t} = V^{\eta}_f\left(A^{(t)}\right)$ be the resulting independent sets.
    We construct a sequence of sets $S_1, \ldots, S_K$ based on $\{V_t\}$ as follows: let $S_1 = V_0$, and for $k \geq 2$, let $S_k$ be the first $V_t$ such that $|V_t \setminus \cup_{l < k}S_l| \geq \frac{\varepsilon}{4} \left(\frac{\log d}{d(r-1)}\right)^{\frac{1}{r-1}} n$; if no such $t$ exists then the process fails. We define the following three events:
    \begin{enumerate}[label=(E\arabic*)]
        \item\label{event:proc_succ} $|V_t| \geq (1+\varepsilon)\Phi$ for all $t \in [T]$, and the process of constructing $S_1, \ldots, S_K$ succeeds.

        \item\label{event:no_c_bad} No edge of the interpolation path is $c$-bad for $f$.

        \item\label{event:ogp} The forbidden structure of Proposition \ref{Prop:OGP} does not exist.
    \end{enumerate}

    In the following claim, we show that if \ref{event:proc_succ} and \ref{event:no_c_bad} occur, then the forbidden structure will be produced, which implies that \ref{event:proc_succ}, \ref{event:no_c_bad}, and \ref{event:ogp} cannot occur concurrently.

    \begin{Claim} \label{Lemma:1,2->not 3}
        If \ref{event:proc_succ} and \ref{event:no_c_bad} both occur, then the  sequence of sets $S_1, \ldots, S_k$ satisfies the properties of the forbidden structure of Proposition~\ref{Prop:OGP}.
    \end{Claim}

    \begin{proof}
        First we show that $|V_t \triangle V_{t-1}| \leq \frac{\varepsilon}{4}  \left(\frac{1}{r-1}\cdot\frac{\log d}{d}\right)^{\frac{1}{r-1}}n = \frac{\varepsilon}{4}  \Phi $. From \ref{event:proc_succ} we know that the failure event setting $V^{\eta}_f = \varnothing$ does not occur for all $t \in [T]$. By Definition~\ref{Def:V_eta_f}, $i \in V_t \triangle V_{t-1}$ if one of two events occur:
        \begin{itemize}
            \item $i \in (I \setminus \Tilde{I}) \cup J$ for either $V^{\eta}_f\left(A^{(t)}\right)$ or $V^{\eta}_f\left(A^{(t-1)}\right)$, or
            \item one of $f_i\left(A^{(t)}\right)$ and $f_i\left(A^{(t-1)}\right)$ is $\geq 1$ and the other is $\leq \frac{1}{2}$. 
        \end{itemize}

        Notice that the first case occurs for at most $2\eta n$ coordinates as $|(I \setminus \Tilde{I}) \cup J| \leq \eta n$ in Definition~\ref{Def:V_eta_f}. 
        Hence the second case occurs for at least $|V_t \triangle V_{t-1}|-2\eta n$ coordinates $i \in [n]$ and for those coordinates $i$ we have $\left|f_i\left(A^{(t)}\right)-f_i\left(A^{(t-1)}\right)\right| \geq \frac{1}{2}$. Together with event \ref{event:no_c_bad}, this implies
        \begin{align*}
            \frac{1}{4}\big(|V_t \triangle V_{t-1}|-2\eta n\big) \leq \big\|f(A^{(t)})-f(A^{(t-1)})\big\|^2 \leq c \cdot \mathbb{E}_{A \sim \Hrnp}\big[\|f(A)\|^2\big].
        \end{align*}
        By assumption, the independent set problem is $(k, \delta, \xi, \eta\big)$-optimized by the function $f$, and so we have
        \begin{align*}
            \mathbb{E}_{A \sim \Hrnp}\big[\|f(A)\|^2\big] \leq \xi(1+\varepsilon)\Phi.
        \end{align*}
        Putting the above two inequalities together, we have
        \begin{align*}
            |V_t \triangle V_{t-1}| \leq 4c\xi(1+\varepsilon)\Phi+2\eta n 
            = 4 \cdot \frac{\varepsilon}{32 \xi (1+\varepsilon)} \cdot \xi (1+\varepsilon)\Phi+ 2 \cdot \frac{\varepsilon}{16}\Phi 
            = \frac{\varepsilon}{4}\Phi.
        \end{align*}
        Recall that $S_k$ is the first $V_t$ for which $|V_t \setminus \cup_{l \leq k}S_l| \geq \frac{\varepsilon}{4}\Phi$. In particular, we must have $|V_{t-1} \setminus \cup_{l \leq k}S_l| < \frac{\varepsilon}{4}\Phi$.
        With this observation in hand, we have
        \begin{align*}
            |S_k \setminus \cup_{l \leq k}S_l| 
            &= |V_t \setminus \cup_{l \leq k}S_l| 
            = |(V_t \cap V_{t-1}) \setminus \cup_{l \leq k}S_l|
            + |(V_t \setminus V_{t-1}) \setminus \cup_{l \leq k}S_l| \\
            &\leq |V_{t-1} \setminus \cup_{l \leq k}S_l| + |V_t \triangle V_{t-1}| \leq \frac{\varepsilon}{2}\Phi.
        \end{align*}
        Thus, $|S_k \setminus \cup_{l \leq k}S_l| \in [\frac{\varepsilon}{4}\Phi, \frac{\varepsilon}{2}\Phi]$. for all $2 \leq k \leq K$. By construction, $S_k$ is an independent set in $A^{(t)}$ for some $t$, and so we conclude that $S_1, \cdots, S_k$ satisfies the properties of the forbidden structure in Proposition~\ref{Prop:OGP}.
    \end{proof}

    Next, we will show that with positive probability the events \ref{event:proc_succ}, \ref{event:no_c_bad}, and \ref{event:ogp} occur simultaneously.
    This would complete the proof as by Claim~\ref{Lemma:1,2->not 3}, this implies a contradiction and so no such $f$ can exist.

    We first  bound  the probability that event \ref{event:proc_succ} occurs. Since the independent set problem is $(k, \delta, \xi, \eta)$-optimized by $f$, we have
    \[\mathbb{P}_{A\sim\Hrnp}\left[\left|V^\eta_f(A) \right| \geq (1+\varepsilon)\Phi\right] \geq 1-\delta.\]
    Combining with Theorem \ref{theorem: hypergraph independence stat thresh}, we have $(1+\varepsilon)\Phi \leq |V_t| \leq \left(r^{\frac{1}{r-1}}+\varepsilon\right)\Phi$ with probability at least $1-\delta-\exp(-\Omega(n))$. 
    Now suppose that for some $0 \leq T' \leq T-m$, we have sampled $A^{(0)}, \ldots, A^{(T')}$ and $S_1 = V_0, \ldots, S_{K'} = V_{t_{K'}}$ have been successfully selected for some $K' < K$ and $t_{K'} \leq T'$.
    As mentioned previously, $A^{(T'+m)}$ is independent from $A^{(0)}, \ldots, A^{(T')}$. 
    So provided $|S_k| \leq \left(r^{\frac{1}{r-1}}+\varepsilon\right)\Phi$ for $1 \leq k \leq K'$, we may apply Lemma~\ref{Lemma:no_large_inter_fix} with $S = \cup_{k \leq K'}S_k$ and $a =  \left(r^{\frac{1}{r-1}}+\varepsilon\right)\left(\frac{1}{r-1}\right)^{\frac{1}{r-1}} K'$.
    As a result, $|V_{T'+m} \cap (\cup_{k \leq K'}S_k) | \leq \varepsilon \Phi$ with probability $1-\exp(-\Omega(n))$.
    Therefore, we have
    \[
    |V_{T'+m} \setminus \cup_{k \leq K'}S_k| = |V_{T'+m}| - |V_{T'+m} \cap (\cup_{k \leq K'}S_k)| \geq \Phi \geq \frac{\varepsilon}{4}\Phi.
    \]
    This implies that we can find $S_{K'+1} =  V_t$ for some $t \leq T'+m$ and thus by induction the process succeeds by timestep $T = (K-1)m$. By a union bound over $t$, \ref{event:proc_succ} holds with probability at least $1-\delta(T+1)-\exp(-\Omega(n))$.

    By Definition~\ref{definition: (D, G, c)-stable} and the stability of $f$, \ref{event:no_c_bad} holds with probability at least $\left(d/{n-1 \choose r-1}\right)^{4(K-1)D/c}$. By Proposition \ref{Prop:OGP}, \ref{event:ogp} holds with probability $1-\exp(-\Omega(n))$. We claim that it suffices to have 
    \begin{align} \label{Ineq:C_1andC_2}
         \left(\frac{d}{{n-1 \choose r-1}}\right)^{4(K-1)D/c} \geq 2\exp(-Cn), \quad
        \text{ and } \quad
         \left(\frac{d}{{n-1 \choose r-1}}\right)^{4(K-1)D/c} \geq 2\delta Kn^r
    \end{align}
    for some constant $C=C(\varepsilon,d)$ to conclude that all three events happen simultaneously with non-zero probability, since \eqref{Ineq:C_1andC_2} implies
    \begin{align*}
        \left(\frac{d}{{n-1 \choose r-1}}\right)^{4(K-1)D/c} &\geq  \exp(-Cn)+\delta Kn^r \\
        &> \exp(-\Omega(n))+\delta Km \\
        &> \exp(-\Omega(n))+\delta(T+1).
    \end{align*}
    For $d>1$, the first inequality in \eqref{Ineq:C_1andC_2} is equivalent to $D \leq \frac{c(Cn -\log 2)}{4(K-1)\left(\log{n-1 \choose r-1}-\log d\right)}$
    which is implied by 
    \[D \leq \frac{c(Cn - \log 2)}{4(K-1)(r-1)\log n} = \frac{\varepsilon(Cn - \log 2)}{128(1+\varepsilon)(K-1)(r-1)\xi\log n}.\]
    And this is implied by $D < \frac{C_1n}{\xi \log n}$, for large enough $n$ and some constant $C_1 = C_1(\varepsilon, d)>0$. 
    The second inequality is implied by 
    \begin{align}\label{Ineq:delta}
        \delta \leq \exp \left(-\frac{4(K-1)(r-1)D}{c}\log n-r\log n-\log (2K)\right).
    \end{align}
    For large enough $n$, given $\xi,D \geq 1$, there exists another constant $C_2 = C_2(\varepsilon, d)>0$ such that (\ref{Ineq:delta}) is implied by $\delta \leq \exp\left(-C_2\xi D \log n\right)$.
\end{proof}

\section{Balanced Independent Sets in $\Hrrnp$}\label{section: balanced}

In this section, we will consider the random $r$-uniform $r$-partite hypergraph $\Hrrnp$ for $p = d/n^{r-1}$ (see Definition~\ref{definition: models}).
We split this section into four subsections.
First, we prove the statistical threshold stated in Theorem~\ref{theorem: stat thresh balanced}.
In the second subsection, we describe a degree-$1$ algorithm in order to prove the tractability result in Theorem~\ref{theorem: low-deg thresh balanced}.
In the final two subsections, we prove the impossibility result of Theorem~\ref{theorem: low-deg thresh balanced} by first proving a version of the \textit{Overlap Gap Property} for balanced independent sets in $\Hrrnp$ and then applying this result to prove intractability of low-degree algorithms for balanced independent sets in the stated regime.

\subsection{Statistical Threshold}\label{subsection: IT bound}

In this section, we will prove Theorem~\ref{theorem: stat thresh balanced}.
Let $n,\, d,\, p,\, \boldgamma,\, H$ be as in the statement of the theorem.
To assist with our proof, we define the following parameters:
\[c_\gamma = \frac{1}{(r-1)\prod_i\gamma_i}, \quad \text{and} \quad f \coloneqq \left(\frac{c_{\boldgamma}\log d}{d}\right)^{1/(r-1)}.\]
Furthermore, we let $q \in \N$ be the smallest integer such that $q\gamma_i \in \N$ for each $i \in [r]$.
We will assume $d_0$ is sufficiently large in terms of $q$.
To prove the upper bound, we will show that
\begin{align}\label{eq: high prob ub for balanced}
    \P[\alpha_{\boldgamma}(H) \leq (1+\eps)nf] = 1 - o(1), \quad \text{as } n \to \infty,
\end{align}
for any $\eps > 0$.
To this end, we note the following by a union bound over all possible $\boldgamma$-balanced independent sets of size $(1+\eps)rnf$:
\begin{align*}
    &~\P[\alpha_{\boldgamma}(H) \geq (1+\eps)nf] \\
    &\leq \left(\prod_{i = 1}^r\binom{n}{\gamma_i\,(1+\eps)\,n\,f}\right)\left(1 - \frac{d}{n^{r-1}}\right)^{((1+\eps)\,n\,f)^r\prod_{i = 1}^r\gamma_i} \\
    &\leq \left(\prod_{i = 1}^r\left(\frac{en}{\gamma_i\,(1+\eps)\,n\,f}\right)^{\gamma_i\,(1+\eps)\,n\,f}\right)\,\exp\left(- (1+\eps)^r\,n\,f^rd\prod_{i = 1}^r\gamma_i\right) \\
    &\leq \exp\left((1+\eps)\,n\,f\left(1 - \log f - \sum_{i = 1}^r\gamma_i\log\gamma_i\right) - (1+\eps)^r\,n\,f^rd\prod_{i = 1}^r\gamma_i\right) \\
    &= \exp\left((1+\eps)\,n\,f\left(1 - \log f - \sum_{i = 1}^r\gamma_i\log\gamma_i - (1+\eps)^{r-1}c_{\boldgamma}\,\log d\prod_{i = 1}^r\gamma_i\right)\right).
\end{align*}
Note that $-\sum_{i = 1}^r\gamma_i\log\gamma_i$ subject to the constraint $\sum_i\gamma_i = 1$ is maximized at $\gamma_i = 1/r$ for all $i$.
With this in hand, plugging in the value for $c_{\boldgamma}$, we may conclude that
\begin{align*}
    \P[\alpha_{\boldgamma}(H) \geq (1+\eps)nf] &\leq \exp\left((1+\eps)\,n\,f\left(1 + \log r - \log f - \frac{(1+\eps)^{r-1}\log d}{r-1}\right)\right) \\
    &= \exp\left((1+\eps)\,n\,f\left(1 + \log r - \frac{\log c_{\boldgamma}}{r-1} - \frac{((1+\eps)^{r-1} - 1)\log d}{r-1} - \frac{\log\log d}{r-1}\right)\right) \\
    &\leq \exp\left((1+\eps)\,n\,f\left(1 + \log r - \frac{\log c_{\boldgamma}}{r-1} - \eps\log d - \frac{\log\log d}{r-1}\right)\right) \\
    &= \exp\left(-\Omega(n)\right),
\end{align*}
for $d$ large enough, completing the proof of \eqref{eq: high prob ub for balanced}.

Now, to prove the lower bound, we would like to show that
\begin{align}\label{eq: high prob lb for balanced}
    \P[\alpha_{\boldgamma}(H) \geq (1-\eps)nf] = 1 - o(1), \quad \text{as } n \to \infty.
\end{align}
To this end, we will apply the approach of \cite{perkins2024hardness} for bipartite graphs, which was inspired by the proof of \cite{frieze1990independence} for ordinary graphs.
For each $i \in [r]$, denote the vertices of $V_i$ as $\set{1_i, \ldots, n_i}$.
Let $m = \left(\dfrac{d}{(\log d)^r}\right)^{\frac{1}{r-1}}$, $n' = n/m$, and let $P_{i, j} = \set{((j - 1)m + 1)_i, \ldots, (jm)_i}$ for each $i \in [r]$ and $j \in [n']$.
We say $I$ is a $\boldgamma$-balanced $P$-independent set if $I$ is $\boldgamma$-balanced and $|I \cap P_{i, j}| \leq 1$ for each $i \in [r]$ and $j \in [n']$.
Let $\beta(H)$ denote the size of the largest $\boldgamma$-balanced $P$-independent set in a hypergraph $H \sim \mathcal{H}(r, n, p)$ and let $Z_k$ denote the number of such independent sets of size $k$.

Let us first show that $\beta(H)$ is concentrated about its mean.

\begin{lemma}\label{lemma: beta bounds}
    Let $H \sim \mathcal{H}(r, n, p)$ and let $\overline{\beta} = \E[\beta(H)]$. Then
    \[\P\left[|\beta(H) - \overline{\beta}| \geq \lambda\right] \leq 2\exp\left(-\frac{\lambda^2m\,q^2}{2rn}\right).\]
\end{lemma}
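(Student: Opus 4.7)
The plan is to apply Azuma--Hoeffding to the Doob martingale that exposes the blocks of the partition $P$ one at a time. Fix an arbitrary ordering $B_1,\dots,B_N$ of the $N=rn/m$ blocks $\{P_{i,j}\}_{i\in[r],\,j\in[n']}$, and let $\mathcal{F}_t$ be the $\sigma$-algebra generated by the edge indicators of all hyperedges whose vertex set lies inside $B_1\cup\cdots\cup B_t$. Since every hyperedge of $\Hrrnp$ contains exactly one vertex from each $V_i$ (hence from some block of each partition), every edge is $\mathcal{F}_N$-measurable, so $M_t:=\E[\beta(H)\mid \mathcal{F}_t]$ is a Doob martingale with $M_0=\bar\beta$ and $M_N=\beta(H)$.

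The heart of the argument is a bounded-difference estimate $|M_t-M_{t-1}|\leq c_t$ with $c_t\leq 1/\gamma_{i(t)}+O(1)$, where $i(t)$ denotes the partition index of $B_t$. To establish it, it suffices to prove the following deterministic Lipschitz statement: if $H$ and $H'$ are two hypergraphs that agree on every hyperedge not incident to $B_t$, then $|\beta(H)-\beta(H')|\leq 1/\gamma_{i(t)}+O(1)$. Let $I$ be an optimal $\boldgamma$-balanced $P$-independent set in $H$. Since $|I\cap B_t|\leq 1$ by $P$-independence, there are two cases. If $I\cap B_t=\varnothing$, no changed hyperedge can lie inside $I$, so $I$ remains $\boldgamma$-balanced and $P$-independent in $H'$. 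If $I\cap B_t=\{v\}$, then every hyperedge of $H'$ lying inside $I$ must use a vertex of $B_t$, hence $v$, so $I\setminus\{v\}$ is $P$-independent in $H'$. This set has $\gamma_{i(t)}|I|-1$ vertices in $V_{i(t)}$ and $\gamma_j|I|$ vertices in $V_j$ for $j\neq i(t)$, so $V_{i(t)}$ is the binding partition: a greedy truncation within each $V_j$ produces a $\boldgamma$-balanced, $P$-independent subset $I'\subseteq I\setminus\{v\}$ of size $\lfloor |I|-1/\gamma_{i(t)}\rfloor$, up to an additive constant absorbing the integrality requirement $\gamma_j|I'|\in\mathbb{Z}$. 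Preserving $P$-independence is automatic because $I'\subseteq I$ and $I$ meets every block at most once.

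Feeding this Lipschitz bound into Azuma--Hoeffding and summing the squared increments by partition yields
\[
\sum_{t=1}^{N} c_t^{2} \leq \sum_{i=1}^{r} n'\cdot \frac{1}{\gamma_i^{2}} = \frac{n}{m}\sum_{i=1}^{r}\frac{1}{\gamma_i^{2}} \leq \frac{rn}{m\,\min_{i\in[r]}\gamma_i^{2}},
\]
and hence
\[
\P\!\left[|\beta(H)-\bar\beta|\geq \lambda\right] \leq 2\exp\!\left(-\frac{\lambda^{2}}{2\sum_{t}c_t^{2}}\right) \leq 2\exp\!\left(-\frac{\lambda^{2}\, m\,\min_{i\in[r]}\gamma_i^{2}}{2rn}\right),
\]
which is the claimed bound.

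The main obstacle is the bounded-difference step, specifically verifying that removing the single ``bad'' vertex $v$ and then truncating to restore $\boldgamma$-balance costs only $1/\gamma_{i(t)}+O(1)$ vertices while retaining $P$-independence. All other steps are a routine appeal to a block-exposure Doob martingale; the $O(1)$ integrality slack per coordinate is harmless in the asymptotic regime in which the lemma will be applied, and in particular does not alter the leading constant in the exponent.
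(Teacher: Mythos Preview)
Your proof is correct and follows essentially the same approach as the paper: both expose the hypergraph block by block (you via a Doob martingale and Azuma--Hoeffding, the paper via an explicit edge partition indexed by the blocks $P_{i,j}$ and McDiarmid's inequality) and both rely on the same Lipschitz observation that changing only edges incident to a single block $P_{i,j}$ can disturb an optimal $\boldgamma$-balanced $P$-independent set by at most one vertex in $V_i$, costing $1/\gamma_i$ after rebalancing. Your write-up is in fact more explicit than the paper's on the Lipschitz step (the paper simply asserts the $1/\gamma_i$ bound), and the $O(1)$ integrality slack you flag is present in the paper's argument as well and is immaterial for the lemma's application.
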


We will apply McDiarmid's inequality to prove this lemma.

\begin{Theorem}[McDiarmid's inequality]\label{lemma:mcd}
    Let $X = f(\vec{Z})$, where $\vec Z = (Z_1, \ldots, Z_t)$ and the $Z_i$ are independent random variables.
    Assume the function $f$ has the property that whenever $\vec z,\, \vec w$ differ in only one coordinate we have $|f(\vec z) - f(\vec w)| \leq c$. Then, for all $\lambda > 0$ we have
    \[\P[|X - \E[X]| \geq \lambda] \leq 2\exp\left(-\frac{\lambda^2}{2c^2t}\right).\]
\end{Theorem}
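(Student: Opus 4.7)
The proof will be a direct application of McDiarmid's inequality (Theorem~\ref{lemma:mcd}). The plan is to express $\beta(H)$ as a function of $t = rn' = rn/m$ mutually independent random variables indexed by the blocks $\{P_{i,j}\}_{(i,j)\in [r]\times[n']}$, and to verify that changing any single variable alters $\beta$ by at most $c = 1/\min_{k \in [r]}\gamma_k$. Plugging these values into McDiarmid yields exactly the exponent $-\lambda^2 m \min_i \gamma_i^2 / (2rn)$ stated in the lemma.

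First, I would decompose the randomness in $H$ into block-indexed groups. Each potential edge $e = (v_1, \ldots, v_r) \in V_1 \times \cdots \times V_r$ sits inside a unique block-tuple $(P_{1,j_1(e)}, \ldots, P_{r,j_r(e)})$, where $j_i(e)$ is the index of the block of $V_i$ containing $v_i$. Using a fixed deterministic rule (for example, defining $i^\star(e) \in [r]$ by some hash of $e$), assign each edge to exactly one of the $r$ blocks it meets, namely $(i^\star(e),\, j_{i^\star(e)}(e))$. Let $Z_{i,j}$ record the $\mathsf{Ber}(p)$ indicators of the edges assigned to the block $(i,j)$. Since the edge indicators of $H$ are independent and the assignment partitions the edge set, the $rn'$ variables $\{Z_{i,j}\}_{(i,j) \in [r] \times [n']}$ are mutually independent and jointly determine $H$, so $\beta(H)$ is a function of them.

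Next I would verify the bounded-differences condition. Suppose $\vec Z$ and $\vec Z^{\,\prime}$ differ only in the coordinate $(i,j)$, and let $H, H'$ be the corresponding hypergraphs. Any edge differing between $H$ and $H'$ belongs to $Z_{i,j}$, hence is incident (through its $i$-th coordinate) to the block $P_{i,j}$. Let $I$ attain $\beta(H)$; since $I$ is $P$-independent, $|I \cap P_{i,j}| \leq 1$, so every $H'$-edge contained in $I$ must pass through the (at most one) vertex $v \in I \cap P_{i,j}$. Deleting $v$ produces a $P$-independent subset of $V(H')$ that is independent in $H'$ but loses one vertex from $V_i$ and is therefore no longer $\boldgamma$-balanced. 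Rebalancing by greedily dropping further vertices in proportion to $\boldgamma$ yields a $\boldgamma$-balanced $P$-independent set in $H'$ of size at least $|I| - \lceil 1/\gamma_i \rceil \geq \beta(H) - 1/\min_k \gamma_k$ (up to rounding). The symmetric argument gives $\beta(H) \geq \beta(H') - c$.

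Applying Theorem~\ref{lemma:mcd} with $t = rn/m$ and $c = 1/\min_{k}\gamma_k$ then yields
\[
    \P\!\left[|\beta(H) - \overline{\beta}| \geq \lambda\right] \;\leq\; 2\exp\!\left(-\frac{\lambda^2}{2 c^2 t}\right) \;=\; 2\exp\!\left(-\frac{\lambda^2\, m \min_{i}\gamma_i^2}{2rn}\right),
\]
as desired. The one delicate point is the integer-rounding in the rebalancing step: strictly speaking, $\gamma_k|I'|$ must be an integer for each $k$, which can force the drop to be slightly larger than $1/\gamma_i$. I would handle this either by choosing $m$ so that $n'$ is divisible by the denominators of the (rational) $\gamma_i$'s, or simply by noting that the resulting additive slack is an $n$-independent constant that can be absorbed into $\lambda$ without affecting the stated exponent.
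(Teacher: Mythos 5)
You have proved the wrong statement. The statement you were asked to prove is Theorem~\ref{lemma:mcd} itself --- McDiarmid's bounded-differences inequality, a classical concentration result which the paper simply cites without proof. Your write-up instead proves Lemma~\ref{lemma: beta bounds} (the concentration of $\beta(H)$ around its mean), and in fact does so \emph{by invoking} Theorem~\ref{lemma:mcd} as a black box. Taken as a proof of Theorem~\ref{lemma:mcd}, the argument is circular: you cannot establish McDiarmid's inequality by ``a direct application of McDiarmid's inequality.''

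The actual proof of Theorem~\ref{lemma:mcd} has a different shape entirely. One constructs the Doob martingale $M_i = \E[f(\vec Z) \mid Z_1, \ldots, Z_i]$ for $i = 0, 1, \ldots, t$, so that $M_0 = \E[X]$ and $M_t = X$. The bounded-differences hypothesis implies that each increment $M_i - M_{i-1}$ lies, conditionally on $Z_1, \ldots, Z_{i-1}$, in an interval of length at most $c$ (one shows this by writing $M_i - M_{i-1}$ as an integral over the law of $Z_i$ of differences of $f$ in the $i$-th coordinate). The Azuma--Hoeffding inequality for martingales with bounded increments then gives $\P[|M_t - M_0| \geq \lambda] \leq 2\exp(-\lambda^2/(2c^2 t))$. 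If you were instead intending to prove Lemma~\ref{lemma: beta bounds}, then your content is essentially correct and closely tracks the paper's own ``vertex-based'' decomposition into the independent blocks $Z_{i,j}$; but that is not the statement at issue here.
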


\begin{proof}[Proof of Lemma~\ref{lemma: beta bounds}]
    We will employ the so-called ``vertex-based'' approach toward applying Theorem~\ref{lemma:mcd} to problems in random hypergraphs.
    First, we partition the edges of $K_{r\times n}$ into $rn'$ groups $\set{E_{ij}\,:\, 1 \leq i \leq r,\, 1 \leq j \leq n'}$. 
    We include $(v_1, \ldots, v_r) \in E_{ij}$ if and only if the following hold for $1 \leq l \leq r$:
    \begin{itemize}
        \item $v_l \in P_{l,j_l}$ for some $j_l > j$ if $l < i$, and
        \item $v_l \in P_{l,j_l}$ for some $j_l \geq j$ if $l \geq i$.
    \end{itemize}
    In other words, an edge $e=(v_1, \ldots, v_r)$ is assigned to $E_{ij}$ where $j$ is the minimum ``partition number'' of a vertex $v \in e$, and $i$ is the minimum value $l$ such that $v_l$ is in $P_{l,j}$.

    For each $1 \leq i \leq r$ and $1 \leq j \leq n'$, we let $Z_{ij}$ denote the outcomes of the edges in $E_{ij}$.
    Clearly, the variables $Z_{ij}$ are independent.
    Furthermore, we claim that changing the outcomes of the edges in a single $E_{ij}$ can change the value $\beta(\cdot)$ by at most $q$.
    We will prove this claim by considering a few cases.
    To this end, let $\vec Z$ denote an arbitrary outcome of the $Z_{ij}$'s, let $H$ denote the hypergraph determined by these outcomes, and let $I$ be the $\boldgamma$-balanced $P$-independent set in $H$ satisfying $|I| = \beta(H)$.
    Let $\vec Z'$ be obtained from $\vec Z$ by changing the outcomes of the edges in $E_{ij}$ for some $i,j$, let $H'$ be the corresponding hypergraph and let $I'$ be the largest $\boldgamma$-balanced $P$-independent set in $H'$.
    \begin{enumerate}[label=(\textbf{Case \arabic*}),leftmargin = \leftmargin + 1\parindent, wide]
        \item $I \cap P_{i, j} = \emptyset$.
        Note that $I$ is still a $\boldgamma$-balanced $P$-independent set in $H'$ and so $|I'| \geq |I|$.
        If $I' \cap P_{i,j} = \emptyset$, then $I'$ is a $\boldgamma$-balanced $P$-independent set in $H$ and so $|I'| = |I|$.
        Suppose $|I' \cap P_{i,j}| = 1$.
        Then, there exists a $\boldgamma$-balanced $P$-independent set $J' \subseteq I'$ in $H'$ such that $|J'| \geq |I'| - q$ and $J' \cap P_{i, j} = \emptyset$.
        As $J'$ is also a $\boldgamma$-balanced $P$-independent set in $H$, it follows that $|J'| \leq |I|$.
        In particular, $|I'| \leq |I| + q$, as desired.

        \item $|I \cap P_{i,j}| = 1$.
        Note that there exists a $\boldgamma$-balanced $P$-independent set $J \subseteq I$ in $H$ such that $|J| \geq |I| - q$ and $J \cap P_{i, j} = \emptyset$.
        As $J$ is a $\boldgamma$-balanced $P$-independent set in $H'$, it follows that $|J| \leq |I'|$.
        In particular, $|I'| \geq |I| - q$.
        If $I' \cap P_{i,j} = \emptyset$, then $I'$ is a $\boldgamma$-balanced $P$-independent set in $H$, implying $|I'| \leq |I|$.
        If not, an identical argument to the previous case shows that $|I'| \leq |I| + q$, as desired.
        
    \end{enumerate}
    
    Therefore, we may apply Lemma~\ref{lemma:mcd} with $c = q$ and $t = rn'$ to get
    \[\P\left[|\beta(H) - \overline{\beta}| \geq \lambda\right] \leq 2\exp\left(-\frac{\lambda^2\,q^2}{2rn'}\right),\]
    as claimed.
\end{proof}

Next, we show there exists a large $\boldgamma$-balanced $P$-independent set in $H \sim \mathcal{H}(r, n, p)$ with positive probability.

\begin{lemma}\label{lemma: large P-ISET}
    Let $k = (1- \rho)nf$ for some $\rho > 0$ sufficiently small.
    Then,
    \[\P[Z_k > 0] \geq \exp\left(\frac{50n(c_\gamma\log d)^{\frac{1}{2(r-1)}}}{d^{\frac{1 - \rho/4}{r - 1}}}\right).\]
\end{lemma}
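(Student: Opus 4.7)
The plan is to apply the second moment method (Paley--Zygmund inequality) to $Z_k$, which gives
\[\P[Z_k > 0] \,\geq\, \frac{(\E Z_k)^2}{\E Z_k^2}.\]
First, I will compute the first moment. A $\boldgamma$-balanced $P$-independent set of size $k$ is specified by choosing, within each partition $V_i$, a subset of $\gamma_i k$ of the $n'$ blocks $\{P_{i,1}, \ldots, P_{i,n'}\}$ together with one vertex from each selected block, and then requiring that none of the $\prod_{i} \gamma_i k$ potential $r$-partite edges be present. This gives
\[\E Z_k \,=\, \prod_{i=1}^{r} \binom{n'}{\gamma_i k}\, m^{\gamma_i k}\, (1-p)^{\prod_{i=1}^{r} \gamma_i k}.\]

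Next, I will compute the second moment by enumerating ordered pairs $(I_1, I_2)$ by their overlap profile $\boldsymbol{\ell} = (\ell_1, \ldots, \ell_r)$, where $\ell_i = |I_1 \cap I_2 \cap V_i|$. Because each edge of $\Hrrnp$ contains exactly one vertex per partition, the number of potential edges contained in $I_1 \cap I_2$ is precisely $\prod_{i} \ell_i$, while each of $I_1, I_2$ contains $\prod_{i} \gamma_i k$ potential edges. Inclusion--exclusion yields
\[\E Z_k^2 \,=\, \sum_{\boldsymbol{\ell}} \prod_{i=1}^{r} \binom{n'}{\gamma_i k}\binom{\gamma_i k}{\ell_i}\binom{n' - \gamma_i k}{\gamma_i k - \ell_i}\, m^{2\gamma_i k - \ell_i}\, (1-p)^{2\prod_j \gamma_j k - \prod_j \ell_j}.\]
Dividing, the key quantity to control is
\[\frac{\E Z_k^2}{(\E Z_k)^2} \,=\, \sum_{\boldsymbol{\ell}} \prod_{i=1}^{r} \frac{\binom{\gamma_i k}{\ell_i}\binom{n' - \gamma_i k}{\gamma_i k - \ell_i}}{\binom{n'}{\gamma_i k}\, m^{\ell_i}}\, (1-p)^{-\prod_j \ell_j}.\]

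Third, I will show this sum is at most the reciprocal of the stated bound. The per-coordinate factor in the product is the hypergeometric probability that two uniformly random size-$\gamma_i k$ subsets of $[n']$ intersect in exactly $\ell_i$ blocks, thinned by $m^{-\ell_i}$ to account for the matching of representatives within blocks; this distribution concentrates near $\ell_i^\star \approx (\gamma_i k)^2/(n' m)$. Splitting the sum into a \emph{typical} region (where $|\ell_i - \ell_i^\star|$ is small for every $i$) and an \emph{atypical} region, the typical region is handled by Stirling-type estimates together with $(1-p)^{-\prod_i \ell_i^\star} \leq \exp(p \prod_i \ell_i^\star (1+o_d(1)))$; plugging in the definitions $f = (c_{\boldgamma}\log d/d)^{1/(r-1)}$, $m = (d/(\log d)^{r})^{1/(r-1)}$, and $n' = n/m$, a direct computation reduces this contribution to the desired exponential in $n d^{-(1-\rho/4)/(r-1)}$. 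The atypical region is absorbed via standard tail bounds for the hypergeometric distribution.

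The main obstacle will be controlling the atypical overlap region with the sharpness required by the target bound. The trade-off is delicate: $(1-p)^{-\prod_i \ell_i}$ grows like $\exp(p \prod_i \ell_i)$, and when several coordinates $\ell_i$ are simultaneously anomalously large this can become substantial, so the negative log of the hypergeometric probabilities must be shown to dominate coordinate-by-coordinate. The block size $m = (d/(\log d)^{r})^{1/(r-1)}$ has been chosen precisely so that the $m^{-\ell_i}$ thinning suppresses the largest deviations; verifying this carefully across the $r$-dimensional lattice of overlap profiles, and in particular checking that the optimization over $\boldsymbol{\ell}$ of the log-ratio is maximized at $\boldsymbol{\ell}^\star$ with the required quantitative margin, will be the principal technical step.
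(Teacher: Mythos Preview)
Your second-moment framework matches the paper, but two points need correction. First, the enumeration: the non-shared part of $I_2$ may reuse a block of $I_1$ with a different representative, so the exact count is not $\binom{n'-\gamma_i k}{\gamma_i k-\ell_i}$; the paper instead uses the upper bound $\binom{n'-\ell_i}{\gamma_i k-\ell_i}$ (your factor undercounts, giving an inequality in the wrong direction for bounding $\E Z_k^2$ from above). Second, and more seriously, your plan to center the analysis at the hypergeometric mean $\ell_i^\star\approx(\gamma_i k)^2/n$ and absorb the rest via hypergeometric tails misidentifies where the ratio is large. The tilt $(1-p)^{-\prod_j\ell_j}$ is strong enough that the summand is maximized in the \emph{large-overlap} regime where $\ell=\sum_i\ell_i$ is within order $\sqrt{kn/d^{(1-\rho/2)/(r-1)}}$ of $k$; this regime, not a neighborhood of $\boldsymbol{\ell}^\star$, is what produces the square root and the exponent $d^{-(1-\rho/4)/(r-1)}$ in the target bound. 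Near full overlap the tilt contributes $\exp(\Theta(k\log d))$, which a hypergeometric tail away from $\boldsymbol{\ell}^\star$ will not beat on its own, so this region cannot be dismissed as ``atypical.''

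The paper sidesteps the $r$-dimensional optimization by collapsing to one variable: it sets $\ell=\sum_i\ell_i$, uses the Vandermonde identity $\sum_{\ell_1+\cdots+\ell_r=\ell}\prod_i\binom{\gamma_i k}{\ell_i}=\binom{k}{\ell}$, and replaces $\prod_i\ell_i$ by its constrained maximum $m(\ell)$. It then splits at $\ell=\beta k$ with $\beta=(r^r\prod_i\gamma_i)^{1/(r-1)}$. For small $\ell$ the crude bound $m(\ell)\le(\ell/r)^r$ suffices and gives a contribution with a \emph{larger} power of $d$ in the denominator than the target. For large $\ell$ the paper telescopes the ratios $u_\ell/u_{\ell+1}$ down to $u_k$, shows $u_k\le 1$ (this is where $k$ lying below the first-moment threshold enters), and then optimizes $\bigl(e^2kn/((k-\ell)^2 d^{(1-\rho/2)/(r-1)})\bigr)^{k-\ell}$ over $k-\ell$ via $\max_x(A/x^2)^x=\exp(2A^{1/2}/e)$ to extract the square-root bound. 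If you wish to keep your multidimensional decomposition, you will need an analogous near-diagonal analysis rather than concentration around $\boldsymbol{\ell}^\star$.
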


\begin{proof}
    We will use the inequality $\P[Z_k > 0] \geq \E[Z_k]^2 / \E[Z_k^2]$.
    To this end, let us compute the mean of $Z_k$.
    We have
    \begin{align*}
        \E[Z_k] = \left(\prod_{i = 1}^r\binom{n'}{\gamma_ik}m^{\gamma_ik}\right)\left(1 - \frac{d}{n^{r-1}}\right)^{k^r\prod_{i = 1}^r\gamma_i}
        = \left(\prod_{i = 1}^r\binom{n'}{\gamma_ik}\right)m^{k}\left(1 - \frac{d}{n^{r-1}}\right)^{k^r\prod_{i = 1}^r\gamma_i}.
    \end{align*}
    To compute $\E[Z_k^2]$, we define $\mathcal{I}_k$ to be the set of all possible $\boldgamma$-balanced $P$-independent sets of size $k$ in $\mathcal{H}(r, n, p)$.
    For brevity, we let $A_I$ denote the event that $I$ is independent in $H$, where $I \in \mathcal{I}_k$.
    With this in hand, we have
    \begin{align*}
        \E[Z_k^2] &= \sum_{I, I' \in \mathcal{I}_k}\P[A_I,\, A_{I'}] \\
        &= \sum_{I\in \mathcal{I}_k}\P[A_{I}]\sum_{I' \in \mathcal{I}_k} \P[A_{I'} \mid A_{I} ]\\
        &= \sum_{I\in \mathcal{I}_k}\P[A_{I}]\sum_{I' \in \mathcal{I}_k}\left(1 - \frac{d}{n^{r-1}}\right)^{k^r\prod_{i = 1}^r\gamma_i - \prod_{i = 1}^r|I\cap I' \cap V_i|} \\
        &= \sum_{I\in \mathcal{I}_k}\P[A_{I}]\sum_{\ell_1= 0}^{\gamma_1 k}\cdots\sum_{\ell_r = 0}^{\gamma_r k}\sum_{\substack{I' \in \mathcal{I}_k, \\ |I \cap I' \cap V_i| = \ell_i}}\left(1 - \frac{d}{n^{r-1}}\right)^{k^r\prod_{i = 1}^r\gamma_i - \prod_{i = 1}^r\ell_i} \\
        &\leq \sum_{I\in \mathcal{I}_k}\P[A_{I}]\sum_{\ell_1= 0}^{\gamma_1 k}\cdots\sum_{\ell_r = 0}^{\gamma_r k}\left(\prod_{i = 1}^r\binom{\gamma_i k}{\ell_i}\binom{n' - \ell_i}{\gamma_i k - \ell_i}m^{\gamma_ik - \ell_i}\right)\left(1 - \frac{d}{n^{r-1}}\right)^{k^r\prod_{i = 1}^r\gamma_i - \prod_{i = 1}^r\ell_i} \\
        &= \E[Z_k]\,m^k\sum_{\ell_1= 0}^{\gamma_1 k}\cdots\sum_{\ell_r = 0}^{\gamma_r k}m^{-\sum_{i = 1}^r\ell_i}\left(\prod_{i = 1}^r\binom{\gamma_i k}{\ell_i}\binom{n' - \ell_i}{\gamma_i k - \ell_i}\right)\left(1 - \frac{d}{n^{r-1}}\right)^{k^r\prod_{i = 1}^r\gamma_i - \prod_{i = 1}^r\ell_i}.
    \end{align*}
    From here, we may simplify further as follows:
    \begin{align*}
        \frac{\E[Z_k^2]}{\E[Z_k]^2} &\leq \sum_{\ell_1= 0}^{\gamma_1 k}\cdots\sum_{\ell_r = 0}^{\gamma_r k} \frac{m^{-\sum_{i = 1}^r\ell_i}\left(\prod_{i = 1}^r\binom{\gamma_i k}{\ell_i}\binom{n' - \ell_i}{\gamma_i k - \ell_i}\right)\left(1 - \frac{d}{n^{r-1}}\right)^{k^r\prod_{i = 1}^r\gamma_i - \prod_{i = 1}^r\ell_i}}{\left(\prod_{i = 1}^r\binom{n'}{\gamma_ik}\right)\left(1 - \frac{d}{n^{r-1}}\right)^{k^r\prod_{i = 1}^r\gamma_i}} \\
        &= \sum_{\ell_1= 0}^{\gamma_1 k}\cdots\sum_{\ell_r = 0}^{\gamma_r k} m^{-\sum_{i = 1}^r\ell_i}\left(1 - \frac{d}{n^{r-1}}\right)^{- \prod_{i = 1}^r\ell_i}\left(\prod_{i = 1}^r\frac{\binom{\gamma_i k}{\ell_i}\binom{n' - \ell_i}{\gamma_i k - \ell_i}}{\binom{n'}{\gamma_ik}}\right).
    \end{align*}
    Note the following for any $i \in [r]$:
    \[\frac{\binom{n' - \ell_i}{\gamma_ik - \ell_i}}{\binom{n'}{\gamma_i k}} = \frac{\frac{(n' - \ell_i)!}{(\gamma_ik - \ell_i)!(n' - \gamma_ik)!}}{\frac{(n')!}{(n' - \gamma_ik)!(\gamma_ik)!}} = \frac{(n' - \ell_i)!}{(n')!}\,\frac{(\gamma_ik)!}{(\gamma_ik - \ell_i)!} \leq \left(\frac{\gamma_i k}{n'}\right)^{\ell_i}.\]
    Plugging this in above, we get
    \begin{align*}
        \frac{\E[Z_k^2]}{\E[Z_k]^2} &\leq \sum_{\ell_1= 0}^{\gamma_1 k}\cdots\sum_{\ell_r = 0}^{\gamma_r k} \left(\frac{k}{n'm}\right)^{\sum_{i = 1}^r\ell_i}\left(1 - \frac{d}{n^{r-1}}\right)^{- \prod_{i = 1}^r\ell_i}\left(\prod_{i = 1}^r\binom{\gamma_i k}{\ell_i}\gamma_i^{\ell_i}\right) \\
        &\leq \sum_{\ell = 0}^k \left(\frac{k}{n}\right)^{\ell}\left(1 - \frac{d}{n^{r-1}}\right)^{-m(\ell)}\sum_{\substack{\ell_1, \ldots, \ell_r \\ \sum_{i = 1}^r\ell_i = \ell}}\prod_{i = 1}^r\binom{\gamma_i k}{\ell_i},
    \end{align*}
    where we use the fact that $\gamma_i \leq 1$, $n'm = n$ and define $m(\ell)$ to be the maximum value of $\prod_{i = 1}^r\ell_i$ subject to the constraints $\sum_{i = 1}^r\ell_i = \ell$ and $\ell_i \leq \gamma_i\,k$ for all $i$. 
    Note that since $\sum_{i = 1}^r\gamma_i = 1$, the final term is just $\binom{k}{\ell}$ and so, we have
    \[\frac{\E[Z_k^2]}{\E[Z_k]^2} \leq \sum_{\ell = 0}^k \binom{k}{\ell}\,\left(\frac{k}{n}\right)^{\ell}\left(1 - \frac{d}{n^{r-1}}\right)^{-m(\ell)}.\]
    For $n$ large enough in terms of $d$, we conclude the following for $\delta = 1 - \frac{(1 - \rho)^{r-1}}{1 - \rho/2} \approx r\rho/2$:
    \[\frac{\E[Z_k^2]}{\E[Z_k]^2} \leq \sum_{\ell = 0}^k \binom{k}{\ell}\,\left(\frac{k}{n}\right)^{\ell}\exp\left(\frac{d\,m(\ell)}{(1 - \delta)n^{r-1}}\right).\]
    Let $u_\ell$ denote the parameter inside the sum. 
    The goal now is to provide an upper bound on $\sum_{\ell = 0}^k u_\ell$.
    To this end, we define the following parameter:
    \[\beta \coloneqq \left(r^r\prod_{i = 1}^r\gamma_i\right)^{\frac{1}{r-1}}.\]
    Note that $\beta \leq 1$ as $\prod_{i = 1}^r\gamma_i$ is largest when $\gamma_i = 1/r$ for all $i$.
    For $A > 0$, the following fact will be useful for our computations and can be verified by a simple calculus argument:
    \begin{align}\label{eq: calculus hw}
        \max_{x > 0}\left(\frac{A}{x^s}\right)^x = \exp\left(\frac{sA^{1/s}}{e}\right).
    \end{align}
    We will split into cases depending on the value of $\ell$.
    \begin{enumerate}[label=(\textbf{Case \arabic*}),leftmargin = \leftmargin + 1\parindent, wide] 
        \item $0 \leq \ell \leq \beta\,k$. 
        When considering $m(\ell)$, by relaxing the constraints $\ell_i \leq \gamma_ik$ we may conclude that $m(\ell) \leq (\ell/r)^r$.        
        With this in hand, we have:
        \begin{align*}
            \exp\left(\frac{d\,m(\ell)}{(1 - \delta)n^{r-1}}\right) &\leq \exp\left(\frac{d\,\ell^r}{(1 - \delta)n^{r-1}r^r}\right) \\
            &\leq \exp\left(\frac{d\,(\beta\,k)^{r-1}\ell}{(1 - \delta)n^{r-1}r^r}\right) \\
            &= \exp\left(\left(\frac{1 - \rho/2}{r-1}\right)\ell\,\log d\right)
        \end{align*}
        In particular, applying \eqref{eq: calculus hw} for $s = 1$ we have
        \begin{align*}
            u_\ell \leq \left(\frac{k\,e}{\ell}\,\frac{k}{n}\,d^{\frac{1 - \rho/2}{r-1}}\right)^\ell \leq \exp\left(f\,k\,d^{\frac{1 - \rho/2}{r-1}}\right) \leq \exp\left(\frac{n\,\left(c_{\boldgamma}\log d\right)^{2/(r-1)}}{d^{\frac{1 + \rho/2}{r-1}}}\right).
        \end{align*}

        \item $\beta k < \ell \leq k$. Consider $\ell < k$. We have the following:
        \begin{align*}
            \frac{u_\ell}{u_{\ell + 1}} &= \frac{(\ell + 1)}{(k - \ell)}\,\frac{n}{k}\,\exp\left(-\frac{d}{n^{r-1}(1-\delta)}\left(m(\ell + 1) - m(\ell)\right)\right) \\
            &\leq \frac{kn}{(k - \ell)\ell}\,\exp\left(-\frac{d}{n^{r-1}(1-\delta)}\left(m(\ell + 1) - m(\ell)\right)\right).
        \end{align*}        
        From here, we may conclude that
        \begin{align*}
            u_\ell &\leq \frac{(kn)^{k - \ell}}{(k - \ell)!\,\ell(\ell + 1)\cdots (k - 1)}\exp\left(-\frac{d}{n^{r-1}(1-\delta)}\left(m(k) - m(\ell)\right)\right)u_k \\
            &= \frac{(\ell - 1)!(kn)^{k - \ell}}{(k - \ell)!(k - 1)!}\exp\left(-\frac{d}{n^{r-1}(1-\delta)}\left(m(k) - m(\ell)\right)\right)u_k.
        \end{align*}
        Note that
        \[\binom{k - 1}{\ell - 1} \geq 1 \implies \frac{(k - 1)!}{(\ell - 1)!} \geq (k - \ell)!.\]
        In particular, we have
        \begin{align*}
            u_\ell &\leq \frac{(kn)^{k - \ell}}{(k - \ell)!^2}\exp\left(-\frac{d}{n^{r-1}(1-\delta)}\left(m(k) - m(\ell)\right)\right)u_k \\
            &\leq \left(\frac{e^2kn}{(k - \ell)^2}\right)^{k - \ell}\exp\left(-\frac{d}{n^{r-1}(1-\delta)}\left(m(k) - m(\ell)\right)\right)u_k.
        \end{align*}
        Let $\ell_1, \ldots, \ell_r$ be the parameters that maximize $m(\ell)$.
        There exists $i \in [r]$ such that $\ell_i \leq \gamma_i\,\ell$ (if not, we would violate the constraint $\sum_{i = 1}^r\ell_i = \ell$).
        For $j \neq i$, we also have $\ell_j \leq \gamma_jk$.
        Furthermore, it is easy to see that $m(k) = k^r\prod_{i = 1}^r\gamma_i$.
        Therefore, we have
        \[m(k) - m(\ell) = k^r\prod_{i = 1}^r\gamma_i - \prod_{i = 1}^r\ell_i \geq k^r\prod_{i = 1}^r\gamma_i - k^{r-1}\ell\prod_{i = 1}^r\gamma_i = (k - \ell)k^{r-1}\prod_{i = 1}^r\gamma_i.\]
        In particular,
        \begin{align*}
            \exp\left(-\frac{d}{n^{r-1}(1-\delta)}\left(m(k) - m(\ell)\right)\right) &\leq \exp\left(-\frac{d}{n^{r-1}(1-\delta)}\left((k - \ell)k^{r-1}\prod_{i = 1}^r\gamma_i\right)\right) \\
            &= \exp\left(-\left(\frac{1 - \rho/2}{r - 1}\right)(k - \ell)\log d\right).
        \end{align*}
        Finally, we may bound $u_k$ as follows:
        \begin{align*}
            u_k &= \left(\frac{k}{n}\right)^k\exp\left(\frac{dk^r\prod_{i = 1}^r\gamma_i}{(1- \delta)n^{r-1}}\right) \\
            &\leq \left(f\exp\left(\left(\frac{1 - \rho/2}{r - 1}\right)\log d\right)\right)^k \\
            &= \left( \frac{(c_\gamma \log d)^{\frac{1}{r-1}}}{d^{\frac{\rho/2}{r-1}}}\right)^k \leq 1,
        \end{align*}
        where the last step follows for $d$ large enough.
        Putting together all the pieces, we may conclude the following by applying \eqref{eq: calculus hw} with $s = 2$:
        \begin{align*}
            u_\ell \leq \left(\frac{e^2kn}{(k - \ell)^2\,d^{\frac{1- \rho/2}{r - 1}}}\right)^{k - \ell} \leq \exp\left(2\sqrt{\frac{kn}{d^{\frac{1- \rho/2}{r - 1}}}}\right) = \exp\left(\frac{2n(c_\gamma\log d)^{\frac{1}{2(r-1)}}}{d^{\frac{1 - \rho/4}{r - 1}}}\right).
        \end{align*}
    \end{enumerate}

    Putting together both cases, we have:
    \begin{align*}
        \sum_{\ell = 0}^k u_\ell &\leq k\left(\exp\left(\frac{n\,\left(c_{\boldgamma}\log d\right)^{2/(r-1)}}{d^{\frac{1 + \rho/2}{r-1}}}\right) + \exp\left(\frac{2n(c_\gamma\log d)^{\frac{1}{2(r-1)}}}{d^{\frac{1 - \rho/4}{r - 1}}}\right)\right) \\
        &\leq \exp\left(\frac{50n(c_\gamma\log d)^{\frac{1}{2(r-1)}}}{d^{\frac{1 - \rho/4}{r - 1}}}\right),
    \end{align*}
    for $d$ large enough, completing the proof.
\end{proof}

We are now ready to prove \eqref{eq: high prob lb for balanced}.
Setting $\lambda = \lambda_0 \coloneqq \dfrac{\eps\,n}{20d^{1/(r-1)}}$ in Lemma~\ref{lemma: beta bounds} and comparing with Lemma~\ref{lemma: large P-ISET} for $\rho = \eps/5$ we see that $\overline{\beta} \geq k - \lambda_0$.
Applying Lemma~\ref{lemma: beta bounds} with $\lambda = \lambda_0$ again provides the desired result.

\subsection{Proof of the Achievability Result}\label{subsection: balanced achievability}

In this section, we will construct a degree-$1$ polynomial to prove the first part of Theorem~\ref{theorem: low-deg thresh balanced}.
In particular, we will construct a polynomial $f\,:\,\set{0, 1}^{n^r} \to \R^{rn}$ that $(k_1, \ldots, k_r, \delta, \xi, 0)$-optimizes the maximum $\boldgamma$-balanced independent set problem in $H$ for some $\xi \coloneqq \xi(d, r, \boldgamma) \geq 1$ and $\delta = \exp\left(-\Omega(n)\right)$.
As mentioned in \S\ref{subsection: proof overview}, the algorithm we describe is adapted from that of \cite{dhawan2023balanced} for deterministic hypergraphs.

Without loss of generality, assume
\[k_i = \gamma_i\,n\,(1-\eps)\left(\frac{\gamma_{i^\star}\log d}{d(r-1)\prod_{i = 1}^r\gamma_i}\right)^{1/(r-1)}, \quad \text{for each } 1 \leq i \leq r.\]
If some $k_i$ is smaller, we can zero out some coordinates to achieve this.
Furthermore, assume $\eps \leq 1/(100r)$ and $i^\star = r$ (recall that $i^{\star} = \arg\max_i \gamma_i$).

For each $1 \leq i < r$, fix $L_i \subseteq V_i$ such that $|L_i| = k_i$ and for vertices $v \in V_i$, let $f_v(A) = \1\set{v \in L_i}$.
In particular, $f_v$ is a degree-$0$ polynomial for each $v \in V_1\cup \cdots \cup V_{r-1}$.
For $v \in V_r$, we define
\[f_v(A) = 1 - \sum_{\substack{v_1, \ldots, v_{r-1}, \\ v_i \in L_i}}A_{(v_1, \ldots, v_{r-1}, v)},\]
which is a degree-$1$ polynomial.
Let $L_r = \set{v\in V_r\,:\, f(v, A) = 1}$.
By definition of $f_v(A)$, the set $L_1\cup \cdots \cup L_{r}$ is an independent set in $H$.
Additionally, if $f_v(A) \neq 1$, then $f_v(A) \leq 0$ and so we may set $\eta = 0$, as desired.

It remains to show the conditions of Definition~\ref{Def:f_Optimize_ind_bal} are satisfied for some $\xi \geq 1$ and $\delta = o_n(1)$.
To this end, consider a vertex $v \in V_r$.
For $n$ large enough and $\eps$ small enough, we have the following:
\begin{align*}
    \P[f_v(A) = 1] &= \left(1 - \frac{d}{n^{r-1}}\right)^{\prod_{i = 1}^{r-1}k_i} \\
    &\geq \exp\left(-\frac{(1-\eps)^{r-1}}{(1 - r\eps/2)}\cdot \frac{1}{r-1}\cdot \log d\right) \\
    &\geq d^{-\left(\frac{1 - \tilde \eps}{r - 1}\right)},
\end{align*}
where we let $\tilde \eps = r\eps/4$.
Let $\mu \coloneqq \E\left[\sum_{v \in V_r}f_v(A)\right]$.
In particular, we may conclude that
\[\frac{k_r}{\mu} \leq \gamma_r\,\left(\frac{\gamma_{i^\star}\log d}{d^{\tilde \eps}\,(r-1)\,\prod_{i = 1}^r\gamma_i}\right)^{1/(r-1)} \leq \frac{1}{2},\]
for $d$ large enough.
As the events $\set{f_v(A) = 1}$ and $\set{f_u(A) = 1}$ are independent for $u, v \in V_r$ and $u \neq v$, by a simple Chernoff bound, we may conclude that
\[\P\left[\sum_{v \in V_r}f_v(A) < k_r\right] \leq \P\left[\sum_{v \in V_r}f_v(A) \leq \mu/2\right] = \exp\left(-\Omega\left(n\right)\right),\]
as desired.
As $|V_f^0(A) \cap V_i|$ is defined deterministically for $1 \leq i < r$, this completes the proof of the second condition of Definition~\ref{Def:f_Optimize_ind_bal}.

Let us now consider the first condition.
We have
\begin{align*}
    \E[f_v(A)^2] &= \sum_{N = 1 - \prod_{i = 1}^{r-1}k_i}^1N^2\P[f_v(A) = N] \\
    &= \P[f_v(A) = 1] + \sum_{N = 2}^{\prod_{i = 1}^{r-1}k_i}(N-1)^2\P[\deg_{L_1\times \cdots \times L_{r-1}}(v) = N],
\end{align*}
where $\deg_{L_1\times \cdots \times L_{r-1}}(v)$ is the number of edges incident on $v$ containing vertices from $L_1\cup \cdots \cup L_{r-1}$ (apart from $v$ itself, of course).
From here, we may conclude that
\begin{align*}
    \E[f_v(A)^2] &\leq 1 + \sum_{N = 2}^\infty (N-1)^2\P[\deg_{L_1\times \cdots \times L_{r-1}}(v) = N] \\
    &\leq 1 + \sum_{N = 0}^\infty N^2\P[\deg_{L_1\times \cdots \times L_{r-1}}(v) = N] \\
    &= 1 + \E[\deg_{L_1\times \cdots \times L_{r-1}}^2(v)] \\
    &\leq 1 + \E[\deg_{H}^2(v)] \,=\, O(d^2).
\end{align*}

It follows that
\begin{align*}
    \E[\|f\|^2] \,=\, \sum_{i = 1}^r\sum_{v \in V_i}\E[f_v(A)^2] \,\leq\, O(nd^2) + \sum_{i = 1}^{r-1}k_i \,\leq\, \xi\sum_{i = 1}^rk_i,
\end{align*}
for some $\xi \coloneqq \xi(d,r, \boldgamma)$ large enough.
This proves the first condition in Definition~\ref{Def:f_Optimize_ind_bal} and therefore, completes the proof of the achievability result in Theorem~\ref{theorem: low-deg thresh balanced}.

\subsection{Forbidden Structures}\label{subsection: OGP balanced}

Here we define a sequence of correlated random graphs and a corresponding forbidden structure.
Recall that we represent an $r$-uniform $r$-partite hypergraph $H \sim \Hrrnp$ by its edge indicator vector $A = \set{0, 1}^m$ where $m = n^r$.
Additionally, we let $V_1, \ldots, V_r$ denote the partition of the vertex set $V$ of $H$.

Recall the interpolation path for $\Hrrnp$ from Definition~\ref{definition: interpolation path}.
It is easy to see that the marginal distribution of $A^{(t)}$ is $\Hrrnp$.
Furthermore, $A^{(t+m)}$ is independent of $A^{(0)}, \ldots, A^{(t)}$ as all edges are resampled by then.
We will now show that a certain structure of independent sets across the correlated sequence of random hypergraphs exists only with exponentially small probability.

\begin{Proposition}\label{proposition: interpolating ogp balanced}
    Fix $r\geq 2$, $\eps > 0$, and $\boldgamma = (\gamma_1, \ldots, \gamma_r)$ such that $\gamma_i \in (0, 1) \cap \mathbb{Q}$ and $\sum_{i = 1}^r\gamma_i = 1$.
    Additionally, let $i^\star = \arg\max_i \gamma_i$.
    Let $K \in \N$ be a sufficiently large constant in terms of $r$, $\eps$, and $\boldgamma$.
    Consider an interpolation path $A^{(0)}, \ldots, A^{(T)}$ of length $T = n^{O(r)}$.
    If $d \coloneqq d(\eps, r, \boldgamma, K)$ is sufficiently large, then with probability $1 - \exp\left(-\Omega(n)\right)$ there does not exist a sequence of subsets $S_1, \ldots, S_K$ of $V$ satisfying the following:
    \begin{enumerate}[label=(B\arabic*)]
        \item\label{item: exists ISET} for each $k \in [K]$, there exists $t_k \in [T]$ such that $S_k$ is a $\boldgamma$-balanced independent set in $A^{(t_k)}$,
        \item\label{item: S large balanced} for each $k \in [K]$ and $i \in [r]$, we have $|S_k \cap V_i| \geq (1+\eps)\,\gamma_i\,n\left(\frac{\gamma_{i^\star}\log d}{d\,(r-1)\prod_{i = 1}^r\gamma_i}\right)^{1/(r-1)}$.
        \item\label{item: big change} for $2 \leq k \leq K$ and $\tilde \eps = \eps\,\min_{i \in [r]}\gamma_i$, we have 
        \[|S_k \setminus\bigcup_{j < k}S_j| \in \left[\frac{\tilde\eps}{4r}\,n\left(\frac{\gamma_{i^\star}\log d}{d\,(r-1)\prod_{i = 1}^r\gamma_i}\right)^{1/(r-1)},\, \frac{\tilde\eps}{2r}\,n\left(\frac{\gamma_{i^\star}\log d}{d\,(r-1)\prod_{i = 1}^r\gamma_i}\right)^{1/(r-1)}\right].\]
    \end{enumerate}
\end{Proposition}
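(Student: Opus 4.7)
The plan is to carry out a first moment argument analogous to the proof of Proposition~\ref{Prop:OGP}, with the main additional work being the multipartite/vectorial nature of the sizes involved. Let $N$ denote the number of sequences $(S_1,\ldots,S_K)$ satisfying \ref{item: exists ISET}--\ref{item: big change} for some choice of $(t_1,\ldots,t_K)$; I will show $\E[N] \leq \exp(-\Omega(n))$, after which Markov's inequality yields the claim. Write
\[
\Psi \coloneqq n\left(\frac{\gamma_{i^\star}\log d}{d\, r^{r-1}(r-1)\prod_i\gamma_i}\right)^{\!1/(r-1)},
\]
and, for each $k \in [K]$ and $i \in [r]$, set $a_{k,i} := |S_k \cap V_i|/\Psi$ and $b_{k,i} := |(S_k\setminus\bigcup_{j<k}S_j)\cap V_i|/\Psi$. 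Property~\ref{item: S large balanced} then reads $a_{k,i} \geq (1+\eps)\gamma_i r$, while \ref{item: big change} gives $\sum_i b_{k,i} \in [\tilde\eps/4,\tilde\eps/2]$. The number of possible size vectors is at most $(n+1)^{O(rK)}$, and there are at most $(T+1)^K = n^{O(rK)}$ choices of times, both of which will be absorbed into an exponential bound.

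For fixed sizes, Stirling bounds the number of sequences $(S_1,\ldots,S_K)$ of those sizes by
\[
\exp\!\left(\tfrac{\Psi\log d}{r-1}\!\left(\sum_i a_{1,i}+\sum_{k=2}^K\sum_i b_{k,i}\right) + o(\Psi\log d)\right).
\]
For fixed sizes, sequence and times, the probability that each $S_k$ is independent in $A^{(t_k)}$ is at most $(1-p)^{E}$, where $E$ lower-bounds the number of $r$-partite edges contained in some $S_k$; counting tuples in $S_k$ with at least one vertex new to $S_k$ gives
\[
E \;\geq\; \Psi^r\!\left(\prod_i a_{1,i}+\sum_{k=2}^K\!\left(\prod_i a_{k,i}-\prod_i (a_{k,i}-b_{k,i})\right)\!\right).
\]
Using $d\Psi^r/n^{r-1} = C_\gamma^{-1}\Psi\log d$ with $C_\gamma := \gamma_{i^\star}/(r^{r-1}(r-1)\prod_i\gamma_i)$, and noting $\Psi\log d = \Theta(n)$ for fixed $d$, the entire estimate reduces to producing $\delta = \delta(\eps,r,\boldgamma) > 0$ with
\[
F \;:=\; \tfrac{1}{r-1}\!\left(\sum_i a_{1,i}+\sum_{k\geq 2}\sum_i b_{k,i}\right)-C_\gamma\!\left(\prod_i a_{1,i}+\sum_{k\geq 2}\!\left(\prod_i a_{k,i}-\prod_i(a_{k,i}-b_{k,i})\right)\!\right) \;\leq\; -\delta
\]
uniformly over feasible $\{a_{k,i},b_{k,i}\}$.

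The main obstacle is the optimization of $F$, which is subtler than its single-index analog in the proof of Proposition~\ref{Prop:OGP} because the ``$a_1$-part'' of $F$ is \emph{not} nonpositive on its domain. Substituting $a_{1,i} = r\gamma_i x_i$ with $x_i \geq 1+\eps$ turns the $a_1$-part into $\tfrac{r}{r-1}G(\mathbf x)$ with $G(\mathbf x) = \sum_i\gamma_i x_i - \gamma_{i^\star}\prod_i x_i$; since $\gamma_{i^\star}(1+\eps)^{r-1} > \gamma_j$ for every $j$, $G$ tends to $-\infty$ as $\max_i x_i \to \infty$, so its supremum on the feasible region is a finite constant $M=M(\eps,r,\boldgamma)$. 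To offset $M$ I would use the elementary inequality (from applying the fundamental theorem of calculus to $t\mapsto\prod_i(a_i-(1-t)b_i)$)
\[
\prod_i a_i - \prod_i(a_i - b_i) \;\geq\; \sum_i b_i \prod_{j\neq i}(a_j - b_j),
\]
combined with the fact that $\sum_i b_{k,i} \leq \tilde\eps/2 \leq \eps\gamma_j/2$, which forces $a_{k,j}-b_{k,j}\geq(1+\eps/2)\gamma_j r$. Using $\sum_i b_{k,i}/\gamma_i\geq b_k/\gamma_{i^\star}$ (since $\gamma_{i^\star}=\max_i\gamma_i$), these combine to yield
\[
C_\gamma\!\left(\prod_i a_{k,i}-\prod_i(a_{k,i}-b_{k,i})\right) \;\geq\; \frac{(1+\eps/2)^{r-1}}{r-1}\,b_k,
\]
whence by Bernoulli's inequality each $b_k$-contribution to $F$ is at most $-b_k\eps/2 \leq -\eps\tilde\eps/8$. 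Summing across the $K-1$ indices $k\geq 2$ yields $F \leq M - (K-1)\eps\tilde\eps/8$, which is $\leq -\delta$ once $K$ is taken large enough in terms of $\eps,r,\boldgamma$; absorbing the polynomial union-bound factors then delivers $\E[N]\leq \exp(-\Omega(n))$, as required.
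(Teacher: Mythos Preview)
Your approach is essentially the same first-moment argument as the paper's proof: normalize sizes by $\Psi$, count tuples via Stirling, lower-bound the number $E$ of covered edges, and show the resulting exponent is $\leq -\delta\,\Psi\log d$ by separating the ``$a_1$-part'' from the ``$b_k$-parts''. Your telescoping bound $\prod_i a_i - \prod_i(a_i-b_i)\ge\sum_i b_i\prod_{j\ne i}(a_j-b_j)$ is the same inequality the paper proves inductively as Claim~\ref{claim: induction stuff}, and your observation that $G(\mathbf x)=\sum_i\gamma_i x_i-\gamma_{i^\star}\prod_i x_i$ has a finite supremum on $\{x_i\ge 1+\eps\}$ plays the role of the paper's elementary fact $\max_{x_i>1}\bigl(\sum_i x_i-\prod_i x_i\bigr)\le r-1$.

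There is one genuine gap in your write-up. Your Stirling estimate
\[
\exp\!\left(\tfrac{\Psi\log d}{r-1}\Bigl(\textstyle\sum_i a_{1,i}+\sum_{k\ge2}\sum_i b_{k,i}\Bigr)+o(\Psi\log d)\right)
\]
silently absorbs the terms coming from $\binom{c\Psi}{(a_{k,i}-b_{k,i})\Psi}$ into the $o(\Psi\log d)$, and this is only valid if $c$ and all the $a_{k,i}$ are bounded by constants \emph{independent of $d$}. You never establish such an upper bound: property~\ref{item: S large balanced} only bounds $|S_k\cap V_i|$ from below. The paper handles this explicitly by invoking the upper bound in Theorem~\ref{theorem: stat thresh balanced} to conclude (with probability $1-\exp(-\Omega(n))$) that $\sum_i a_{k,i}\le(1+\eps)r\gamma_{i^\star}^{-1/(r-1)}$, whence $c\le C(r,\eps,K)$ is a bounded constant; you should do the same. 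Once that is in place your argument goes through.
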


\begin{proof}
    Without loss of generality, let $i^\star = r$.
    Let $N$ denote the number of tuples $(S_1, \ldots, S_K)$ satisfying conditions \ref{item: exists ISET}--\ref{item: big change}.
    We will show that $\E[N]$ is exponentially small (the result then follows by Markov's inequality).
    To this end, let
    \[\Phi \coloneqq \frac{n}{r}\left(\frac{\log d}{d\,(r-1)\prod_{i = 1}^{r-1}\gamma_i}\right)^{1/(r-1)},\]
    and for each $i \in [r]$ and $k \in [K]$, we define:
    \[a_{i, k} = \frac{|S_k \bigcap V_i|}{\Phi}, \qquad b_{i, k} = \frac{|(S_k \setminus\bigcup_{j < k}S_j) \cap V_i|}{\Phi}, \qquad \text{and} \qquad b_k = \sum_{i = 1}^r b_{i, k}.\]
    Conditions \ref{item: S large balanced} and \ref{item: big change} imply that $a_{i, k} \geq (1+\eps)\gamma_ir$ and $b_k \in [\tilde \eps/4,\, \tilde \eps/2]$.
    Then for $c = |\cup_{k}S_k|/\Phi$, we have $c \leq \sum_{i}a_{i, 1}(or a_{i,1}) + (K - 1)\tilde \eps/2$.
    As a result of the upper bound in Theorem~\ref{theorem: stat thresh balanced}, we may assume that $\sum_{i}a_{i, 1} \leq (1+\eps)r\,\gamma_r^{-1/(r-1)}$.
    In particular, $c = C(r, \eps, K)$ is a bounded constant independent of $d$.
    There are at most $n^{2rK}$ choices for $\set{a_{i, k}}$ and $\set{b_{i, k}}$ ($\set{b_k}$ can be determined from $\set{b_{i, k}}$).
    Once these are fixed, the number of tuples $(S_1, \ldots, S_K)$ is at most
    \begin{align*}
        &~\left(\prod_{i = 1}^r\binom{n}{a_{i, 1}\Phi}\right)\prod_{k = 2}^K\prod_{i = 1}^r\binom{n}{b_{i, k}\Phi}\binom{c\Phi}{(a_{i, k} - b_{i, k})\Phi} \\
        &\leq \left(\prod_{i = 1}^r\left(\frac{en}{a_{i, 1}\Phi}\right)^{a_{i, 1}\Phi}\right)\prod_{k = 2}^K\prod_{i = 1}^r\left(\frac{en}{b_{i, k}\Phi}\right)^{b_{i, k}\Phi}\left(\frac{ec}{(a_{i, k} - b_{i, k})}\right)^{(a_{i, k} - b_{i, k})\Phi} \\
        &= \exp\left(\sum_{i = 1}^ra_{i, 1}\Phi\log \left(\frac{en}{a_{i, 1}\Phi}\right) + \sum_{k = 2}^K\sum_{i = 1}^r\left(b_{i, k}\Phi\log \left(\frac{en}{b_{i, k}\Phi}\right) + (a_{i, k} - b_{i, k})\Phi\log \left(\frac{ec}{(a_{i, k} - b_{i, k})}\right)\right)\right).
    \end{align*}
    Plugging in the value for $\Phi$ wherever $n/\Phi$ appears, the above is at most
    \begin{align*}
        \exp\left(\frac{\Phi\,\log d}{r - 1}\left(\sum_{i = 1}^ra_{i, 1} + \sum_{k = 2}^Kb_k + o_d(1)\right)\right).
    \end{align*}
    
    Now, for a fixed $\set{S_k}$, there are at most $(T+ 1)^k$ choices for $\set{t_k}$.
    Let $E$ be the edges in $K_{r\times n}$ for which there is some $k \in [K]$ such that $e \subseteq S_k$.
    For fixed $\set{S_k}$ and $\set{t_k}$, condition \ref{item: exists ISET} is satisfied if and only if a certain collection of at least $|E|$ independent non-edges occur in the sampling of $\set{A^{(t)}}$.
    This occurs with probability at most 
    \[\left(1 - \frac{d}{n^{r-1}}\right)^{|E|} \leq \exp\left(-\frac{|E|\,d}{n^{r-1}}\right).\]
    Furthermore, we have
    \begin{align*}
        |E| &\geq \Phi^r\prod_{i = 1}^ra_{i, 1} + \Phi^r\sum_{k = 2}^K\left(\prod_{i  =1}^ra_{i, k} - \prod_{i = 1}^r(a_{i, k} - b_{i, k})\right) \\
        &= \frac{\Phi\,r\gamma_r\,n^{r-1}\log d}{d(r-1)}\left(\prod_{i = 1}^r\frac{a_{i, 1}}{r\gamma_i} + \sum_{k = 2}^K\left(\prod_{i  =1}^r\frac{a_{i, k}}{r\gamma_i} - \prod_{i = 1}^r\frac{(a_{i, k} - b_{i, k})}{r\gamma_i}\right)\right).
    \end{align*}
    Let $\tilde a_{i,k} = \frac{a_{i, k}}{r\gamma_i}$ and $\tilde b_{i,k} = \frac{b_{i, k}}{r\gamma_i}$.
    Note that $\tilde a_{i, k} > 1$.
    It can be verified by a simple calculus argument that
    \[\sup_{x_i > 1}\,\sum_{i = 1}^rx_i - \prod_{i = 1}^rx_i \leq r - 1.\]
    In particular, we have
    \[\prod_{i = 1}^r\tilde a_{i,k} \,\geq\, \sum_{i = 1}^r\tilde a_{i,k} - r + 1.\]
    The following simple claim will assist with our proof.

    \begin{Claim}\label{claim: induction stuff}
        For $r \geq 2$ and $\eps > 0$, let $x, y \in \R^r$ be such that for each $i \in [r]$, we have $x_i \geq x_i - y_i \geq 1 + \eps$.
        Then, 
        \[\prod_{i = 1}^rx_i - \prod_{i = 1}^r(x_i - y_i) \geq (1+\eps)^{r-1}\sum_{j = 1}^ry_j.\]
    \end{Claim}

    \begin{proof}
        As a base case, let $r = 2$.
        We have
        \begin{align*}
            x_1x_2 - (x_1 - y_1)(x_2 - y_2) &= y_1(x_2 - y_2) + y_2x_1 \\
            &\geq (1+\eps)(y_1 + y_2),
        \end{align*}
        as desired.
        Assume the claim holds for $r < n$ and consider $r = n$.
        We have
        \begin{align*}
            \prod_{i = 1}^nx_i - \prod_{i = 1}^n(x_i - y_i) &= (x_1 - y_1)\prod_{i = 2}^nx_i - (x_1 - y_1)\prod_{i = 2}^n(x_i - y_i) + y_1\prod_{i = 2}^nx_i \\
            &= (x_1 - y_1)\left(\prod_{i = 2}^nx_i - \prod_{i = 2}^n(x_i - y_i)\right) + y_1\prod_{i = 2}^nx_i \\
            &\geq (1+\eps)\,(1+\eps)^{n-2}\sum_{i = 2}^ny_i + (1+\eps)^{n-1}\,y_1 \\
            &= (1+\eps)^{n-1}\sum_{i = 1}^ny_i,
        \end{align*}
        as desired.
    \end{proof}

    The goal is to apply the above claim with $x_i = \tilde a_{i,k}$ and $y_i = \tilde b_{i, k}$.
    Note that $\tilde b_{i, k} \leq \dfrac{b_k}{r\min_i\gamma_i} \leq \eps/(2r)$ and $\tilde a_{i, k} \geq 1+\varepsilon$. So we may apply Claim~\ref{claim: induction stuff} with $\eps/2$ to get
    \begin{align*}
        |E| &\geq \frac{\Phi\,r\gamma_r\,n^{r-1}\log d}{d(r-1)}\left(\sum_{i = 1}^r\tilde a_{i,k} - r + 1 + \sum_{k = 2}^K(1+\eps/2)^{r-1}\sum_{i = 1}^r\tilde b_{i,k}\right) \\
        &\geq \frac{\Phi\,n^{r-1}\log d}{d(r-1)}\left(\sum_{i = 1}^ra_{i,k} - r\gamma_r(r - 1) + \sum_{k = 2}^K(1+\eps/2)^{r-1}\sum_{i = 1}^rb_{i,k}\right)
    \end{align*}
    where we use the fact that $\gamma_r \geq \gamma_i$ for each $i$ by definition of $i^\star$.

    Putting everything together, we have
    \begin{align*}
        \E[N] &\leq n^{2rK}(T + 1)^k\sup_{\set{a_{i, k}}, \set{b_{i, k}}}\exp\left(\frac{\Phi\,\log d}{r - 1}\left(\sum_{i = 1}^ra_{i, 1} + \sum_{k = 2}^Kb_k + o_{d}(1)\right)\right) \\
        &\qquad \exp\left(-\frac{\Phi\log d}{r-1}\left(\sum_{i = 1}^ra_{i,k} - r\gamma_r(r - 1) + \sum_{k = 2}^K(1+\eps/2)^{r-1}b_{k}\right)\right) \\
        &\leq n^{2rK}(T + 1)^k\sup_{\set{a_{i, k}}, \set{b_{i, k}}} \exp\left(\frac{\Phi\log d}{r-1}\left(r\gamma_r(r - 1) - \frac{r\eps}{4}\sum_{k = 2}^Kb_k + o_{d}(1)\right)\right).
    \end{align*}
    As $b_k \geq \tilde \eps/4$, for $K$ sufficiently large in terms of $r$ and $\tilde{\eps}$, we have
    \[\E[N] \leq n^{2rK}(T + 1)^k\exp\left(-\frac{\Phi\,\log d}{r - 1}\right) = \exp\left(-\Omega(n)\right),\]
    as desired.
\end{proof}

Next, we show that no $\boldgamma$-balanced independent set in $\Hrrnp$ has a large intersection with some fixed set of vertices in all partitions.

\begin{lemma}\label{lemma: no large intersection balanced}
    Fix $\eps, s > 0$ and $\boldgamma = (\gamma_1, \ldots, \gamma_r)$ such that $\gamma_i \in (0, 1)$ and $\sum_{i = 1}^r\gamma_i = 1$. 
    Additionally, let $i^\star = \arg\max_i \gamma_i$.
    Let $d = d(\eps, s, \boldgamma) > 0$ be sufficiently large, and for each $i \in [r]$ fix any $S_i\subseteq V_i$ such that 
    \[|S_i| \,\leq\, s\,\gamma_i\,n\left(\frac{\gamma_{i^\star}\log d}{d\,(r-1)\prod_{i = 1}^r\gamma_i}\right)^{1/(r-1)}.\]
    Then for $p = d/n^{r-1}$ there is no $\boldgamma$-balanced independent set $S'$ in $\Hrrnp$ satisfying
    \[|S'\cap S_i| \,\geq\, \eps\,\gamma_i\,n\left(\frac{\gamma_{i^\star}\log d}{d\,(r-1)\prod_{i = 1}^r\gamma_i}\right)^{1/(r-1)}, \quad \text{for each } 1\leq i \leq r,\]
    with probability $1 - \exp\left(-\Omega(n)\right)$.
\end{lemma}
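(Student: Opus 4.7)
The plan is to apply a first moment method, analogous to Lemma~\ref{Lemma:no_large_inter_fix} but now simultaneously across all $r$ partitions. Without loss of generality take $i^\star = r$. Set $\Phi := n\left(\frac{\gamma_{i^\star}\log d}{d\,r^{r-1}\,(r-1)\prod_{i=1}^r\gamma_i}\right)^{1/(r-1)}$ and $t_i := \lceil \varepsilon \gamma_i r \Phi \rceil$ for each $i \in [r]$, and let $N$ be the number of tuples $(T_1,\ldots,T_r)$ with $T_i \subseteq S_i$, $|T_i| = t_i$, such that $T_1 \cup \cdots \cup T_r$ is an independent set in $H$. The existence of the forbidden $S'$ implies $N \geq 1$ (by taking $T_i$ to be any $t_i$-subset of $S' \cap S_i$), so by Markov's inequality it suffices to show $\E[N] = \exp(-\Omega(n))$.

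First, the edges of $K_{r\times n}$ lying inside $T_1 \times \cdots \times T_r$ form a set of $\prod_i t_i$ coordinates of $A$, each independently present with probability $d/n^{r-1}$. Hence by a union bound and independence,
\[
\E[N] \,\leq\, \prod_{i=1}^r \binom{|S_i|}{t_i} \left(1 - \frac{d}{n^{r-1}}\right)^{\prod_{i=1}^r t_i} \,\leq\, \exp\!\left(\sum_{i=1}^r t_i \log\frac{es}{\varepsilon} \,-\, \frac{d}{n^{r-1}}\prod_{i=1}^r t_i\right),
\]
using $|S_i|/t_i \leq s/\varepsilon$.

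Second, carry out the key asymptotic calculation. Since $\sum_i \gamma_i = 1$, the entropy term is $\sum_i t_i \log(es/\varepsilon) = \varepsilon r \Phi \log(es/\varepsilon)\,(1+o(1))$, growing only linearly in $\Phi$. For the dominant term, $\prod_i t_i = \varepsilon^r r^r \prod_i \gamma_i \cdot \Phi^r (1+o(1))$, and using the identity $(\Phi/n)^{r-1} = \frac{\gamma_{i^\star}\log d}{d\,r^{r-1}(r-1)\prod_i\gamma_i}$ one obtains
\[
\frac{d}{n^{r-1}}\prod_{i=1}^r t_i \,=\, \frac{\varepsilon^r\, r\, \gamma_{i^\star}}{r-1}\, \Phi \log d \,\bigl(1+o_d(1)\bigr).
\]
Combining, for $d$ large enough (so that $\log d$ swamps $\log(es/\varepsilon)$),
\[
\E[N] \,\leq\, \exp\!\left(\Phi \log d \left(-\frac{\varepsilon^r r \gamma_{i^\star}}{r-1} + o_d(1) + o_n(1)\right)\right) \,=\, \exp(-\Omega(n)),
\]
where the implied constant depends on $d,\varepsilon,r,\boldgamma,s$.

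The only subtle point in this plan is the bookkeeping in the asymptotic step: one must verify that calibrating $\Phi$ with $\gamma_{i^\star} = \max_i \gamma_i$ at the top (matching the definition in the achievability bound) produces a product $\prod_i t_i$ that scales so the $\log d$ factor in $d/n^{r-1} \cdot \prod_i t_i$ dominates the entropy contribution $\sum_i t_i$. This plays the role of the $b^r/r$ estimate in Lemma~\ref{Lemma:no_large_inter_fix}, but weighted by the partition proportions $\gamma_i$. Once the exponents line up, the rest is standard first moment routine.
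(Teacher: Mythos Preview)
Your proposal is correct and follows essentially the same argument as the paper: both set up the same $\Phi$, count subsets $T_i\subseteq S_i$ of size $\eps\gamma_i r\Phi$ whose union is independent, bound the binomial coefficients by $(es/\eps)^{t_i}$, and show the non-edge probability term $\exp\bigl(-\tfrac{\eps^r r\gamma_{i^\star}}{r-1}\Phi\log d\bigr)$ dominates for large $d$. The only cosmetic difference is that the paper reuses the $\Phi$ from the preceding proposition (with $\prod_{i<r}\gamma_i$ in the denominator), which coincides with yours once one uses $\gamma_{i^\star}=\gamma_r$.
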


\begin{proof}
    Without loss of generality, assume $i^\star = r$.
    Let $\Phi$ be as defined in the proof of Proposition~\ref{proposition: interpolating ogp balanced}.
    The probability that such a set $S'$ exists is at most
    \begin{align*}
        \left(\prod_{i = 1}^r\binom{|S_i|}{\eps\gamma_ir\Phi}\right)\left(1 - \frac{d}{n^{r-1}}\right)^{\eps^r\,r^r\,\Phi^r\prod_{i = 1}^r\gamma_i} 
        &\leq \left(\prod_{i = 1}^r\binom{s\,\gamma_i\,r\Phi}{\eps\gamma_ir\Phi}\right)\exp\left(- \frac{d}{n^{r-1}}\,\eps^r\,r^r\,\Phi^r\prod_{i = 1}^r\gamma_i\right) \\
        &\leq \left(\prod_{i = 1}^r\left(\frac{es}{\eps}\right)^{\eps\gamma_ir\Phi}\right)\exp\left(-\frac{\eps^r\,\gamma_rr\,\Phi\log d}{r-1}\right) \\
        &= \exp\left(-r\Phi\left(\frac{\eps^r\,\gamma_r\,\log d}{r-1} - \eps\,\log \left(\frac{es}{\eps}\right)\right)\right) \\
        &= \exp\left(-\Omega(n)\right),
    \end{align*}
    as claimed.
\end{proof}

\subsection{Proof of the Intractability Result}\label{subsection: low-deg balanced}

We will show that no $(D, \Gamma, c)$-stable algorithm can find a large $\boldgamma$-balanced independent set in $\Hrrnp$ in the sense of Definition~\ref{Def:f_Optimize_ind_bal}.
As a result of Lemma~\ref{lemma: stable}, this would complete the proof of the impossibility result.

\begin{Proposition}\label{proposition: stability balanced}
    For any $r\geq 2$, $\eps > 0$, and $\boldgamma = (\gamma_1, \ldots, \gamma_r)$ such that $\gamma_i \in (0, 1) \cap \mathbb{Q}$, $\sum_{i = 1}^r\gamma_i = 1$, and $i^\star = \arg \max_i \gamma_i$, there exist $K,\, d_0 > 0$ such that for any $d \geq d_0$ there exists $n_0, \eta, C_1, C_2 > 0$ such that for any $n \geq n_0$, $\xi \geq 1$, $1 \leq D \leq \dfrac{C_1n}{\xi\log n}$, and $\delta \leq \exp\left(-C_2\xi\,D\log n\right)$, if 
    \[k_i \,\geq\, (1+\eps)\,\gamma_i\,n\left(\frac{\gamma_{i^\star}\log d}{d\,(r-1)\prod_{i = 1}^r\gamma_i}\right)^{1/(r-1)}, \quad \text{for each } 1\leq i \leq r,\]
    there is no $\left(D, K-1, \frac{\tilde \eps}{32r\xi(1 + \eps)}\right)$-stable function that $(k_1, \ldots, k_r, \delta, \xi, \eta)$-optimizes the $\boldgamma$-independent set problem in $\Hrrnp$ for $p = d/n^{r-1}$.
\end{Proposition}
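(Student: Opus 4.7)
The plan is to mirror the strategy of Proposition~\ref{proposition: stability}, substituting the balanced analogues throughout: Proposition~\ref{proposition: interpolating ogp balanced} in place of Proposition~\ref{Prop:OGP}, and Lemma~\ref{lemma: no large intersection balanced} in place of Lemma~\ref{Lemma:no_large_inter_fix}. First I would fix the constant $K$ from Proposition~\ref{proposition: interpolating ogp balanced}, set $T=(K-1)m$ with $m=n^{r}$, choose the stability parameter $c=\tilde\eps/(32\,r\,\xi(1+\eps))$, and pick the rounding tolerance $\eta$ small enough that $2\eta n + 4c\xi(1+\eps)\Phi \leq \tilde\eps \Phi/4$, where $\Phi = n\bigl(\gamma_{i^\star}\log d/(d\,r^{r-1}(r-1)\prod_i\gamma_i)\bigr)^{1/(r-1)}$. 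Take $d_0$ large enough to apply Theorem~\ref{theorem: stat thresh balanced}, Proposition~\ref{proposition: interpolating ogp balanced}, and Lemma~\ref{lemma: no large intersection balanced}.

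Assume for contradiction that some $(D,K-1,c)$-stable degree-$D$ polynomial $f$ $(k_1,\ldots,k_r,\delta,\xi,\eta)$-optimizes the problem. Sample the interpolation path $A^{(0)},\ldots,A^{(T)}$ and set $V_t = V_f^\eta(A^{(t)})$. Define a greedy sequence by $S_1 = V_0$ and, for $k \geq 2$, let $S_k$ be the first $V_t$ with $|V_t \setminus \bigcup_{j < k}S_j|\geq \tilde\eps \Phi/4$; the process fails if no such $t$ exists. Consider the events (F1) ``$|V_t \cap V_i| \geq k_i$ for every $i\in[r]$ and $t\in[T]$, and the greedy process succeeds'', (F2) ``no edge of the interpolation path is $c$-bad for $f$'', and (F3) ``the forbidden structure of Proposition~\ref{proposition: interpolating ogp balanced} does not occur''. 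Repeating the two-term decomposition from the proof of Claim~\ref{Lemma:1,2->not 3} (one term from the $\eta$-tolerance on $(I\setminus\tilde I)\cup J$, the other from $c$-badness) shows that (F1) and (F2) together force $|V_t \triangle V_{t-1}|\leq \tilde\eps\Phi/4$, so each increment $|S_k\setminus \bigcup_{j<k}S_j|$ lies in $[\tilde\eps \Phi/4,\, \tilde\eps \Phi/2]$. Combined with the partition-wise lower bounds from (F1), this produces a sequence $(S_1,\ldots,S_K)$ satisfying (B1)--(B3), contradicting (F3).

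It then remains to bound the probabilities. Event (F2) holds with probability at least $p^{4(K-1)D/c}$ by stability, and (F3) holds with probability $1-\exp(-\Omega(n))$ by Proposition~\ref{proposition: interpolating ogp balanced}. For (F1), a union bound over $t\in[T]$ combined with the $(k_1,\ldots,k_r,\delta,\xi,\eta)$-optimization hypothesis gives $|V_t \cap V_i|\geq k_i$ for all $t,i$ with probability at least $1-\delta(T+1)$. Greedy success follows inductively: having constructed $S_1=V_0,\ldots,S_{K'}$ by step $T'\leq T-m$, the resampling at step $T'+m$ makes $A^{(T'+m)}$ independent of the earlier path, so Theorem~\ref{theorem: stat thresh balanced} gives $|\bigcup_{k\leq K'}S_k\cap V_i|\leq O(K)\gamma_i r\Phi$ and Lemma~\ref{lemma: no large intersection balanced} (applied with $S_i = \bigcup_{k\leq K'}S_k\cap V_i$ and $s = O(K)$) yields $|V_{T'+m}\cap \bigcup_{k\leq K'}S_k\cap V_i|< \tilde\eps \gamma_i r\Phi/(8r)$ in every partition, while $|V_{T'+m}\cap V_i|\geq k_i$. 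This forces $|V_{T'+m}\setminus \bigcup_{k\leq K'}S_k|\geq \tilde\eps \Phi/4$, so the greedy step succeeds. As in Proposition~\ref{proposition: stability}, choosing $C_1, C_2$ so that $p^{4(K-1)D/c}$ dominates both $2\exp(-\Omega(n))$ and $2\delta(T+1)$ yields the stated ranges for $D$ and $\delta$ and closes the contradiction.

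The main obstacle---and the reason the preceding OGP and intersection lemmas were stated partition-wise---is propagating the balanced constraint through the greedy construction: one must simultaneously keep $|V_t\cap V_i|\geq k_i$ alive in every partition while the single-number increment $|V_t\setminus \bigcup_{j<k}S_j|$ lands in the prescribed interval of length $\tilde\eps\Phi/4$. This forces the stability constant $c$ and the tolerance $\eta$ to carry an extra factor of $1/r$ relative to the graph case, which is why $\tilde\eps=\eps\min_i\gamma_i$ appears rather than $\eps$, and it is the reason the simple reduction to $\boldgamma=(1/r,\ldots,1/r)$ available when $r=2$ in \cite{perkins2024hardness} breaks down here.
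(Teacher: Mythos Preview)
Your proposal follows essentially the same approach as the paper's proof, with the same choice of events, the same Claim-style argument that (F1)$+$(F2) produce the forbidden structure, and the same inductive scheme for greedy success. One slip: you state that Lemma~\ref{lemma: no large intersection balanced} yields $|V_{T'+m}\cap\bigcup_{k\le K'}S_k\cap V_i|$ small \emph{in every partition}. The lemma does not say this; it only rules out independent sets whose intersection with $S$ is large in \emph{all} partitions simultaneously, so the conclusion is that at least \emph{one} index $j\in[r]$ has small intersection. Fortunately one partition is all you need: the paper fixes such a $j$ and bounds $|V_{T'+m}\setminus S|\ge |(V_{T'+m}\setminus S)\cap V_j|=|V_{T'+m}\cap V_j|-|V_{T'+m}\cap S\cap V_j|\ge \tilde\eps\Phi/4$. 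Also watch the bookkeeping on the symmetric-difference bound: with $rn$ total vertices the error term is $2\eta\,rn$, and the normalization gives $\mathbb{E}\|f\|^2\le \xi\sum_ik_i\le r\xi(1+\eps)\Phi$, so the extra factor of $r$ (and the correspondingly smaller $c$ and $\eta$) must be tracked consistently.
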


\begin{proof}
    Fix $r$, $\eps$, and $\boldgamma$ as described in the statement of the proposition.
    Additionally, assume without loss of generality that $i^\star = r$.
    We define the following parameters:
    \[\Phi = \frac{1}{r}\left(\frac{\gamma_r\,\log d}{d\,(r-1)\prod_{i = 1}^r\gamma_i}\right)^{1/(r-1)}, \qquad \tilde \eps = \eps\min_{i \in [r]}\gamma_i.\]
    Let $K$ be sufficiently large in terms of $r, \eps, \boldgamma$ as described in Proposition~\ref{proposition: interpolating ogp balanced} and let 
    \[T = (K-1)m, \quad c = \dfrac{\tilde \eps}{32r\xi(1 + \eps)},\quad \text{and} \quad \eta = \dfrac{\tilde\eps\,\Phi}{20r}.\]
    Let $d_0$ and $n_0$ be large enough such that Theorem~\ref{theorem: stat thresh balanced}, Proposition~\ref{proposition: interpolating ogp balanced}, and Lemma~\ref{lemma: no large intersection balanced} can be applied.
    
    Suppose for contradiction there exists a $(D, K-1, c)$-stable algorithm $f$.
    Sample the interpolation path $A^{(0)}, \ldots, A^{(T)}$ as in Definition~\ref{definition: interpolation path} and let $I_t = V_f^\eta(A^{(t)})$ be the resulting independent sets for $0 \leq t \leq T$.
    We construct a sequence of sets $S_1, \ldots, S_K$ as follows: let $S_1 = I_0$, and for $k \geq 2$, let $S_k$ be the first $I_t$ such that $|I_t \setminus \bigcup_{j < k}S_j| \geq \tilde \eps\,n\,\Phi / 4$ (if no such $t$ exists, we say the process fails).
    We define the following events:
    \begin{enumerate}[label=(EB\arabic*), leftmargin=0.5in]
        \item\label{event: large sets and process termination} For each $i \in [r]$ and $t \in [T]$, we have $|I_t \cap V_i| \geq (1+\eps)\gamma_i\,r\,n\,\Phi$, and the process of constructing $S_1, \ldots, S_K$ succeeds.
        \item\label{event: no c bad balanced} No edge of the interpolation path is $c$-bad for $f$.
        \item\label{event: ogp balanced} The forbidden structure of Proposition~\ref{proposition: interpolating ogp balanced} does not exist.
    \end{enumerate}
    
    The following claim shows that if the events \ref{event: large sets and process termination} and \ref{event: no c bad balanced} occur, then \ref{event: ogp balanced} does not.

    \begin{Claim}\label{claim: ogp balanced}
        If the events \ref{event: large sets and process termination} and \ref{event: no c bad balanced} occur, then the sequence $S_1, \ldots, S_K$ constructed satisfy the properties forbidden by Proposition~\ref{proposition: interpolating ogp balanced}.
    \end{Claim}
    \begin{proof}
        Let us show that $|I_t\,\triangle\, I_{t-1}| \leq \tilde\eps\,n\,\Phi/4$.
        As a result of \ref{event: large sets and process termination}, we know that the failure event setting $V_f^\eta(A^{(t)}) = \varnothing$ does not occur.
        Therefore, if $i \in I_t\,\triangle\, I_{t-1}$, we must have one of the following:
        \begin{itemize}
            \item $i \in (I \setminus \tilde I) \cup J$ for either $V_f^\eta(A^{(t)})$ or $V_f^\eta(A^{(t-1)})$, or
            \item one of $f_i(A^{(t)})$ or $f_i(A^{(t-1)})$ is $\geq 1$ and the other is $\leq 1/2$.
        \end{itemize}
        By Definition~\ref{Def:V_eta_f}, the first case above occurs for at most $2\eta\,rn$ coordinates.
        It follows that for at least $|I_t\,\triangle\, I_{t-1}| - 2\eta\,rn$ coordinates $i$, we have $|f_i(A^{(t)}) - f_i(A^{(t-1)})| \geq 1/2$.
        Therefore, as a result of event \ref{event: no c bad balanced}, we may conclude the following:
        \[\frac{1}{4}(|I_t\,\triangle\, I_{t-1}| - 2\eta\,rn) \,\leq\, \|f(A^{(t)}) - f(A^{(t-1)})\|_2^2 \,\leq\, c\,\E[\|f(A)\|^2].\]
        As $f$ is assumed to $(k_1, \ldots, k_r, \delta, \xi, \eta)$-optimize the $\boldgamma$-independent set problem in $\Hrrnp$, we have
        \[\E[\|f(A)\|^2] \,\leq\, r\xi(1+\eps)n\Phi.\]
        Putting the above inequalities together, we have
        \[|I_t\,\triangle\, I_{t-1}| \,\leq\, 2\eta\,rn + 4\,c\,r\,\xi(1+\eps)n\Phi \,\leq\, \frac{\tilde \eps \,n\,\Phi}{4},\]
        as desired.
        Let $2 \leq k \leq K$ and let $t \in [T]$ be such that $S_k = I_t$.
        By definition of $S_k$, we have
        \[|I_t\setminus\bigcup_{j <k}S_j| \geq \frac{\tilde \eps \,n\,\Phi}{4}, \quad \text{and} \quad |I_{t-1}\setminus\bigcup_{j <k}S_j| \leq \frac{\tilde \eps \,n\,\Phi}{4}.\]
        From here, we may conclude that
        \begin{align*}
            |S_k\setminus\bigcup_{j <k}S_j| = |I_t\setminus\bigcup_{j <k}S_j| &= |(I_t\cap I_{t-1})\setminus\bigcup_{j <k}S_j| + |(I_t\setminus I_{t-1})\setminus\bigcup_{j <k}S_j| \\
            &\leq |I_{t-1}\setminus\bigcup_{j <k}S_j| +|I_t\,\triangle\, I_{t-1}| \\
            &\leq \frac{\tilde \eps \,n\,\Phi}{2}.
        \end{align*}
        It follows that the sequence of sets $S_1, \ldots, S_K$ satisfy the conditions in Proposition~\ref{proposition: interpolating ogp balanced}, i.e., such a forbidden structure exists as claimed.
    \end{proof}
    
    We will now show that with positive probability all three events \ref{event: large sets and process termination}, \ref{event: no c bad balanced}, and \ref{event: ogp balanced} occur.
    As this contradicts Claim~\ref{claim: ogp balanced}, this would imply no such $f$ can exist, completing the proof.

    By Theorem~\ref{theorem: stat thresh balanced} and Definition~\ref{Def:f_Optimize_ind_bal}, we may conclude that 
    \[(1+\eps)\gamma_i\,r\,n\,\Phi \,\leq\, |I_t \cap V_i| \,\leq\, (1+\eps)\,r\gamma_r^{-1/(r-1)}\,n\,\Phi, \qquad \text{for all } 0 \leq t \leq T,\,\, 1 \leq i \leq r,\]
    with probability at least
    \[1 - (T+1)(\delta + \exp\left(-\Omega(n)\right)).\]
    On the event that this occurs, we will show that the procedure to define $S_k$ succeeds with high probability.
    Suppose for some $0 \leq T' \leq T - m$, we have sampled $A^{(0)}, \ldots, A^{(T')}$ and successfully selected $S_1 = I_0, \ldots, S_{K'} = I_{t_{K'}}$ for some $1 \leq K' < K$.
    Let $S = \cup_{k \leq K'}S_k$ and note that $A^{(T' + m)}$ is independent of $A^{(T')}$.
    Applying Lemma~\ref{lemma: no large intersection balanced} with
    \[S_i = S\cap V_i, \quad \text{and} \quad s = \frac{(1+\eps)\,r\gamma_r^{-1/(r-1)}K'}{\min_i\gamma_i},\]
    we may conclude that
    \[\exists j \in [r], \quad |I_{T'+m} \cap S \cap V_j| \leq \eps\,r\,\gamma_j\,n\Phi,\]
    with probability at least $1 - \exp\left(-\Omega(n)\right)$.
    Fix such a $j$.
    We have:
    \begin{align*}
        |I_{T'+m} \setminus S| \geq |(I_{T'+m} \setminus S) \cap V_j| &= |I_{T'+m} \cap V_j| - |I_{T'+m} \cap S \cap V_j| \\
        &\geq \frac{\tilde \eps\,n\, \Phi}{4}.
    \end{align*}
    In particular, $S_{K' + 1} = I_t$ for some $T' < t \leq T' + m$ and thus the process succeeds by timestep $T = (K-1)m$ with high probability.
    We may conclude by a union bound over $t$ that \ref{event: large sets and process termination} occurs with probability at least $1 - (T+1)\delta - \exp\left(-\Omega(n)\right)$.

    Since $f$ is assumed to be $(D, K-1, c)$-stable, \ref{event: no c bad balanced} occurs with probability at least $p^{4(K-1)D/c}$, where $p = d/n^{r-1}$.
    By Proposition~\ref{proposition: interpolating ogp balanced}, we may conclude that at least 1 of the events \ref{event: large sets and process termination}--\ref{event: ogp balanced} occurs with probability at most
    \begin{align}\label{eq: prob bound impossibility}
        1 - \left(\frac{d}{n^{r-1}}\right)^{4(K-1)D/c} + (T+1)\delta + \exp\left(-\Omega(n)\right).
    \end{align}
    It is now enough to show that the above is $< 1$, which is implied by the following for some constant $C > 0$ sufficiently large:
    \[\left(\frac{d}{n^{r-1}}\right)^{4(K-1)D/c} \geq 2\exp\left(-Cn\right), \quad \text{and} \quad \left(\frac{d}{n^{r-1}}\right)^{4(K-1)D/c} \geq 2\delta Km.\]
    The first inequality follows by the upper bound on $D$ for $C_1 \coloneqq C_1(\eps, K, d) > 0$ sufficiently large, and the second follows by the upper bound on $\delta$ for $C_2 \coloneqq C_2(\eps, K, d) > 0$ sufficiently large.
    Plugging this in above, we have that \eqref{eq: prob bound impossibility} is at most
    \begin{align*}
        &~ 1 - \exp\left(-Cn\right) - \delta Km + (T+1)\delta + \exp\left(-\Omega(n)\right) \\
        &\leq 1 - \exp\left(-\Omega(n)\right) + \delta(T + 1 - Km) \\
        &= 1 - \exp\left(-\Omega(n)\right) - \delta(m - 1) < 1,
    \end{align*}
    for $C$ sufficiently large, completing the proof.
\end{proof}

\section*{Acknowledgments}

We thank Will Perkins and Cheng Mao for helpful comments.
We also thank the anonymous referee for carefully reading this manuscript and providing helpful suggestions.

\printbibliography

\appendix

\section{Sparse Random Hypergraphs are Locally Hypertree-Like}\label{section: tree like}

In this section, we give the proof of our claim that $\Hrnp$ is locally tree-like for $p = d/\binom{n-1}{r-1}$ (this fact was used in the proof of Lemma~\ref{Lemma:local_tree_np}).
Our proof is nearly identical to that of \cite[Theorem~3.12]{bordenave2016lecture} for graphs.
There are two key lemmas in their proof that differ in the hypergraph setting (albeit slightly).
We include the proofs of these lemmas in this section.

We begin by showing that $\Hrnp$ ``looks like'' $\PGWLT$ locally.
The essence of ``looks like'' in the context of local weak convergence is to show that the $s$-neighborhood of a random vertex in $\Hrnp$ has a distribution that converges to the $s$-level subtree of $ \PGWLT$ as $ n \to \infty$. 
More formally, we prove the following proposition:

\begin{Proposition}\label{prop:ER_Local_convergence_PW_tree}
    Fix any integers $d, r, s >0$ and $\varepsilon >0$.
    Then for sufficiently large $n$ and $p = d/{n-1 \choose r-1}$, we have
    \[
        \P_{H \sim \Hrnp} \left[\left|\{u \in V(H)\,:\, B_{H}(u,s) \ncong \PGWLT\}\right| \geq \varepsilon n\right] \leq \varepsilon.
    \]
\end{Proposition}

Consider a hypergraph $H = (V, E)$ and let $v \in V$ be smallest vertex under some arbitrary order.
The proof follows the breadth-first search (BFS) of the connected component $H(v) \subseteq H$ containing $v$. 

This exploration is iterative.
At integer step $t$, a vertex may belong to the active set $A_t$, to the unexplored set $U_t$, or to the connected set $C_t = V \setminus (A_t \cup U_t)$. 
We begin with $A_0 = \{v\}$, $C_0 = \emptyset$, and $U_0 = V \setminus \{v\}$. For integer $t \geq 0$, if $A_t \neq \emptyset$, we define $v_{t+1}$ as the vertex in $A_t$ with the minimum order. 
Let $E_{t+1} = \{e \in E \,:\, v_{t+1} \in e,\, e \setminus \{v_{t+1}\} \subseteq U_{t}\}$ be the edges incident to $v_{t+1}$ that have not yet been explored in the BFS, and let $I_t = \{v \in U_t \,:\, v \in e \text{ for some } e \in E_{t+1}\}$ be the neighbors of $v_{t+1}$ in $U_t$. We set
\[
    \begin{cases}
        A_{t+1} = (A_t \cup I_{t+1}) \setminus \{v_{t+1}\} \\
        U_{t+1} = U_t \setminus I_{t+1} \\
        C_{t+1} = C_t \cup \{v_{t+1}\}
    \end{cases}
\]
The process stops when $A_t = \emptyset$. 
It follows by construction that $|H(v)| = \inf \{ t \geq 1 : A_t = \emptyset \}$. 

For ease of notation, we set $X_{t+1} = |E_{t+1}|$ and $\tau = |H(v)|$ so that for $t < \tau$, we have
\[
    |A_t| \leq 1+\sum_{s=1}^{t}((r-1)X_s-1), \qquad |U_{t}| = n-t-|A_t|, \qquad |C_t| = t.
\]
Now, we consider the BFS with starting vertex $v$ and the graph $H \sim \Hrnp$. 
We define the filtration $\mathcal{F}_t = \sigma((A_0, U_0, C_0), \dots, (A_t, U_t, C_t))$. 
The hitting time $\tau \coloneqq \inf \{ t \geq 1 : A_t = \emptyset \}$ is a stopping time for this filtration. 
Notice that for any integer $t \geq 0$, given $\mathcal{F}_t$, if $\{ t < \tau \} \in \mathcal{F}_t$ then $X_{t+1}\sim \bin\left({|U_t| \choose r-1}, d/{n-1 \choose r-1}\right)$.

With this set-up in hand, we first show the convergence of the exploration.

\begin{lemma}\label{lemma:convergence of the exploration}
    On an enlarged probability space, there exists a sequence $(X_t')_{t \geq 1}$ of i.i.d.\ $\poiss(d)$ variables such that
    \[
        \mathbb{P} \left[\left(X_1, \dots, X_{t \wedge \tau}\right) \neq \left(X_1', \dots, X_{t \wedge \tau}'\right)\right] \leq \frac{rd(1+rd)t^2+td}{n-1}.
    \]
\end{lemma}

\begin{proof}

    The stopping property implies that $\{ t < \tau \}$ is $\mathcal{F}_t$-measurable. 
    Note that if $\{ t < \tau \}$ holds, then
    \[\mathbb{E}(X_{t+1} | \mathcal{F}_t) = \frac{d {|U_{t}| \choose r-1}}{{n-1 \choose r-1}} \leq d.\]
    In particular, we have 
    \begin{align} \label{ineq:Expectation_A_t}
        \E\left[ (|A_t|-1)\mathbbm{1}_{t<\tau} \right]
        \leq 
        \sum_{s=0}^{t-1} \E[((r-1)X_{s+1} - 1)\mathbbm{1}_{s<\tau}]  \leq rdt.
    \end{align}
    Now, on an enlarged probability space, let $\xi_{t+1}$ be, given $\mathcal{F}_t$, a binomial variable $\bin\left({n-1 \choose r-1} - {|U_t| \choose r-1}, d/{n-1 \choose r-1}\right)$ independent of $X_{t+1}$. Then $Y_{t+1} = X_{t+1} + \xi_{t+1}$ is a binomial variable $\bin\left({n-1 \choose r-1}, d/{n-1 \choose r-1}\right)$ and $(Y_t)_{t \geq 1}$ is an i.i.d. sequence. 
    By a union bound, we have
    \[
        \mathbb{P} \left[\left(X_1, \dots, X_{t \wedge \tau}\right) \neq  \left(Y_1, \dots, Y_{t \wedge \tau}\right) \right] \leq \sum_{s=1}^t \mathbb{E} \left[ \mathbbm{1}_{s < \tau} \mathbb{P}\left[X_s \neq Y_s | \mathcal{F}_s\right] \right] 
        \leq \sum_{s=1}^t \mathbb{E} \left[ \mathbbm{1}_{s < \tau} \mathbb{P}\left[\xi_s \neq 0 | \mathcal{F}_s\right] \right].
    \]
    It is straightforward to compute the probability that $\xi_s \neq 0$:
    \begin{align*}
        \mathbb{P}\left[\xi_s \neq 0 | \mathcal{F}_s\right] &= 1 - \left(1 - \frac{d}{{n-1 \choose r-1}}\right)^{{n-1 \choose r-1} - {|U_s| \choose r-1}} 
        \\
        &\leq 
        \frac{d}{{n-1 \choose r-1}}\left({n-1 \choose r-1} - {|U_s| \choose r-1}\right) \\
        &=
        \frac{d}{{n-1 \choose r-1}}\left({n-1 \choose r-1} - {n-t-|A_s| \choose r-1}\right) ~~~~~~~~~~(\text{since }|U_s| = n-s-|A_s|) \\
        &\leq
        \frac{d}{{n-1 \choose r-1}} \cdot (s+|A_s|-1){n-2 \choose r-2} ~~~~~~~~~~\left(\text{using } {n \choose r}-{n-k \choose r} \leq k{n-1 \choose r-1}\right) \\
        &\leq \frac{rd(t+|A_t|-1)}{n-1},
    \end{align*}
    where we use the fact that $s \leq t$ and $A_s \subseteq A_t$.
    By \eqref{ineq:Expectation_A_t}, we have
    \begin{align}
        \mathbb{P} \left[ \left(X_1, \dots, X_{t \wedge \tau}\right) \neq  \left(Y_1, \dots, Y_{t \wedge \tau}\right) \right] 
        \leq
        \sum_{s=1}^{t} \frac{rd(t+rdt)}{n-1}
        \leq \frac{rd(1+rd)t^2}{n-1}. \label{eq: x and y close}
    \end{align}

    We use the following simple bound on the total variation distance between binomial and Poisson random variables:
    \[
        d_{TV}\left(\bin\left({n-1 \choose r-1}, d/{n-1 \choose r-1}\right), \poiss(d)\right)\leq  \frac{d}{{n-1 \choose r-1}},
    \]
    which implies
    \begin{align}\label{eq: tvd}
        d_{TV}(\left(Y_1, \dots, Y_{t}\right),\poiss(d)^{\otimes t}) \leq \frac{td}{{n-1 \choose r-1}} \leq \frac{td}{n-1},
    \end{align}
    for $r \geq 2$.
    Combining \eqref{eq: x and y close} and \eqref{eq: tvd}, the claim follows by the maximal coupling inequality. 
\end{proof}

\begin{lemma}\label{lemma:ER_local_tree_like_whp}
    For integer $t \geq 0$, let $J_t = \{(u_1, \cdots, u_{r-1}) \subset A_t: (v_{t+1}, u_1, \cdots, u_{r-1}) \in E\}$. We have that
    \[
        \P[\exists 1\leq s \leq t\wedge\tau: |J_s|\neq 0] \leq \frac{rd^2t^2}{n-1}
    \]
\end{lemma}
\begin{proof}
    
    Given $\mathcal{F}_t$, if $t < \tau$, $|J_t|$ is a binomial random variable $\bin\left({|A_t| - 1 \choose r-1}, d/{n-1 \choose r-1}\right)$. The union bound yields
    \[
        \mathbb{P} \left[ \exists 1 \leq s \leq t \wedge \tau : |J_s| \neq 0 \right] \leq \sum_{s=1}^t \mathbb{E} \left[ \mathbbm{1}_{s < \tau} \mathbb{P}[|J_s| \neq 0 | \mathcal{F}_s] \right].
    \]
    Since
    \[
        \mathbb{P}[|J_s| \neq 0 | \mathcal{F}_s] = 1 - \left(1 - \frac{d}{{n-1 \choose r-1}}\right)^{{|A_s| - 1 \choose r-1}} \leq \frac{d (|A_s| - 1)}{n-1},
    \]
    we have
    \[
        \mathbb{P} \left[ \exists 1 \leq s \leq t \wedge \tau : |J_s| \neq 0 \right] \leq 
        \frac{d}{n-1} \sum_{s=1}^t \mathbb{E}\left[ (|A_s| - 1) \mathbbm{1}_{s < \tau} \right].
    \]
   Using the bound (\ref{ineq:Expectation_A_t}) again gives
    \[
        \mathbb{P} \left[ \exists 1 \leq s \leq t \wedge \tau : |J_s| \neq 0 \right] \leq \frac{d}{n-1} \cdot rd t^2 = \frac{rd^2 t^2}{n-1},
    \]
    as desired.
\end{proof}

With these lemmas in hand, the proof of Proposition~\ref{prop:ER_Local_convergence_PW_tree} is identical to the graph case, \textit{mutatis mutandis}.
One can, for example, follow the strategy of Theorem~3.12 in the lecture notes of Bordenave \cite{bordenave2016lecture} by replacing Lemmas~3.13 and 3.14 with Lemmas~\ref{lemma:convergence of the exploration} and \ref{lemma:ER_local_tree_like_whp}, respectively.

\end{document}